\newcommand{\rw}{\rightarrow}
\newcommand{\ds}{\delta^{\ast}}
\newcommand{\sa}{\Sigma^{\ast}}
\newcommand{\allseries}{S\langle\!\langle\Sigma^{\ast}\rangle\!\rangle}
\theoremstyle{definition}
\newtheorem{definition}{Definition}
\theoremstyle{plain}
\newtheorem{proposition}{Proposition}
\newtheorem{theorem}{Theorem}
\newtheorem*{thm-other}{Theorem}
\newtheorem{example}{Example}
\newtheorem{corollary}{Corollary}
\newtheorem{lemma}{Lemma}
\theoremstyle{remark}
\newtheorem{remark}{Remark}
\begin{document}

%\journal{Theoretical Computer Science}
\journal{arXiv}

\begin{frontmatter}

%\title{\Large\bf On the Quotients of Formal Power Series
%\thanks{This work is supported by National Science Foundation of China (Grant No: 60873119)  and the Higher School Doctoral Subject Foundation of Ministry of Education of China (Grant No:200807180005).}}
%\author{{Yongming Li, Qian Wang, Sanjiang Li}\\
%  {\small College of Computer Science,}
%  {\small Shaanxi Normal University, Xi'an, 710062, China}\\
%  {\small Email:liyongm@snnu.edu.cn, Tel.: +862985310166, Fax:  +862985310161}}

\title{On Quotients of Formal Power Series
\thanks{This work is supported by National Science Foundation of China (Grant No: 60873119) and the Higher School Doctoral Subject Foundation of Ministry of Education of China (Grant No:200807180005).}}

\author[1]{Yongming Li\corref{cor1}}
\ead{liyongm@snnu.edu.cn}

\author [1]{Qian Wang}

\address[1]{College of Computer Science, Shaanxi Normal University, Xi'an, 710062, China}

\author[2]{Sanjiang Li}
\ead{sanjiang.li@uts.edu.au}

\address[2]{Centre for Quantum Computation and Intelligent Systems,
       Faculty of Engineering and Information Technology, University of Technology
       Sydney, Australia}

\cortext[cor1]{Corresponding Author}
\begin{abstract}
Quotient is a basic operation of formal languages, which plays a key role in the construction of minimal deterministic finite automata (DFA) and the universal automata. In this paper, we extend this operation to formal power series and systemically investigate its implications in the study of weighted automata. In particular, we define two quotient operations for formal power series that coincide when calculated by a word. We term the first operation as (left or right) \emph{quotient}, and the second as (left or right) \emph{residual}. To support the definitions of quotients and residuals, the underlying semiring is restricted to complete semirings or complete c-semirings. Algebraical properties that are similar to the classical case are obtained in the formal power series case. Moreover, we show closure properties, under quotients and residuals, of regular series and weighted context-free series are similar as in formal languages. Using these operations, we define for each formal power series $A$ two weighted automata ${\cal M}_A$ and ${\cal U}_A$. Both weighted automata accepts $A$, and ${\cal M}_A$ is the minimal deterministic weighted automaton of $A$. The universality of  ${\cal U}_A$ is justified and, in particular, we show that  ${\cal M}_A$  is a sub-automaton of ${\cal U}_A$.  Last but not least, an effective method to construct the universal automaton is also presented in this paper.
\end{abstract}

\begin{keyword}
Formal power series; weighted automaton; complete c-semiring; quotient; residual; universal automaton; factorization
\end{keyword}

\end{frontmatter}

\section{Introduction}

In formal language theory, quotient is a basic and very important operation and plays a fundamental role in the construction of minimal deterministic finite automata (DFA). Given a formal language $L$ over an alphabet $\Sigma$, the left quotient $u^{-1}L$ of $L$ by a word $u$ is defined as the language $\{v\in \Sigma^\ast | uv\in L\}$, where $\Sigma^\ast$ is the free monoid of words over $\Sigma$. The famous Myhill-Nerode Theorem then states that $L$ is a  regular language if and only if the number of different left quotients of $L$ (also called the quotient complexity \cite{Brz09} of $L$) is finite. Moreover, a minimal DFA which recognizes $L$ can be constructed in a natural way by using left quotients as states. In particular, this means that the quotient complexity of $L$ is equal to the size of the minimal DFA which recognizes $L$.

The notion of left quotient of a formal language by a word  can be extended to quotients by a formal language in two ways. Given two formal languages $L,X$, the left quotient of $L$ by $X$, denoted by $X^{-1}L$, is defined as the union of $u^{-1}L$ for all words $u$ in $X$.  Another extension is less well-known, if not undefined at all. We define the \emph{left residual} of $L$ by $X$, denoted by $X\backslash L$, as the intersection of $u^{-1}L$ of all words in $X$.  Similarly we have $LX^{-1}$, the right quotient of $L$ by $X$, and $L/X$, the right residual of $L$ by $X$. Regarding each left residual of $L$ as a state, there is a natural way to define an automaton, which is called the \emph{universal automaton} \cite{conway71,sakarovitch09} of $L$. The universal automaton of a formal language $L$ contains many interesting information (e.g. factoraization) of $L$ \cite{lombardy08} and plays a very important role in constructing the minimal nondeterministic finite automaton (NFA) of $L$ \cite{ADN92,polak05}.

%of a formal language $L$ are also closely connected to factorizations of $L$. A factorization of $L$ is a pair of formal languages $(U,V)$ such that $UV=\{uv: u\in U, v\in V\}$ is a subset of $L$ which is maximal in the sense that neither $U$ nor $V$ can be properly extended.  It is well known that $(U,V)$ is a factorization of $L$ if and only if $U=\bigcap\{Lv^{-1}: v\in V\}$ if and only if $V=\bigcap\{ u^{-1}L: u\in U\}$. In this case, we call $U$ the \emph{left residual} of $L$ by $V$, and vice versa, call $V$ the \emph{right residual} of $L$ by $U$.

Former power series are extensions of formal languages, which are used to describe the behaviour of weighted automata (i.e. finite automata with weights). Weighted automata were introduced in 1961 by Sch\"{u}tzenberger in his seminal paper \cite{schutzenberger61}.  A formal power series is a mapping from $\Sigma^\ast$, the free monoid of words over $\Sigma$, into a semiring $S$. Depending on the choice of the semiring $S$, formal power series can be viewed as weighted, multivalued or quantified languages where each word is assigned a weight, a number, or some quantity. Weighted automata have been used to describe quantitative properties in areas such as probabilistic systems, digital image compression, natural language processing.  We refer to \cite{droste09} for an detailed introduction of weighted automata and their applications.

Despite that a very large amount of work has been devoted to the study of formal power series and weighted automata (see e.g.  \cite{kuich86, salomaa78, berstel11, droste09,esik10} for surveys), the important concept of quotient as well as universal automata has not been systematically investigated in this weighted context. The only exception seems to be \cite{berstel11}, where the quotient of formal power series (by word) was discussed in  pages 10-11. When the semiring is complete, it is straightforward to extend the definition of the quotient of a formal power series $A$ from words to series: we only need to take the weighted sum of all left quotients of $A$ by word in $\Sigma$. Our attempt to characterize the residual of a formal power series $A$ by a formal power series as the weighted intersection of all left quotients of $A$ by word in $\Sigma$ is, however, unsuccessful. Several important and nice properties fail to hold anymore.

The aim of this paper is to introduce the quotient and residual operations in formal power series and study their application in the minimization of weighted automata. To overcome the above obstacle with residuals, we require the semiring to be a complete c-semiring (to be defined in Section~2), and then give a characterization of residuals in terms of quotients by word. Many nice properties and useful notions then follow in a natural way.

% Algebraic and language properties of the quotient of a power series by a power series (corresponding to the quotient of a formal language by a formal language) was not discussed there.

The remainder of this paper is organized as follows. Section 2 introduces basic notions and properties of semirings, formal power series, and weighted automata. Quotients of formal power series are introduced in Section 3, where we also show how to construct the minimal deterministic weighted automata effectively. In Section 4, we introduce the residuals and factorizations of formal power series. Using the left residuals, we define the universal weighted automaton $\mathcal{U}_A$ for arbitrary formal power series $A$ in Section~5, and justify its universality in Section~6.
An effective method for constructing the universal automaton is described in Section~7, which is followed by a comparison of the quotient and the residual operations. The last section concludes the paper.

\section {Preliminaries}

We recall in this section the notions of semirings, formal power series, weighted automata, and weighted contex-free grammar. Interested readers are referred to  \cite{droste09,kuich86,salomaa78,esik10}  for more information.

\subsection{Semirings}

 A 5-tuple ${\cal S}=(S, \oplus, \otimes, 0,1)$ is called a \emph{ semiring} if $S$ is a set containing at least two different elements $0$ and $1$, and $\oplus $ and $\otimes$ are two binary operations on $S$ such that
\begin{itemize}
\item [(i)] $\oplus $ is associative and is commutative and has identity $0$;
\item [(ii)] $\otimes$ is associative and has identity $1$ and null element $0$ (i.e., $a\otimes 0=0\otimes a=0$ for all $a\in S$); and
\item [(iii)] $\otimes$ distributes over $\oplus $, i.e., for all $a,b,c\in S$, $a\otimes (b\oplus c)=(a\otimes b)\oplus (a \otimes c)$ and $(b\oplus c)\otimes a=(b\otimes a)\oplus (c\otimes a)$.
\end{itemize}

Intuitively, a semiring is a ring (with unity) without subtraction. All rings (with unity), as well as all fields, are
semirings, e.g., the integers $\mathbb{Z}$, rationals $\mathbb{Q}$, reals $\mathbb{R}$, complex numbers $\mathbb{C}$. Lattices provide another important type of semirings. Recall that a partially ordered set $(L, \leq)$ is a \emph{lattice} if for any two elements
$a, b\in L$, the least upper bound $a\vee b = \sup\{a, b\}$ and the greatest lower
bound $a\wedge b = \inf\{a, b\}$ exist in $(L,\leq)$. A lattice $(L, \leq)$ is \emph{distributive}, if
$a\wedge(b\vee c) = (a\wedge b)\vee (a\wedge c)$ for all $a, b, c\in L$; and \emph{bounded}, if $L$ contains a
smallest element, denoted 0, and a greatest element, denoted 1. Let $(L, \leq)$ be any bounded distributive lattice. Then $(L,\vee,\wedge, 0, 1)$ is a semiring. Because
a distributive lattice $L$ also satisfies the dual distributive law $a\vee(b\wedge c) = (a\vee b)\wedge (a\vee c)$ for
all $a, b, c\in L$, the structure $(L,\wedge,\vee, 1, 0)$ is also a semiring.
%Semirings like this are often used for fuzzy automata.

Other important examples of semirings include:
\begin{itemize}
\item [-] The Boolean semiring $\mathbb{B} =( \{0, 1\}, \vee, \wedge, 0, 1)$;
\item [-] The semiring of the natural numbers $(\mathbb{N}, +, \cdot, 0, 1)$ with the usual addition and multiplication;
\item [-] The tropic semiring $(\mathbb{N}\cup\{\infty\}, \min, +,\infty, 0)$ with min and $+$ extended to $\mathbb{N}\cup\{\infty\}$ in a natural way;
\item [-] The min-sum semiring of nonnegative reals $(\mathbb{R}^+\cup\{0, \infty\}, \min, +, \infty, 0)$;
\item [-] The semiring of (completely positive) super-operators on a Hilbert space $\cal H$ $({\cal SO}({\cal H}), +, \circ, 0_{\cal H}, {\cal I}_{\cal H})$.
\end{itemize}
We note in the last semiring the addition is not idempotent and the product is not commutative. This semiring is recently used in model checking quantum Markov chains \cite{Feng+12} and the study of finite automata with weights taken from this semiring just initiated.

\subsubsection{Complete Semiring}

Let $I$ be an index set and let $S$ be a semiring. An infinitary sum operation $\sum_I : S^I \rw S$ is an operation that associates with every family $\{a_i | i\in I\}$ of elements of $S$ an element $\sum_{i\in I} a_i$ of $S$. A semiring $S$ is called \emph{complete} if it has an infinitary sum operation $\sum_I$
for each index set $I$ and the following conditions are satisfied:
\begin{itemize}
\item [(i)] $\sum_{i\in\emptyset}a_i= 0$, $\sum_{i\in\{j\}}a_i=a_j$, and $\sum_{i\in\{j,k\}}a_i=a_j\oplus a_k$ for $j\not=k$.

\item [(ii)] $\sum_{j\in J}\sum_{i\in I_j}a_i=\sum_{i\in I}a_i$ if $\bigcup_{j\in J}I_j=I$ and $I_j\cap I_k=\emptyset$ for $j\not=k$.

\item [(iii)] $\sum_{i\in I}(a\otimes a_i) = a \otimes\sum_{i\in I} a_i$ and $\sum_{i\in I}(a_i\otimes a) = \sum_{i\in I} a_i \otimes a$.
\end{itemize}

This means that a semiring $S$ is complete if it is possible to define infinite sums (i) that are extensions of the finite sums, (ii) that are associative and
commutative, and (iii) that satisfy the distributive laws.

\subsubsection{Complete c-Semiring}

A semiring $S$ is a \emph{c-semiring} if $\oplus $ is idempotent (i.e., $a\oplus a=a$ for all $a\in S$), $\otimes $ is commutative, and $1$ is the absorbing element of $\oplus $ (i.e., $a\oplus 1=1$ for any $a\in S$). In general, for a semiring $S$, we define a preorder $\leq_S$ over the set $S$ by
 \begin{equation*}
\mbox{$a\leq _S b$ iff $a\oplus c=b$ for some $c\in S$.}
\end{equation*}
If $\oplus $ is idempotent, then $\leq_S$ is also a partial order.  Suppose $S$ is a c-semiring. For any $a,b\in S$, we have $0\leq_S a\leq_S 1$ and $a\oplus b = a\vee b$ (the least upper bound of $a$ and $b$) in the poset $(S,\leq_S)$. If $S$ is clear from the context, then $S$ is omitted and we simply write $\leq$ for $\leq_S$ in the following.

A semiring $S$ is called a \emph{complete c-semiring} if $S$ is a complete semiring and a c-semring. In a complete c-semiring, the infinitary sum $\sum_{i\in I} a_i$ is exactly the least upper bound of $a_i$ ($i\in I$) in $S$ under the induced partial order $\leq_S$. In this case, $\sum_{i\in I} a_i$ is also written as $\bigvee_{i\in I}a_i$.

Complete c-semiring is a special kind of the notion of \emph{quantale} \cite{rothal}, which is a complete lattice $L$ equipped with a multiplication operator $\otimes$ such that $(L,\otimes)$ is a semigroup satisfying the following distributive laws:

\begin{equation}
\label{eq: quantale-dist}
\bigvee_{i\in I}(a\otimes a_i) = a \otimes\bigvee_{i\in I} a_i,\  \ \bigvee_{i\in I}(a_i\otimes a) = \bigvee_{i\in I} a_i \otimes a.
\end{equation}

Since the infinite distributive laws (Eq.~\ref{eq: quantale-dist}) holds, there are two \emph{adjunctions} or \emph{residuals}, denoted $a\backslash b$ (left residual) and $b/a$ (right residual), respectively, satisfying the following adjunction (residual) conditions,

\begin{equation}
\label{eq:quantale-adj}
x\leq a\backslash b\  \mbox{iff} \  a\otimes x\leq b,\ \mbox{and}\  x\leq b/a\ \mbox{iff}\  x\otimes a\leq b
\end{equation}

When the given quantale is commutative, i.e., the operation $\otimes$ is commutative, then the left residual is the same as the right residual. In this case, we call the left residual (and the right residual) the residual, denoted by $a\rw b$. Then we have
 \begin{equation}
 \label{a-rw-b}
 a\rw b = \bigvee \{x | a\otimes x \leq b\}.
 \end{equation}
 We have the following proposition.
\begin{proposition}
Let $S$ be a complete c-semiring. Then $S$  is a commutative quantale with unit $1$ as the largest element of $S$.
\end{proposition}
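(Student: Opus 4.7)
The plan is to unpack the three assertions separately: (a) $(S,\leq_S)$ is a complete lattice, (b) $\otimes$ is commutative with unit $1$ and satisfies the infinite distributive laws (so $S$ is a commutative quantale), and (c) $1$ is the top element of $(S,\leq_S)$.

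For (a), I would start from the c-semiring axioms, which give that $\oplus$ is idempotent and hence that $\leq_S$ is a genuine partial order (as already noted in the text). Completeness of $S$ supplies the infinitary sum $\sum_{i\in I}a_i$ for every family, and the paragraph preceding the proposition has already identified this sum with the join $\bigvee_{i\in I}a_i$ in $(S,\leq_S)$. Since every subset has a join in a poset with a bottom element, every subset also has a meet (take the join of its set of lower bounds), so $(S,\leq_S)$ is a complete lattice.

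For (c), the key point is the absorbing axiom of a c-semiring: $a\oplus 1=1$ for every $a\in S$. By the very definition of $\leq_S$, setting $c=1$ in $a\oplus c=1$ yields $a\leq_S 1$ for all $a\in S$. Together with the already-mentioned fact that $0\leq_S a$ for all $a$, this shows $1$ is the greatest element of $(S,\leq_S)$. This is the step most worth writing out carefully since it is where the ``c'' in c-semiring really pays off.

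For (b), commutativity of $\otimes$ and the existence of the multiplicative unit $1$ come directly from the c-semiring definition and the semiring axiom that $\otimes$ has identity $1$, so $(S,\otimes)$ is a commutative semigroup (in fact a monoid). The infinite distributive laws required by the quantale definition, namely
\[
\bigvee_{i\in I}(a\otimes a_i)=a\otimes \bigvee_{i\in I}a_i\quad\text{and}\quad \bigvee_{i\in I}(a_i\otimes a)=\bigvee_{i\in I}a_i\otimes a,
\]
are precisely the translation of clause (iii) in the definition of a complete semiring once we identify $\sum_{i\in I}$ with $\bigvee_{i\in I}$ as above. Combining (a), (b), and (c) then gives that $S$ is a commutative quantale whose unit $1$ coincides with the top. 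I do not expect any genuine obstacle here; the only subtlety is keeping straight the two different roles played by $1$ (multiplicative unit of $\otimes$ versus $\oplus$-absorbing element that forces $1$ to be the top of $\leq_S$), and making sure the identification of the infinitary sum with the join is cited from the paragraph just above the proposition rather than reproved.
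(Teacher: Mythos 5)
Your proposal is correct, and it follows exactly the route the paper intends: the paper gives no explicit proof for this proposition, treating it as an immediate consequence of the remarks just above it (that $\leq_S$ is a partial order when $\oplus$ is idempotent, that $0\leq_S a\leq_S 1$, and that the infinitary sum coincides with the join), and your three parts (a)--(c) simply assemble those same ingredients, with the $a\oplus 1=1$ absorption axiom correctly identified as the step that makes $1$ the top element.
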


\subsection{Formal Power Series}

Let $\Sigma$ be an alphabet and $S$ a semiring. Write $\sa$ for the set of all finite strings (or words) over $\Sigma$, and write $\varepsilon$ for the empty string. Then $\sa$ is the free monoid generated by $\Sigma$ under the operation of concatenation. We write $\Sigma^+$ for all finite non-empty strings over $\Sigma$.
A \emph{formal power series} $A$ is a mapping from $\sa$ into $S$. For simplicity, we also call a formal power series as a \emph{series}, or an \emph{$S$-subset} of $\sa$. The value of $A$ at a word $w\in\sa$ is denoted $(A,w)$ or $A(w)$ in this paper. We write $A$ as a formal sum
\begin{equation}\label{eq:formal-sum}
A=\sum_{w\in\sa}(A,w)w,
\end{equation}
where the values $(A,w)$ are referred as the \emph{coefficients} of $A$. The collection of all power series $A$ as defined above is denoted by $\allseries$. For a series $A$, if $A(\varepsilon)=0$, then $A$ is called \emph{proper}. For any series $A$, $A$ can be written as the sum of a proper series and and non-proper series, i.e.,
\begin{equation}
\label{eq:Aepsilon}
A=(A,\varepsilon)\varepsilon+\sum_{w\in \Sigma^+}(A,w)w.
\end{equation}

For s series $A$ on $\Sigma$, the \emph{support} of $A$ is defined as
\begin{equation}
\label{eq:supp}
supp(A)=\{w\in\sa | (A,w)\not=0\}.
\end{equation}

For two series $A$ and $B$ in $\allseries$, we define $A\leq B$ whenever $A(w)\leq_S B(w)$ for any $w\in\sa$.

\subsection{Weighted Automata}
Weighted automata are an extension of the classical finite automata.
\begin{definition}\label{def:l-vfa}
Let $S$ be a semiring. A \emph{weighted automaton} with weights in $S$ is a
5-tuple ${\cal A}=(Q,\Sigma,\delta,I,F)$, where $Q$ denotes a
set of states, $\Sigma$ is an input alphabet, $\delta$ is a mapping
from $Q\times \Sigma\times Q$ to $S$, and $I$ and $F$ are two mappings from $Q$ to $S$. We call ${\cal A}$  a \emph{finite} weighted automaton if both $Q$ and $\Sigma$ are finite sets; We call ${\cal A}$  a \emph{deterministic weighted automaton} (DWA for short) if $\delta$ is crisp and deterministic, i.e., $\delta$ is a mapping from $Q\times \Sigma$ into $Q$, and $I=q_0\in Q$.

The mapping $\delta$ is called the \emph{weighted (state) transition relation}. Intuitively, for
any $p,q\in Q$ and $\sigma\in \Sigma$, $\delta(p,\sigma,q)$ stands
for the weight that input $\sigma$ causes state $p$ to become $q$. $I$ and $F$ represent the (weighted) initial and, respectively, final state. For each $q\in Q$, $I(q)$ indicates the weight that $q$ is an initial state, $F(q)$ expresses the weight that $q$ is a final state.
 \end{definition}

 \begin{remark}
 Our definition of deterministic weighted automaton is different from the one used in e.g. \cite{buchsbaum00}, where a weighted automaton ${\cal A}=(Q,\Sigma,\delta,I,F)$ is \emph{deterministic} or \emph{sequential} if there exists a unique state $q_0$ in $Q$ such that $I(q_0)\not=0$ and for all $q\in Q$ and all $\sigma\in\Sigma$, there is at most one $p\in Q$ such that $\delta(q,\sigma,p)\not=0$.

These two definitions are not identical in general.
In fact, deterministic weighted automaton called in this paper is just the \emph{simple} deterministic weighted automaton defined  in \cite{buchsbaum00}, for the detail comparison, we refer to \cite{buchsbaum00,droste09}. A simple deterministic weighted automaton is obviously a sequential weighted automaton defined in \cite{buchsbaum00}. The following example shows that the converse does not hold in general.  Let $S=(\mathbb{N}\cup\{\infty\}, \min, +,\infty, 0)$ be the tropical semiring. Suppose $\Sigma=\{a\}$ and $A$ is the formal power series over $\Sigma$ defined by $A(w)=|w|$, where $|w|$ denotes the length of the string $w$. Then $A$ can not be accepted by any simple deterministic weighted automaton (see Proposition \ref{pro:dffa} below). However, $A$ can be accepted by a sequential weighted automaton ${\cal A}=(\{q\},\Sigma,\delta,\{q\},\{q\})$, where $\delta(q,a,p)=1$ if $p=q$ and $\delta(q,a,p)=0$ otherwise.

If $S$ is \emph{locally finite}, however, then the two definitions are equivalent in the sense that they accept the same class of former power series, where a semiring is locally finite if every sub-semiring generated by a finite set is also finite \cite{droste10}.
 \end{remark}
 Two weighted automata can be compared by a morphism.
 \begin{definition}\label{de:morphism}
A \emph{ homomorphism} (or \emph{ morphism}) between two weighted automata ${\cal A}=(Q,\Sigma,\delta,I,F)$ and ${\cal B}=(P,\Sigma,\eta,J,G)$ is a mapping $\varphi: Q\rw P$, satisfying the following conditions:
\begin{equation}\label{eq:morphism}
\mbox{$I(p)\leq J(\varphi(p))$, $F(p)\leq G(\varphi(p))$ and $\delta(p,\sigma,q)\leq \eta(\varphi(p),\sigma,\varphi(q))$},
\end{equation}
for any $p,q\in Q$ and $\sigma\in\Sigma$.

A morphism $\varphi$ is \emph{ surjective} if $\varphi: Q\rw P$ is onto and ${\cal B}=(P,\Sigma,\varphi(\delta),\varphi(I)$, $\varphi(F))$, where
\begin{eqnarray*}
\varphi(\delta)(p_1,\sigma,p_2) &=& \sum\{\delta(q_1,\sigma,q_2) | \varphi(p_1)=q_1, \varphi(p_2)=q_2\},\\
\varphi(I)(p) &=& \sum\{I(q)| \varphi(p)=q\},\\
\varphi(F)(p) &=& \sum\{F(q) | \varphi(p)=q\}.
\end{eqnarray*}
In this case, ${\cal B}$ is also called the \emph{morphic image} of ${\cal A}$.

If $\varphi:Q\rw P$ is one-to-one, then we call ${\cal A}$ a \emph{sub-automaton} of ${\cal B}$.

A morphism $\varphi$ is called a \emph{strong homomorphism} if
\begin{eqnarray*}
J(p) &=& \sum\{I(q) | \varphi(q)=p\},\\
G(\varphi(q)) &=& F(q),\\
\eta(\varphi(q),\sigma,p) &=& \sum\{\delta(q,\sigma,r) | \varphi(r)=p\}.
\end{eqnarray*}
In case $\varphi$ is an onto strong homomorphism,  we call ${\cal B}$ a \emph{quotient} of ${\cal A}$.

We say $\varphi$ is an \emph{isomorphism} if it is bijective and its inverse $\varphi^{-1}$ is also a morphism. In this case, we say ${\cal A}$ is isomorphic to ${\cal B}$.

\end{definition}

The behaviour of a weighted automaton is characterized by the formal power series it recognizes. To introduce this formal power series, we extend the weighted transition function $\delta: Q\times \Sigma \times Q \rightarrow S$ to a mapping $\ds:  Q\times \Sigma^\ast \times Q\rightarrow S$ as follows:

\begin{itemize}
\item [(i)] For all $p\in Q$,  set $\ds(q, \varepsilon, p)=1$ if $p=q$, and $\ds(q, \varepsilon, p)=0$ otherwise;
\item [(ii)] For all $\theta=\sigma_1\cdots \sigma_n\in\Sigma^{\ast}$, define
\begin{equation*}
\label{eq:ds}
\ds(q, \sigma_1\cdots \sigma_n, p)=\sum\{\delta(q,\sigma_1,q_1)\otimes\cdots\otimes\delta(q_{n-1},\sigma_n,p)| q_1,\cdots,q_{n-1}\in Q\}.
\end{equation*}
\end{itemize}
If ${\cal A}$ is deterministic, the extension $\ds$ of transition function $\delta$ is defined similar as in the classical case. It is easy to see that for any $\theta=\theta_1\theta_2\in \Sigma^{\ast}$ we have
\begin{equation}
\label{eq:theta1and2}
 \ds(q, \theta_1\theta_2, p)=\sum_{r\in Q}[\ds(q, \theta_1, r)\otimes\ds(r, \theta_2, p)].
\end{equation}

\begin{definition}\label{dfn:recognized-fps}
For a weighted automaton ${\cal A}=(Q,\Sigma,\delta,I,F)$, the formal power series  \emph{recognized} or \emph{accepted} by ${\cal A}$, written$|{\cal A}|: \sa \rightarrow S$, is defined as follows:
\begin{equation}
\label{eq:recognized-fps}
|{\cal A}|(\theta)=\sum\{I(p)\otimes\ds(p, \theta,q)\otimes F(q) | p, q\in Q\}. \hspace*{8mm} (\theta\in\sa)
\end{equation}
If ${\cal A}$ is deterministic, then the formal power series recognized by $\mathcal{A}$ is defined as
\begin{equation}
\label{eq:recognized-fps-det}
|{\cal A}|(\theta)=F(\ds(q_0,\theta)).  \hspace*{8mm} (\theta\in\sa)
\end{equation}
We say a formal power series $A\in \allseries$ is  a \emph{regular series} or an \emph{ $S$-regular language} on $\Sigma$ if it is recognized by a finite weighted automaton; and say $A$ is a \emph{DWA-regular series} or a \emph{DWA-regular language} on $\Sigma$ if it is recognized by a finite DWA.
\end{definition}
 
 It was proved by Sch\"{u}tzenberger \cite{schutzenberger61} that regular series are precisely the rational formal power series for all semirings. So we also say a regular series as a rational series in this paper.

\subsection{Weighted Contex-Free Grammar}
%Next, we discuss the language properties of the left quotient $u^{-1}A$ and the right quotient $Au^{-1}$.
We next recall the concept of weighted context-free grammar and weighted context-free series.

A \emph{weighted context-free grammar} is in essence a classical context-free grammar together with a function mapping rules of the grammar to weights in a certain semiring (\cite{droste09,esik10}). Let $S$ be a semiring. A weighted context-free grammar (WCFG) is defined as a  tuple $G=(\Sigma,N,Z_0,S,P)$, where $\Sigma$ (the set of terminal symbols) and $N$ (the set of non-terminal symbols) are two finite sets that are disjoint, $Z_0$ (the start or initial symbol) is an element in $N$, $P$ is a mapping from $N \times (N \cup\Sigma)^{\ast}$ (the set of productions or rules) to $S$.

Similar to the classical case,  we can define the induction of a weighted context-free grammar $G$. Suppose $Z\overset{r}{\rightarrow}\gamma$ is a weighted production, and $\alpha,\beta$ are elements in  $(N\cup\Sigma)^{\ast}$. We say $\alpha\gamma\beta$ is a \emph{direct induction} of $\alpha Z\beta$ with weight $r$, denoted by $\alpha Z\beta\overset{r}{\Rightarrow}\alpha\gamma\beta$. For productions $\alpha_1,\cdots,\alpha_k$ in $(N\cup\Sigma)^{\ast}$, if $\alpha_1\overset{r_1}{\Rightarrow}\alpha_2, \cdots, \alpha_{k-1}\overset{r_{k-1}}{\Rightarrow}\alpha_k$, then we say $\alpha_k$ is an induction of $\alpha_1$ with weight $r=r_1\otimes\cdots\otimes r_{k-1}$, denoted by  $\alpha_1\overset{r}{\Rightarrow}_{\ast}\alpha_k$.
The formal power series $|G|$ generated by $G$ is defined as $|G|(w)=\sum\{r | Z_0\overset{r}{\Rightarrow}_{\ast} w\}$ ($w\in\sa$). A series $A$ is called \emph{ context-free} if there is a weighted context-free grammar $G$ such that $A=|G|$.

\subsection{Operations of Formal Power Series}
We recall several well-know operations of formal power series (cf. \cite{li05}). For two formal power series $A,B\in \allseries$, a value $r\in S$, and $w\in \Sigma^\ast$, we define
\begin{eqnarray}
(A\oplus B)(w) &=& A(w)\oplus B(w),\\
(AB)(w) &=& \sum\{A(w_1)\otimes B(w_2) | w_1w_2=w\},\\
(rA)(w) &=& r\otimes A(w),\\
(Ar)(w) &=& A(w)\otimes r,\\
A^{\ast}(w) &=& \sum\{A(w_1)\otimes\cdots\otimes A(w_n)| n\geq 0, w_1\cdots w_n=w\},\\
\label{eq:A^R}
A^R(w) &=& A(w^R),
\end{eqnarray}
where $w^R=\sigma_n\cdots\sigma_2\sigma_1$ if  $w=\sigma_1\sigma_2\cdots\sigma_n$.
We call $A\oplus B$ and $AB$ the \emph{sum} and, respectively, the \emph{concatenation} (or \emph{Cauchy product}) of $A$ and $B$, and call $rA$, $Ar$, $A^\ast$, and $A^R$ the \emph{left scalar product}, the \emph{right scalar product}, the \emph{Kleene closure}, and the \emph{reversal} of $A$, respectively.

Given a weighted automaton, ${\cal A}=(Q,\Sigma, \delta,I,F)$, three other operations can be defined for the formal power series recognized by $\mathcal{A}$. For any two states $p,q$ in $Q$,  and any $\theta\in \Sigma^\ast$, we define
\begin{eqnarray}
Past_{{\cal A}}(q)(\theta) &=& \sum\{I(p)\otimes \ds(p,\theta,q)| p\in Q\}, \\
Fut_{{\cal A}}(q)(\theta)   &=& \sum\{ \ds(q,\theta,p)\otimes F(p)| p\in Q\}, \\
Trans_{{\cal A}}(p,q)(\theta) &=& \ds(p,\theta,q).
\end{eqnarray}
The following result holds.
\begin{proposition}\label{pro:factor-automata}
Suppose ${\cal A}=(Q,\Sigma, \delta,I,F)$ is a weighted automaton. For any $q\in Q$, we have
\begin{equation}
Past_{{\cal A}}(q)Fut_{{\cal A}}(q)\leq |{\cal A}|.
\end{equation}
\end{proposition}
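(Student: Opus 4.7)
The plan is to unfold both sides by definition and reduce the desired inequality, at each word $w$, to the identity
$\ds(p,w_1,q)\otimes\ds(q,w_2,p')\leq \ds(p,w_1w_2,p')$, which is an immediate consequence of Eq.~(\ref{eq:theta1and2}) together with the fact that in a (c-)semiring each summand of an infinitary sum lies below the whole join. Since Propositions in this paper only talk about $\leq$ when the underlying semiring admits the partial order $\leq_S$, I will work in the complete c-semiring setting, so that $\oplus$ is idempotent and the componentwise order on $\allseries$ is a genuine partial order.

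First I would fix a word $w\in\sa$ and expand the left-hand side using the definition of the Cauchy product and of $Past_{\cal A}(q)$ and $Fut_{\cal A}(q)$:
\begin{equation*}
(Past_{\cal A}(q)\,Fut_{\cal A}(q))(w)=\sum_{w_1w_2=w}\;\sum_{p,p'\in Q} I(p)\otimes\ds(p,w_1,q)\otimes\ds(q,w_2,p')\otimes F(p'),
\end{equation*}
where distributivity in a complete semiring justifies pulling the inner sums outside the tensor product. Then, for each fixed decomposition $w=w_1w_2$ and each pair $p,p'$, Eq.~(\ref{eq:theta1and2}) gives
\begin{equation*}
\ds(p,w_1,q)\otimes\ds(q,w_2,p')\ \leq\ \sum_{r\in Q}\ds(p,w_1,r)\otimes\ds(r,w_2,p')\ =\ \ds(p,w,p'),
\end{equation*}
because each term of a join is dominated by the join. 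Multiplying on the left by $I(p)$ and on the right by $F(p')$ preserves the order (monotonicity of $\otimes$ is a direct consequence of Eq.~(\ref{eq: quantale-dist})), so every summand on the right-hand side of the expansion is bounded by $I(p)\otimes\ds(p,w,p')\otimes F(p')$.

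The only point that needs a moment of care is that the outer sum ranges over all $|w|+1$ decompositions $w=w_1w_2$, so the previous step produces many copies of the same bound $I(p)\otimes\ds(p,w,p')\otimes F(p')$ for fixed $p,p'$. Here I use idempotency of $\oplus$ in a c-semiring to collapse these identical copies into a single one, yielding
\begin{equation*}
(Past_{\cal A}(q)\,Fut_{\cal A}(q))(w)\ \leq\ \sum_{p,p'\in Q} I(p)\otimes\ds(p,w,p')\otimes F(p')\ =\ |{\cal A}|(w),
\end{equation*}
which is the required pointwise inequality. This is the one place where the c-semiring hypothesis is really used; without idempotency one would at best get a weaker inequality with a $(|w|+1)$-fold multiplicity, so I expect this collapsing step to be the main conceptual obstacle, while the rest of the argument is a mechanical unfolding of definitions.
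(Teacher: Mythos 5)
Your proof is correct and follows essentially the same route as the paper's: expand the Cauchy product and the definitions of $Past_{\cal A}(q)$ and $Fut_{\cal A}(q)$, bound the single term $\ds(p,w_1,q)\otimes\ds(q,w_2,p')$ by the full sum $\sum_{r\in Q}\ds(p,w_1,r)\otimes\ds(r,w_2,p')=\ds(p,w_1w_2,p')$, and recombine. The one place where you are more careful than the paper is the final collapse of the $|w|+1$ identical bounds coming from the decompositions $w=w_1w_2$ --- the paper's last two lines perform this silently --- and your observation that this step genuinely needs idempotency of $\oplus$ is right (over $(\mathbb{N},+,\cdot,0,1)$ the inequality actually fails, e.g.\ for a one-state automaton with all weights $1$), so reading the proposition in the c-semiring setting, as you do, is warranted even though the statement does not say so explicitly.
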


\begin{proof}
For any $\theta\in \sa$, we have
\begin{eqnarray*}
&& (Past_{{\cal A}}(q)Fut_{{\cal A}}(q))(\theta) \\
 &=& \sum_{uv=\theta}Past_{{\cal A}}(q)(u)\otimes Fut_{{\cal A}}(q)(v)\\
                                                                   &=& \sum_{uv=\theta}\sum_{q_0\in Q}I(q_0)\otimes \ds(q_0,u,q)\otimes \sum_{p\in Q}\ds(q,v,p)\otimes F(p)\\
                                                                   &=& \sum_{uv=\theta}\sum_{q_0,p\in Q}I(q_0)\otimes\ds(q_0,u,q)\otimes\ds(q,v,p)\otimes F(p)\\
                                                                   &\leq& \sum_{uv=\theta}\sum_{q_0,p,q\in Q}I(q_0)\otimes\ds(q_0,u,q)\otimes\ds(q,v,p)\otimes F(p)\\
                                                                   &=& \sum_{uv=\theta}I(q_0)\otimes\ds(q_0,uv,p)\otimes F(p)\\
                                                                   &=& |{\cal A}|(\theta).
\end{eqnarray*}
Hence, $Past_{{\cal A}}(q)Fut_{{\cal A}}(q)\leq |{\cal A}|$.
\end{proof}

\section{Quotients of Formal Power Series}
In this section, we first introduce quotients of formal power series and then, upon this operation, introduce for each formal power series $A$ a canonical weighted automaton that recognizes $A$. Properties of the quotient operation is also studied.

\subsection{Quotients and Minimal Weighted Automata}

Let $A: \sa\rw S$ be a formal power series, $u\in\sa$ a word. The \emph{left quotient} of $A$ by $u$, written $u^{-1}A$, is the formal power series $u^{-1}A:\sa\rw S$ defined as:
\begin{equation}
\label{eq: left-quotient}
u^{-1}A (v)=A(uv) \hspace{8mm} (v\in\sa).
\end{equation}

Dually, the  \emph{right quotient} of $A$ by $u$, written $Au^{-1}$, is the formal power series $Au^{-1}: \sa\rw S$ defined as:
\begin{equation}
\label{eq: left-quotient}
Au^{-1}(v)=A(vu) \hspace{8mm} (v\in\sa).
\end{equation}
for any $v\in\sa$.

%Dually, the \emph{right quotient} of $A$ by $u$, written $Au^{-1}$, is the formal power series $Au^{-1}: \sa\rw S$ defined as below:

%\begin{equation}
%\label{eq: right-quotient}
%Au^{-1}(v)=A(vu) \hspace{8mm} (v\in\sa).
%\end{equation}

%\subsection{A Deterministic Weighted Automaton That Recognizes $A$}

The left quotient operation introduces an equivalent relation on $\sa$.

\begin{definition}\label{dfn:non-distinguishable-relation}
Suppose $A\in \allseries$. We say two words $u_1$ and $u_2$ are \emph{non-distinguishable} in $A$, written $u_1 \equiv_A u_2$, if $u_1^{-1} A=u_2^{-1}A$, i.e., if $A(u_1u)=A(u_2 u)$ for all $u\in\sa$.
\end{definition}
It is straightforward to show that $\equiv_A$ is an equivalent relation on $\sa$. For each word $u\in\sa$, we write $[u]_A$ for the equivalent class of $\equiv_A$ that contains $u$.
\begin{lemma}\label{le:mini DWA}
Suppose $A \in \allseries$.  The mapping defined by $[u]\mapsto u^{-1}A$ $(u\in \sa)$ is a bijection from the set of equivalent classes of $\equiv_A$ to the set of left quotients of $A$.
\end{lemma}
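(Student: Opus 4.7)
The plan is to unwind the definitions and show that the correspondence $[u]_A \mapsto u^{-1}A$ is well-defined, surjective, and injective. Each of these three properties is essentially tautological given how $\equiv_A$ was set up in Definition~\ref{dfn:non-distinguishable-relation}, so the work is really just bookkeeping rather than mathematics.

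First, I would fix the notation: let $\Phi$ denote the candidate map sending an equivalence class $[u]_A$ of $\equiv_A$ to the formal power series $u^{-1}A \in \allseries$. The key observation, used repeatedly below, is that by definition
\begin{equation*}
u_1 \equiv_A u_2 \ \Longleftrightarrow \ u_1^{-1}A = u_2^{-1}A,
\end{equation*}
i.e.\ the equivalence relation $\equiv_A$ is precisely the kernel of the map $u \mapsto u^{-1}A$ from $\sa$ into $\allseries$.

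Next I would verify the three required properties in order. For well-definedness, if $[u_1]_A = [u_2]_A$, then $u_1 \equiv_A u_2$, so the displayed equivalence above gives $u_1^{-1}A = u_2^{-1}A$, hence $\Phi([u_1]_A) = \Phi([u_2]_A)$. For surjectivity, the set of left quotients of $A$ is by definition $\{u^{-1}A : u \in \sa\}$, and each such element is the image of $[u]_A$ under $\Phi$. For injectivity, suppose $\Phi([u_1]_A) = \Phi([u_2]_A)$, i.e.\ $u_1^{-1}A = u_2^{-1}A$; then the displayed equivalence above yields $u_1 \equiv_A u_2$, so $[u_1]_A = [u_2]_A$.

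There is no real obstacle here; the statement is a standard quotient-by-kernel observation applied to the map $u\mapsto u^{-1}A$. The only thing worth doing carefully is making explicit that the equation $A(u_1 v) = A(u_2 v)$ for all $v\in\sa$ (appearing in Definition~\ref{dfn:non-distinguishable-relation}) is literally the pointwise equality of the two series $u_1^{-1}A$ and $u_2^{-1}A$, so that the biconditional used in all three steps is immediate from the definition \eqref{eq: left-quotient} of the left quotient. Once that is pointed out, the proof amounts to a few lines.
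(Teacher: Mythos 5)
Your proof is correct; the paper in fact states this lemma without any proof, treating it as immediate, and your quotient-by-kernel argument (well-definedness, surjectivity, injectivity all reduced to the defining biconditional $u_1\equiv_A u_2 \Leftrightarrow u_1^{-1}A=u_2^{-1}A$) is exactly the routine verification the authors left implicit.
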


We define a deterministic weighted automaton ${\cal M}_A=(Q_A,\Sigma,\delta_A,I_A,F_A)$ as follows:
\begin{itemize}
\item $Q_A=\sa/\equiv_A$ is the quotient of $\sa$ modulo $\equiv_A$;
\item $\delta_A: Q_A\times \Sigma\rw Q_A$ is defined as
\begin{equation}
\label{eq: delta-minimal-dwa}
\delta_A([\theta],\sigma)=[\theta\sigma], \hspace*{8mm} (\sigma\in\sa)
\end{equation}

\item $I_A$ is the singleton state $[\varepsilon]$ in $Q_A$;

\item $F_A: Q_A\rw S$ is defined by $F_A([\theta])=A(\theta)$ for $\theta\in \sa$.
\end{itemize}

\begin{proposition}\label{pro:minimal DWA}
Suppose $A\in\allseries$. Then ${\cal M}_A$ is the minimal DWA that recognizes $A$.

\end{proposition}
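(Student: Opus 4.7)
The plan splits into two parts. First, I must verify that ${\cal M}_A$ is a well-defined DWA that recognizes $A$. Second, I must show that any DWA recognizing $A$ has at least $|Q_A|$ states, with ${\cal M}_A$ witnessing this bound.

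For the first part, I would check that $\delta_A([u],\sigma)=[u\sigma]$ and $F_A([u])=A(u)$ are independent of the chosen representative. Both are immediate from the definition of $\equiv_A$: if $u\equiv_A v$, then $A(u\sigma w)=A(v\sigma w)$ for every $w\in\sa$, so $u\sigma\equiv_A v\sigma$, and taking $w=\varepsilon$ gives $A(u)=A(v)$. A routine induction on the length of $\theta$ then shows $\delta_A^{\ast}([\varepsilon],\theta)=[\theta]$, which yields $|{\cal M}_A|(\theta)=F_A([\theta])=A(\theta)$ by~\eqref{eq:recognized-fps-det}.

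For the minimality part, let ${\cal B}=(P,\Sigma,\eta,p_0,G)$ be any DWA recognizing $A$, and restrict attention to its accessible part $P_0=\{\es(p_0,u):u\in\sa\}\subseteq P$. I define $\varphi:P_0\rw Q_A$ by $\varphi(\es(p_0,u))=[u]_A$. The key check is well-definedness: if $\es(p_0,u)=\es(p_0,v)$, then for every $w\in\sa$,
\[
A(uw)=G(\es(p_0,uw))=G(\es(\es(p_0,u),w))=G(\es(\es(p_0,v),w))=A(vw),
\]
whence $u\equiv_A v$. Since every class $[u]_A$ equals $\varphi(\es(p_0,u))$, the map $\varphi$ is surjective, giving $|P|\geq |P_0|\geq |Q_A|$. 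A brief additional check shows that $\varphi$ respects transitions and final weights, so it is a surjective strong homomorphism from the accessible part of ${\cal B}$ onto ${\cal M}_A$; this also yields uniqueness of ${\cal M}_A$ up to isomorphism.

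The main obstacle is conceptual rather than technical: one must pin down what ``minimal'' means in the weighted deterministic setting. The notion extracted here is minimal state count among DWAs recognizing $A$, strengthened by the universal property that every such DWA surjects canonically onto ${\cal M}_A$. Once this is fixed, all the computations reduce to unwinding the definitions of $\equiv_A$ and $\es$, and none of the verifications present any substantive difficulty.
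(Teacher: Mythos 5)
Your proof is correct and follows the same (standard Myhill--Nerode style) route the paper intends: the paper merely asserts that well-definedness and recognition are ``easy to see'' and ``straightforward,'' and your argument supplies exactly those details, together with the surjection from the accessible part of an arbitrary DWA onto $Q_A$ that justifies minimality. Nothing is missing.
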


It is easy to see that $\delta_A$ is well-defined and, hence, ${\cal M}_A=(Q_A,\Sigma,\delta_A,I_A,F_A)$ is a DWA. It is straightforward to show that ${\cal M}_A$ recognizes $A$. In other words, there is a DWA ${\cal A}$ that accepts $A$ for any formal power series $A$. In general, ${\cal M}_A$ is not finite; but, when it is finite, ${\cal M}_A$ is the minimal DWA that recognizes $A$.

By Lemma \ref{le:mini DWA}, an equivalent minimal DWA
\begin{equation}
\label{eq:M-A'}
{\cal M}_A'=(Q_A',\Sigma,\delta_A',I_A',F_A')
\end{equation}
that recognizes $A$ can be constructed by using the left quotients of $A$ as follows

\begin{itemize}
\item $Q_A'=\{u^{-1}A | u\in\sa\}$, the set of all left quotients of $A$ by a word;
\item  $\delta_A': Q_A'\times \Sigma\rw Q_A'$ defined by
\begin{equation}
\label{eq:delta'}
\delta_A'(u^{-1}A,\sigma)=(u\sigma)^{-1}A
\end{equation}

\item $I_A'=A=\varepsilon^{-1}A$;

\item $F_A':Q_A'\rw S$ is defined by $F_A'(u^{-1}A)=A(u)$ for $\theta\in \sa$.
\end{itemize}

The following proposition presents a characterization of formal power series that can be recognized by a finite DWA.

\begin{proposition}\label{pro:dffa}
Suppose $A\in \allseries$. Given $r\in S$, we write $Im(A)=\{A(\theta) | \theta\in\sa\}$, $A_r=\{ \theta\in\sa | A(\theta)\geq r\}$, and $A_{[r]}=\{ \theta\in\sa | A(\theta)=r\}$. Then the following statements are equivalent.
\begin{itemize}
\item [(i)] $A$ can be recognized by a finite DWA.

\item [(ii)] $Im(A)$ is finite and each $A_r$ is a regular language for $r\in Im(A)$.

\item [(iii)] $Im(A)$ is finite and each $A_{[r]}$ is a regular language for $r\in Im(A)$.

\item [(iv)] There exist  $r_1,\cdots,r_k\in S\setminus\{0\}$ and $k$ regular languages $L_1,\cdots,L_k$ over $\Sigma$, which are pairwise disjoint, such that $A=\sum_{i=1}^k r_i L_i$.

\item [(v)] $\equiv_A$ has finite index, i.e.,  $Q_A$ is finite.
\end{itemize}

\end{proposition}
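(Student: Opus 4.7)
The plan is to establish the cycle $(i) \Leftrightarrow (v) \Rightarrow (iii) \Leftrightarrow (ii)$ together with $(iii) \Rightarrow (iv) \Rightarrow (i)$. First, $(i) \Leftrightarrow (v)$ is immediate from Proposition \ref{pro:minimal DWA}: the minimal DWA ${\cal M}_A$ has state set $Q_A = \sa/{\equiv_A}$, so a finite DWA for $A$ exists precisely when $Q_A$ is finite.

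For $(v) \Rightarrow (iii)$, I would first note that the final-weight map $F_A([\theta]) = A(\theta)$ of ${\cal M}_A$ is well-defined, since $\theta \equiv_A \theta'$ gives $A(\theta) = (\theta^{-1}A)(\varepsilon) = ((\theta')^{-1}A)(\varepsilon) = A(\theta')$; hence $Im(A) = F_A(Q_A)$ is finite. For each $r \in Im(A)$, replacing the weighted final function in ${\cal M}_A$ by the crisp accepting set $\{[\theta] \in Q_A : A(\theta) = r\}$ yields a classical finite DFA whose language, via the easy inductive identity $\delta_A^{\ast}([\varepsilon],\theta) = [\theta]$, is exactly $A_{[r]}$. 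The equivalence $(ii) \Leftrightarrow (iii)$ then follows from Boolean closure of the class of regular languages via the (finite) identities
\begin{equation*}
A_r = \bigcup \{A_{[s]} : s \in Im(A),\ s \geq r\}, \qquad A_{[r]} = A_r \setminus \bigcup \{A_s : s \in Im(A),\ s > r\}.
\end{equation*}
For $(iii) \Rightarrow (iv)$, enumerate $Im(A) \setminus \{0\}$ as $\{r_1,\ldots,r_k\}$ and put $L_i = A_{[r_i]}$; these are regular by $(iii)$, pairwise disjoint because each word has a unique image under $A$, and a direct unpacking of (\ref{eq:formal-sum}) yields $A = \sum_{i=1}^k r_i L_i$.

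For $(iv) \Rightarrow (i)$, pick for each $i$ a DFA ${\cal A}_i = (Q_i, \Sigma, \delta_i, q_{0i}, F_i)$ accepting $L_i$ and form the product DFA on $Q_1 \times \cdots \times Q_k$ with coordinate-wise transitions and initial state $(q_{01},\ldots,q_{0k})$. Define a weighted final function $F$ by $F(q_1,\ldots,q_k) = r_i$ if $q_i \in F_i$ and $q_j \notin F_j$ for all $j \neq i$, and $F(q_1,\ldots,q_k) = 0$ otherwise; pairwise disjointness of the $L_i$ ensures that every state reachable from the initial state along any input word lies in at most one ``column'' $F_i$, so this DWA correctly outputs $A(\theta)$. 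The only places that require care are the two \emph{well-definedness} checks (that $F_A$ respects $\equiv_A$, and that $F$ is consistent with disjointness); both are forced by the hypotheses, and no step constitutes a substantive difficulty.
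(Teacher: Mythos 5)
Your proof is correct in substance and, unlike the paper, actually supplies an argument: the paper's entire ``proof'' of this proposition is a one-line deferral to the lattice-valued versions in its references, so there is nothing to compare step-by-step. What you give is the expected Myhill--Nerode-style argument, and each implication you write down is sound: $(i)\Leftrightarrow(v)$ via Proposition~\ref{pro:minimal DWA}, $(v)\Rightarrow(iii)$ by reading off $Im(A)=F_A(Q_A)$ and crisping the final weights, $(iii)\Rightarrow(iv)$ by taking $L_i=A_{[r_i]}$, and $(iv)\Rightarrow(i)$ by the product construction (your handling of unreachable product states with two accepting coordinates, which get weight $0$ by the ``otherwise'' clause, is fine since they never occur on reachable states).

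The one step you should flag is $(ii)\Rightarrow(iii)$. Your identity $A_{[r]}=A_r\setminus\bigcup\{A_s : s\in Im(A),\ s>r\}$ silently uses antisymmetry of $\leq_S$ on $Im(A)$: the paper defines $\leq_S$ only as a \emph{preorder} on a general semiring, and if two distinct values $r,s\in Im(A)$ satisfy both $r\leq_S s$ and $s\leq_S r$, the right-hand side is empty while the left-hand side need not be. This is not merely a defect of your argument --- the equivalence itself fails for such semirings (over $(\mathbb{Z},+,\cdot,0,1)$ one has $a\leq_S b$ for all $a,b$, so every $A_r$ equals $\Sigma^{\ast}$ and $(ii)$ reduces to ``$Im(A)$ is finite,'' which does not imply $(i)$). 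So the proposition implicitly requires $\leq_S$ to be a partial order, as it is in the lattice and c-semiring settings the paper actually uses; you should state that hypothesis where you invoke the set difference. Everything else, including the well-definedness checks you single out, is in order.
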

\begin{proof}
The  proof is similar to that given for lattices in  \cite{li05,li11}.
\end{proof}

But \emph{when can a regular series be recognized by a finite DWA}? This is closely related with the structure of semiring $S$. It can be shown that (cf.\cite{li05}), for any regular series $A$, $A$ can be recognized by a finite DWA iff the monoid $(S,\oplus,0)$ and $(S,\otimes,1)$ are both locally finite, where a monoid $(M,\times,1)$ is locally finite if every submonoid generated by a finite subset of $M$ is also finite. In particular, if $S$ is finite, regular series and  DWA-regular series are the same.

\subsection{Properties of Quotients}

The left quotient of a series $A$ by a word $u$ can be regarded as a left action of $\sa$ on $\allseries$, i.e., a mapping $\sa\times \allseries\rw \allseries$. The action satisfies the following conditions.
%, which makes $\allseries$ a $\sa$-left module. \lisj{\% The notion of left module is unexplained and seems not used in this paper. Consider remove this assertion.}

\begin{proposition}\label{pro:quotient-word}
Let $S$ be a semiring. Suppose $A\in \allseries$, $\varepsilon$ is the empty word in $\sa$, $u,v$ are words in $\sa$, and $k$ is a value in $S$. Then we have
\begin{itemize}

\item[(i)] $(uv)^{-1}A=v^{-1}(u^{-1}A)$, $A(uv)^{-1}=(Av^{-1})u^{-1}$, $\varepsilon^{-1}A=A=A\varepsilon^{-1}$;

\item[(ii)]$u^{-1}(A\oplus B)=u^{-1}A\oplus u^{-1}B$, $(A\oplus B)u^{-1}=Au^{-1}\oplus Bu^{-1}$;

\item [(iii)] $u^{-1}(kA)=k(u^{-1}A)$, $u^{-1}(Ak)=(u^{-1}A)k$, $(kA)u^{-1}=k(Au^{-1})$, $(Ak)u^{-1}$ $=(Au^{-1})k$.
\end{itemize}
\end{proposition}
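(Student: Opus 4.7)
The plan is to verify each of the nine identities pointwise, by evaluating both sides at an arbitrary word $w \in \sa$ and unfolding the defining equations of the left and right quotients together with those of $A \oplus B$, $kA$, and $Ak$. Every identity then reduces either to associativity of concatenation in $\sa$ or to a single semiring axiom applied at one coefficient, so no structural fact about $S$ beyond the axioms already assumed is required.

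For part (i), associativity of concatenation does all the work. To establish $(uv)^{-1}A = v^{-1}(u^{-1}A)$ I would write
\[
(uv)^{-1}A\,(w) \;=\; A((uv)w) \;=\; A(u(vw)) \;=\; (u^{-1}A)(vw) \;=\; v^{-1}(u^{-1}A)\,(w),
\]
and the right-quotient identity $A(uv)^{-1} = (Av^{-1})u^{-1}$ is symmetric, relying on $w(uv) = (wu)v$. The base cases $\varepsilon^{-1}A = A = A\varepsilon^{-1}$ are immediate from $\varepsilon w = w = w\varepsilon$.

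For parts (ii) and (iii), I would unfold the pointwise definitions of $\oplus$ and of the scalar products. For example,
\[
u^{-1}(A \oplus B)(w) \;=\; (A \oplus B)(uw) \;=\; A(uw) \oplus B(uw) \;=\; (u^{-1}A \oplus u^{-1}B)(w),
\]
and
\[
u^{-1}(kA)(w) \;=\; (kA)(uw) \;=\; k \otimes A(uw) \;=\; k \otimes (u^{-1}A)(w) \;=\; (k(u^{-1}A))(w).
\]
The remaining identities in (ii) and (iii) are obtained by the same unfolding, swapping left for right and placing $k$ on the appropriate side throughout.

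There is no substantive obstacle; the whole argument is a sequence of definition chases. The only point requiring mild care is to keep the left and right scalar actions distinct — since $\otimes$ need not be commutative on a general semiring, one must never migrate $k$ across $A(uw)$ — but the pointwise approach handles this automatically, provided the scalar stays on the same side throughout each line of the computation.
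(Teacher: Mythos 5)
Your proposal is correct and matches the paper's intent exactly: the paper dismisses this proposition as ``Straightforward,'' and the pointwise definition-chase you carry out (using associativity of concatenation for (i) and the pointwise definitions of $\oplus$ and the scalar products for (ii) and (iii)) is precisely the omitted argument. Your remark about keeping $k$ on the correct side of $\otimes$ in a noncommutative semiring is a sensible precaution and does not indicate any gap.
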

\begin{proof}
 Straightforward.
\end{proof}

When the semiring $S$ is complete, the left quotient operation can be extended from words to series in a natural way. Let $A, X$ be two formal power series in $\allseries$. We define the \emph{left quotient} of $A$ by $X$, denoted by $X^{-1}A$, as
\begin{equation}
\label{eq:X-1A}
X^{-1}A(v)=\sum_{u\in\sa}X(u) (u^{-1}A)(v)=\sum_{u\in\sa}X(u) A(uv). \hspace*{8mm} (v\in\sa)
\end{equation}

Similarly, we can define the \emph{right quotient} of $A$ by $Y$ for any $Y\in \allseries$, denoted by $AY^{-1}$ as
\begin{equation}
\label{eq:AY-1}
AY^{-1}(v)=\sum_{u\in \sa}A(vu) Y(u). \hspace*{8mm} (v\in\sa)
\end{equation}

It is easy to see that when the formal power series $X$ is a word $u$, then $X^{-1}A=u^{-1}A$ and $AX^{-1}=Au^{-1}$. We summarize some algebraic properties of the quotient operation.

\begin{proposition}\label{pro:quotient-fps}
Let $S$ be a complete semiring. Suppose $A,X_1,X_2 \in \allseries$. Then

\begin{itemize}
\item
[(i)] $(X_1\oplus X_2)^{-1}A=X_1^{-1}A\oplus X_2^{-1}A$, $A(Y_1\oplus Y_2)^{-1}=AY_1^{-1}\oplus AY_2^{-1}$;

\item
[(ii)] $X^{-1}(A_1\oplus A_2)=X^{-1}A_1\oplus X^{-1}A_2$, $(A_1\oplus A_2)Y^{-1}=A_1Y^{-1}\oplus A_2Y^{-1}$;

\item
[(iii)] $(X^{\ast})^{-1}A=A\oplus (X^{\ast})^{-1}(X^{-1}A)$, $A(Y^{\ast})^{-1}=A\oplus (AY^{-1})(Y^{\ast})^{-1}$;

\item[(iv)]$X^{-1}(AB)=(X^{-1}A)B\oplus (A^{-1}X)^{-1}B$, $(AB)Y^{-1}=A(BY^{-1})\oplus A(YB^{-1})^{-1}$;

\item  [(v)] $(rX)^{-1}A=r(X^{-1}A)$, $X^{-1}(Ar)=(X^{-1}A)r$, $(rA)Y^{-1}=r(AY^{-1})$, $A(Yr)^{-1}=(AY^{-1})r$;

%\item
%[(iii)] $(X^{-1}A)^R=A^R(X^R)^{-1}$.

\item [(vi)] If $S$ is commutative, then $(X_1X_2)^{-1}A=X_2^{-1}(X_1^{-1}A)$,  $A(Y_1Y_2)^{-1} = (AY_2^{-1})Y_1^{-1}$.

%\item [(v)] If $X=u$, then $X^{-1}A=u^{-1}A$.

%(vii) If $S$ is a complete c-semiring, then $A\leq (AY^{-1})Y$, and $A\leq X(X^{-1}A)$??; $(AY^{-1})Y\leq (AY)Y^{-1}$, $X(X^{-1}A)\leq X^{-1}(XA)$??.---maybe not true.
\end{itemize}
\end{proposition}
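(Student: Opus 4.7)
The plan is to prove each identity by evaluating both sides at an arbitrary $v\in\sa$, unfolding the relevant definitions into explicit infinitary sums over $\sa$ of products of semiring coefficients, and then applying the complete-semiring axioms—associativity and commutativity of the infinitary sum, distributivity of $\otimes$ over $\oplus$, and pullout of $\otimes$ through $\sum_I$—to reindex until the two expressions coincide. The right-sided identities follow from the left-sided ones by a symmetric argument that swaps the roles of prefixes and suffixes, so I will only work through the left-sided versions.

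Items (i), (ii), and (v) will be direct. For (i), I expand $(X_1\oplus X_2)^{-1}A(v)=\sum_u (X_1(u)\oplus X_2(u))\otimes A(uv)$, distribute $\otimes$ over $\oplus$, and split the infinitary sum in two. Item (ii) is analogous, with the distribution occurring inside $A_1(uv)\oplus A_2(uv)$. For (v), I pull the scalar $r$ through the relevant sum using the third complete-semiring axiom, e.g.\ $\sum_u (rX)(u)\otimes A(uv)=\sum_u r\otimes X(u)\otimes A(uv)=r\otimes\sum_u X(u)\otimes A(uv)$.

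For (vi), I expand $(X_1X_2)^{-1}A(v)=\sum_{u_1,u_2} X_1(u_1)\otimes X_2(u_2)\otimes A(u_1u_2v)$ by unfolding the concatenation in the divisor and compare with $X_2^{-1}(X_1^{-1}A)(v)=\sum_{u_1,u_2} X_2(u_2)\otimes X_1(u_1)\otimes A(u_1u_2v)$; the two expressions differ only in the order of the scalars $X_1(u_1)$ and $X_2(u_2)$, so the identity is exactly the commutativity hypothesis. Item (iii) then follows by applying (i) to the Kleene fixed point $X^\ast=\varepsilon\oplus XX^\ast$, using $\varepsilon^{-1}A=A$ from Proposition~\ref{pro:quotient-word}(i), and invoking (vi) to rewrite $(XX^\ast)^{-1}A$ as $(X^\ast)^{-1}(X^{-1}A)$.

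The main obstacle will be (iv). Starting from $X^{-1}(AB)(v)=\sum_u X(u)\otimes\sum_{w_1w_2=uv} A(w_1)\otimes B(w_2)$, I will split the inner sum by comparing $|w_1|$ with $|u|$. When $|w_1|\geq|u|$, parametrising by $w_1=us$ and $v=sw_2$ yields a contribution that reassembles as $\sum_{v=sw_2}(X^{-1}A)(s)\otimes B(w_2)=(X^{-1}A)B(v)$. When $|w_1|<|u|$, parametrising by $u=w_1t$ with $t\neq\varepsilon$ and $w_2=tv$ yields $\sum_{w_1,t\neq\varepsilon} X(w_1t)\otimes A(w_1)\otimes B(tv)$, which after commuting the $X$- and $A$-coefficients becomes $\sum_t (A^{-1}X)(t)\otimes B(tv)=(A^{-1}X)^{-1}B(v)$. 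The delicate points will be the correct bookkeeping at the boundary $|w_1|=|u|$ (the $t=\varepsilon$ and $s=\varepsilon$ subcases must be matched with the $v_1=\varepsilon$ terms of the two reassembled products) and the justification of the scalar reordering, which is what implicitly brings the same commutativity assumption used for (vi) into play.
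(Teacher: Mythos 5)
The paper states this proposition without proof, so the only question is whether your plan would actually go through. For (i), (ii), (v) and (vi) it would: the pointwise unfolding into infinitary sums, followed by the partition and infinite-distributivity axioms of a complete semiring, is exactly what is needed, and your observation that (vi) is precisely the commutativity hypothesis is correct. The right-handed versions do follow symmetrically.

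The gaps are in (iii) and (iv). For (iii) you invoke (vi) to rewrite $(XX^{\ast})^{-1}A$ as $(X^{\ast})^{-1}(X^{-1}A)$, but (vi) is only available under commutativity, whereas (iii) is asserted for an arbitrary complete semiring. This is not a removable blemish: the two sides produce the terms $X(u_1)\otimes X^{\ast}(u_2)\otimes A(u_1u_2v)$ and $X^{\ast}(u_2)\otimes X(u_1)\otimes A(u_1u_2v)$, a cyclic permutation of the scalars, and one can check that over the noncommutative complete semiring of languages, with $\Sigma=\{c,d\}$, $X(c)=\{a\}$, $X(d)=\{b\}$ and $A(cd)=\{e\}$ (zero elsewhere), the left side of (iii) at $\varepsilon$ is $\{abe\}$ while the right side is $\{bae\}$. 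So your argument proves (iii) only under the extra commutativity hypothesis, which is also the only way the claim can be true. For (iv) the commutativity issue recurs in reassembling the $|w_1|<|u|$ terms, and in addition the boundary you describe as ``delicate bookkeeping'' is a genuine obstruction rather than a matching problem: the decompositions with $w_1=u$ contribute $\sum_{u}X(u)\otimes A(u)\otimes B(v)$ exactly once to $X^{-1}(AB)(v)$, but they reappear both as the $s=\varepsilon$ part of $(X^{-1}A)B$ and as the $t=\varepsilon$ part of $(A^{-1}X)^{-1}B$, so the right-hand side counts them twice. Unless $\oplus$ is idempotent this cannot be repaired --- over $(\mathbb{N}\cup\{\infty\},+,\cdot,0,1)$ with $X=A=B=1\varepsilon$ the left side is $1\varepsilon$ and the right side is $2\varepsilon$. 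A correct write-up must therefore either strengthen the hypotheses (commutative, idempotent $\oplus$, e.g.\ a complete c-semiring) or restrict the second summand to $t\neq\varepsilon$; as written, your plan for (iv) stalls exactly at the point you flagged.
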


If the semiring $S$ is commutative, the following lemma shows that the right quotient is dual to the left quotient, where the reversal operation $X^R$ is defined (see Eq.~\ref{eq:A^R}) as $X^R(\sigma_1\sigma_2\cdots\sigma_n)=X(\sigma_n\cdots\sigma_2\sigma_1)$ for any $w=\sigma_1\sigma_2\cdots\sigma_n$.
\begin{lemma}
\label{lemma:right-quotient}
Suppose $S$ is a complete semiring. For $A,X, Y\in\allseries$, if $Y(u)A(vu)=A(vu)Y(u)$, $A(uv)X(u)=X(u)A(uv)$ for any $u,v\in\sa$, then
\begin{eqnarray}
 AY^{-1} &=& ((Y^R)^{-1}A^R)^R\\
 X^{-1}A &=& (A^R(X^{R})^{-1})^R.
 \end{eqnarray}
In particular, if $A$ is a classical language, i.e., $Im(A)\subseteq \{0,1\}$ or if $S$ is commutative, the above equalities hold.
\end{lemma}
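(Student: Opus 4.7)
The plan is to verify both identities by evaluating each side at an arbitrary word $v \in \sa$ and matching them term by term, using only the definitions of $A^R$, of the left and right quotients by a series, and the bijectivity of the reversal map on $\sa$. The commutativity hypothesis is used only at the final step to swap the order of a product of coefficients inside the infinitary sum.

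First I would tackle the identity $AY^{-1} = ((Y^R)^{-1} A^R)^R$. Unfolding the right-hand side at $v \in \sa$ gives
\begin{align*}
\bigl((Y^R)^{-1} A^R\bigr)^R(v)
&= \bigl((Y^R)^{-1} A^R\bigr)(v^R)
 = \sum_{u\in\sa} Y^R(u)\otimes A^R(u v^R)\\
&= \sum_{u\in\sa} Y(u^R)\otimes A\bigl((u v^R)^R\bigr)
 = \sum_{u\in\sa} Y(u^R)\otimes A(v u^R),
\end{align*}
and then reindexing via the bijection $u\mapsto w=u^R$ on $\sa$ produces $\sum_{w\in\sa} Y(w)\otimes A(vw)$. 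On the other hand, $(AY^{-1})(v) = \sum_{w\in\sa} A(vw)\otimes Y(w)$ by definition. The hypothesis $A(vw)\otimes Y(w) = Y(w)\otimes A(vw)$ forces the two sums to agree term by term, which yields the desired equality. The second identity $X^{-1}A = (A^R (X^R)^{-1})^R$ follows by an entirely symmetric calculation, this time using the commutation $A(uv)\otimes X(u) = X(u)\otimes A(uv)$ to match the term indexed by $u$ on the left with the corresponding reversed term on the right.

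For the ``in particular'' statement, I would simply observe that both commutativity hypotheses are automatic in the two stated cases: if $S$ is commutative then every pair of coefficients commutes trivially, and if $\mathrm{Im}(A)\subseteq\{0,1\}$ then each coefficient $A(vu)$ or $A(uv)$ is either the multiplicative identity $1$ (which commutes with everything) or the absorbing element $0$ (which also commutes with everything in the product), so the required identities of coefficients hold pointwise.

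There is no real obstacle here; the only care required is bookkeeping with the reversal operation, in particular recognising that $(uv^R)^R = v u^R$ and that the change of variables $w = u^R$ is a bijection on $\sa$ that preserves the infinitary sum by the complete-semiring axioms. The commutativity hypothesis is used only at one step and is visibly sharp: without it, the summands $A(vw)\otimes Y(w)$ and $Y(w)\otimes A(vw)$ need not coincide, and the two constructions split.
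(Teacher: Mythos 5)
Your proof is correct and is exactly the computation the paper has in mind: the paper dismisses this lemma with ``Straightforward,'' and your unfolding of both sides at a word $v$, the identity $(uv^R)^R = v\,u^R$, the reindexing $w = u^R$, and the single use of the commutation hypothesis to match summands is the intended argument. The only minor quibble is that the hypothesis is not literally ``sharp'' in the sense you claim (the two infinitary sums could in principle coincide without agreeing term by term), but that aside is not part of the proof and does not affect its validity.
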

\begin{proof}
Straightforward.
\end{proof}
By the above result, if the semiring $S$ is commutative, then the right quotient is dual to the left quotient by the reversal operation. Properties of the right quotient can be dually obtained in this case.

\subsection{Closure Properties of Former Power Series under Quotient}
In this subsection, we study the language properties of the quotient of series. We first recall a preliminary result which will be used in the proof.

\begin{lemma} [cf. \cite{droste09}]\label{le:proper}
Suppose $Y$ is a proper and regular series in $\allseries$. Then there exists a weighted automaton ${\cal A}=(Q,\Sigma,\delta,q_0,\{q_f\})$ that recognizes  $Y$ such that
$q_0\not= q_f$ and $\delta(q,\sigma,q_0)=0$ for any $q\in Q$.

\end{lemma}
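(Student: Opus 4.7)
The plan is to take any finite weighted automaton recognizing $Y$ and \emph{normalize} it by adjoining a fresh source $q_0$ and a fresh sink $q_f$, absorbing the weighted initial and final functions $J,G$ into the transitions that touch these two boundary states. The assumption that $Y$ is proper is precisely what makes it safe to keep $q_0$ and $q_f$ distinct, since otherwise the empty word would be forced to have weight $1$ through the automaton but has weight $0$ in $Y$.

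First I would invoke regularity to choose a finite weighted automaton ${\cal B}=(P,\Sigma,\eta,J,G)$ with $|{\cal B}|=Y$, then set $Q=P\cup\{q_0,q_f\}$ (disjoint), let $I,F$ be the characteristic functions of $\{q_0\}$ and $\{q_f\}$, and define $\delta$ by keeping the old transitions $\delta(p,\sigma,p')=\eta(p,\sigma,p')$ for $p,p'\in P$ while introducing the three ``compressed'' transitions
\begin{align*}
\delta(q_0,\sigma,p') &= \sum_{p\in P} J(p)\otimes\eta(p,\sigma,p'),\\
\delta(p,\sigma,q_f) &= \sum_{p'\in P} \eta(p,\sigma,p')\otimes G(p'),\\
\delta(q_0,\sigma,q_f) &= \sum_{p,p'\in P} J(p)\otimes\eta(p,\sigma,p')\otimes G(p').
\end{align*}
All remaining transitions, in particular $\delta(q_f,\sigma,\cdot)$ and $\delta(\cdot,\sigma,q_0)$, are set to $0$, so $q_0$ has no predecessors and $q_f$ has no successors.

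To verify $|{\cal A}|=Y$, I would note that $|{\cal A}|(\theta)=\ds(q_0,\theta,q_f)$ since $I,F$ are characteristic functions, and split on the length of $\theta$. For $\theta=\varepsilon$, $q_0\ne q_f$ gives $\ds(q_0,\varepsilon,q_f)=0=Y(\varepsilon)$ by propriety. For $\theta=\sigma$ a single letter, $\ds(q_0,\sigma,q_f)=\delta(q_0,\sigma,q_f)=Y(\sigma)$ directly from the third formula. For $\theta=\sigma_1\cdots\sigma_n$ with $n\ge 2$, every accepting path must have the form $q_0\to p_1\to p_2\to\cdots\to p_{n-1}\to q_f$ with the intermediate $p_i\in P$ (because $q_0$ has no incoming edges and $q_f$ no outgoing edges); expanding the first and last transitions by the formulas above and using distributivity collapses the total weight to $\sum_{p_0,\ldots,p_n\in P} J(p_0)\otimes\eta(p_0,\sigma_1,p_1)\otimes\cdots\otimes\eta(p_{n-1},\sigma_n,p_n)\otimes G(p_n)=|{\cal B}|(\theta)=Y(\theta)$.

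There is no real obstacle; the only subtle point is remembering to include the three-way combined transition $\delta(q_0,\sigma,q_f)$, without which single-letter words would be accepted with weight $0$ instead of $Y(\sigma)$. Finiteness of ${\cal A}$ is inherited from ${\cal B}$, and the required conditions $q_0\ne q_f$ together with $\delta(q,\sigma,q_0)=0$ for every $q\in Q$ hold by construction.
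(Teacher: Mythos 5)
Your construction is correct: the paper does not prove this lemma at all (it is recalled with a pointer to the Handbook of Weighted Automata), so there is no internal proof to diverge from, and your normalization argument is the standard one. The key points are all in place --- the three compressed boundary transitions, the observation that only paths of the form $q_0\to p_1\to\cdots\to p_{n-1}\to q_f$ with $p_i\in P$ can carry nonzero weight (since $q_0$ has no incoming and $q_f$ no outgoing edges), the collapse via the two-sided distributivity of $\otimes$ over finite sums (so no commutativity of $S$ is needed and the factor order matches $|{\cal B}|$), and the use of properness exactly and only at $\theta=\varepsilon$.
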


\begin{proposition}\label{pro:quotient-closeness}
Let $S$ be a complete semiring. Suppose $A,X,Y$ are former power series in $\allseries$. Then
\begin{itemize}
\item [(i)] If $A$ is regular, then $X^{-1}A$ and $AY^{-1}$ are regular.

\item [(ii)]  If $A$ is DWA-regular, then $X^{-1}A$ and $AY^{-1}$ are DWA-regular.

\item [(iii)] For a commutative semiring $S$, if $A$ is context-free, $X$ and $Y$ are regular, then $X^{-1}A$ and $AY^{-1}$ are context-free.

\item [(iv)] If $A$ is context-free, $X$ and $Y$ are DWA-regular, then $X^{-1}A$ and $AY^{-1}$ are context-free.
\end{itemize}
\end{proposition}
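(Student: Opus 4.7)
The plan is to construct in each case a suitable weighted automaton or weighted context-free grammar for the quotient directly from one for $A$, relying on the distributivity of $\otimes$ over the infinitary sums in $S$ that completeness provides.

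For part (i), let $\mathcal{A}=(Q,\Sigma,\delta,I,F)$ be a finite weighted automaton recognizing $A$. I would retain $Q$, $\delta$, and $F$ and only redefine the initial distribution as
\[
I'(r)\;=\;\sum_{u\in\sa,\,p\in Q} X(u)\otimes I(p)\otimes\ds(p,u,r).
\]
Expanding $A(uv)$ via Eq.~(\ref{eq:theta1and2}) as $\sum_{p,q,r}I(p)\otimes\ds(p,u,r)\otimes\ds(r,v,q)\otimes F(q)$ and pulling the sum over $u$ inward by right distributivity, one verifies that $(Q,\Sigma,\delta,I',F)$ recognizes $X^{-1}A$; the new automaton stays finite because $Q$ does. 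The case $AY^{-1}$ is dual, obtained by replacing $F$ with $F'(p)=\sum_{u,q}\ds(p,u,q)\otimes F(q)\otimes Y(u)$.

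For part (ii), Proposition~\ref{pro:dffa}(v) makes the set $\{B_1,\dots,B_k\}=\{u^{-1}A:u\in\sa\}$ of left quotients of $A$ finite. Partitioning $\sa$ into blocks $U_i=\{u:u^{-1}A=B_i\}$ and setting $c_i=\sum_{u\in U_i}X(u)\in S$, distributivity gives $X^{-1}A=\sum_{i}c_iB_i$, hence $w^{-1}(X^{-1}A)=\sum_{i}c_i(w^{-1}B_i)$ for every $w\in\sa$. By Proposition~\ref{pro:quotient-word}(i) each $w^{-1}B_i$ is itself a left quotient of $A$, so $w^{-1}(X^{-1}A)$ is determined by the tuple $(w^{-1}B_1,\dots,w^{-1}B_k)\in\{B_1,\dots,B_k\}^k$ and therefore takes at most $k^k$ distinct values. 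Thus $X^{-1}A$ has only finitely many left quotients and is DWA-regular by Proposition~\ref{pro:dffa}(v); the right-quotient case is symmetric.

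For part (iii), I would adapt the classical triple/bracket construction that proves context-free languages closed under quotient by regular languages. Starting from a WCFG $G_A=(\Sigma,N,Z_0,P_A)$ for $A$ and a finite weighted automaton $\mathcal{B}=(P,\Sigma,\eta,J,G)$ for $X$, one introduces new non-terminals that tag each grammar symbol with a pair of $\mathcal{B}$-states and a mode bit (whether the subtree lies entirely inside the prefix $u$, entirely inside the suffix $v$, or spans the split $u|v$), so that a complete derivation encodes a $G_A$-derivation of some $uv$ together with an accepting $\mathcal{B}$-run on $u$, and the product of rule weights is exactly the corresponding product of the $\eta$-weight and the $G_A$-weight. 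Summing over derivations then yields $\sum_{u}X(u)A(uv)=X^{-1}A(v)$. Part (iv) specialises the same construction: by Proposition~\ref{pro:dffa}(iv) write $X=\sum_{i=1}^{n}r_{i}L_{i}$ with the $L_{i}$ pairwise disjoint classical regular languages, giving $X^{-1}A=\sum_{i}r_{i}\,L_{i}^{-1}A$; because each $L_{i}$ is $\{0,1\}$-valued, the $\mathcal{B}$-weights arising from $L_{i}$ commute trivially with everything and the construction works over any complete semiring, while context-freeness is preserved by finite sums and left scalar products. Right quotients are treated dually throughout.

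The main obstacle is the bookkeeping in part (iii): along a single parse tree the $\mathcal{B}$-weights attached to the prefix portion interleave with the grammar-rule weights in an order dictated by the tree rather than by the left-to-right order of the generated terminals, so one must commute these scalars past one another to collect the product into a clean $X(u)\otimes A(uv)$ form. This is precisely where commutativity of $\otimes$ is needed, which explains the hypothesis in (iii); in parts (i), (ii) and (iv) the corresponding computations reduce to routine manipulations of complete (c-)semiring sums and no commutativity is required.
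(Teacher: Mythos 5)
Your proposal is correct, and for parts (i), (iii) and (iv) it follows essentially the same route as the paper: (i) is verbatim the paper's construction of ${\cal A}_X$ and ${\cal A}_Y$ by re-weighting the initial (resp.\ final) distribution; (iii) is the paper's triple construction on a WCFG for $A$ and a weighted automaton for the proper part of the quotienting series, with commutativity invoked exactly where the paper invokes it (to rearrange automaton weights past grammar weights along a derivation); and (iv) reduces to the classical-language case just as the paper does, the only cosmetic difference being that you decompose the series $X=\sum_i r_iL_i$ via Proposition~\ref{pro:dffa}(iv) while the paper decomposes the DWA by final states as $Y=\sum_{q}Y_qF(q)$. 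The genuine divergence is in part (ii), for the left quotient $X^{-1}A$: the paper writes $A=\sum_i r_iL_i$, applies Lemma~\ref{lemma:right-quotient} to convert $X^{-1}L_i$ into the reversal $(L_i^R(X^R)^{-1})^R$ of a right quotient, and then appeals to the $AY^{-1}$ case plus closure of DWA-regularity under reversal; you instead count left quotients directly, showing $X^{-1}A=\sum_i c_iB_i$ over the finitely many quotients $B_i$ of $A$ and hence $w^{-1}(X^{-1}A)=\sum_i c_i(w^{-1}B_i)$ ranges over at most $k^k$ series, so Proposition~\ref{pro:dffa}(v) applies. Your argument is the more economical one: it avoids the detour through reversal (whose preservation of DWA-regularity the paper leaves implicit) and uses only the infinite distributivity of a complete semiring together with the Myhill--Nerode-style characterization; the paper's version, in exchange, produces an explicit automaton via already-established operations. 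Your treatment of $AY^{-1}$ in (ii) as ``symmetric'' deserves one line of justification --- $w^{-1}(AY^{-1})=(w^{-1}A)Y^{-1}$, so the left quotients of $AY^{-1}$ are among the finitely many $B_iY^{-1}$ --- but this is immediate from the definitions.
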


\begin{proof}
See Appendix A.
\end{proof}

Using the quotient of a series $A$ by a series, we have a related deterministic weighted automaton
\begin{equation}
\label{eq:B_A}
{\cal B}_A=(\{X^{-1}A | X\in \allseries\}, \Sigma, \delta,A,F),
\end{equation}
where $\delta(X^{-1}A,\sigma)=(X\sigma)^{-1}A$, and $F(X^{-1}A)=\sum_{u\in\sa}X(u)A(uv)$.
%\lisj{% The definition seems problematic: what are the terminate states? Note $A$ as a series may not contain a word $u$.}
Because the accessible part of ${\cal B}_A$ is the minimal DWA ${\cal M}_A'$ (cf. (\ref{eq:M-A'})), hence $|{\cal B}_A|=A$.

Recall in the classical case, if $A$ is a regular language over $\Sigma$, then $\{X^{-1}A | X\subseteq \sa\}$ (the set of all left quotients of $A$ by languages) is a finite set. This does not hold in general. In fact, the finiteness of $Q=\{X^{-1}A | X\subseteq \allseries\}$ heavily depends on the finiteness of the semiring $S$.

\begin{proposition}\label{pro:quotient automaton}
Let $S$ be a semiring. Then $S$ is finite if and only if ${\cal B}_A$ is finite for any DWA-regular series $A$ in $\allseries$, where  $\mathcal{B}_A$ is the weighted automaton that recognizes $A$ defined in (\ref{eq:B_A}).
\end{proposition}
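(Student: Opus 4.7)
The plan is to prove the two directions separately, with the forward direction resting on the finite-state structure of DWA-regular series and the converse handled by a contrapositive construction.

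For the direction asserting that finiteness of $S$ implies finiteness of $\mathcal{B}_A$, I would start from the observation that, by Proposition \ref{pro:dffa}, a DWA-regular $A$ has only finitely many distinct left quotients by words; list them as $B_1,\ldots,B_n$, where each $B_i = u_i^{-1}A$. Using the defining formula $X^{-1}A(v)=\sum_{u\in\sa}X(u)A(uv) = \sum_{u\in\sa}X(u)(u^{-1}A)(v)$ and grouping words by their equivalence class under $\equiv_A$, one obtains
\begin{equation*}
X^{-1}A = \bigoplus_{i=1}^n c_i(X)\,B_i, \quad \text{where } c_i(X)=\sum_{u\,:\,u^{-1}A=B_i} X(u)\in S.
\end{equation*}
This rewriting is legitimate because $S$ is complete and left scalar multiplication distributes over the infinite sum (the axioms of a complete semiring). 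Consequently, each $X^{-1}A$ is determined by the tuple $(c_1(X),\ldots,c_n(X))\in S^n$, so the number of distinct left quotients is bounded by $|S|^n$, which is finite.

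For the converse, I would argue by contrapositive: assume $S$ is infinite and exhibit a single DWA-regular series $A$ for which $\mathcal{B}_A$ has infinitely many states. The natural candidate is the ``characteristic'' series of the empty word, $A=\varepsilon$, defined by $A(\varepsilon)=1$ and $A(w)=0$ for $w\neq\varepsilon$; this is clearly accepted by a two-state DWA. A direct computation gives
\begin{equation*}
X^{-1}A(v)=\sum_{u\in\sa}X(u)A(uv)=\begin{cases} X(\varepsilon), & v=\varepsilon,\\ 0, & v\neq\varepsilon,\end{cases}
\end{equation*}
so $X^{-1}A=X(\varepsilon)\cdot\varepsilon$. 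As $X$ ranges over $\allseries$, the value $X(\varepsilon)$ ranges over all of $S$, yielding the set of states $\{r\cdot\varepsilon\mid r\in S\}$, which is in bijection with $S$ and therefore infinite.

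I expect the forward direction to be the technical step that requires care: the passage from the formal infinite sum in the definition of $X^{-1}A$ to the finite decomposition $\bigoplus_{i=1}^n c_i(X)B_i$ relies on the complete-semiring axioms (associativity/commutativity of infinite sums and distributivity of $\otimes$ over $\sum$), and I would cite these axioms explicitly to justify the regrouping. The converse, by contrast, reduces to the very clean example above, so the hard part of the argument is essentially bookkeeping on the complete-semiring side rather than any genuine combinatorial difficulty.
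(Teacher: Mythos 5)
Your proof is correct, but both halves diverge from the paper's argument in instructive ways. For the direction ``$S$ finite $\Rightarrow$ ${\cal B}_A$ finite'', the paper decomposes $A=\sum_{i=1}^k r_iL_i$ (via Proposition~\ref{pro:dffa}(iv)) and $X=\sum_{i=1}^k r_iX_i$ by the level sets of $X$, obtains $X^{-1}A=\sum_{i,j}(r_i\otimes r_j)X_i^{-1}L_j$, and then invokes the \emph{classical} fact that a regular language has only finitely many quotients by arbitrary languages. Your route instead groups the words $u$ by the Myhill--Nerode classes of $A$ and writes $X^{-1}A=\bigoplus_{i=1}^n c_i(X)B_i$ with $c_i(X)\in S$, giving the explicit bound $|S|^n$; this avoids the detour through classical quotients entirely, keeps all the bookkeeping inside the complete-semiring axioms (which you correctly identify as the only thing needing justification), and is arguably cleaner --- it also sidesteps a small wrinkle in the paper's version, namely that $X_i^{-1}L_j$ computed as a series over $S$ need not remain $\{0,1\}$-valued. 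For the converse, the paper also argues by exhibiting a witness, but uses $A=\sa$ (constant weight $1$) with $X_r=ra$ for $a\in\Sigma$, so that $X_r^{-1}A=\overline{r}$ is the constant series with value $r$; your witness $A=\varepsilon$ with $X^{-1}A=X(\varepsilon)\varepsilon$ is equally valid and equally simple --- both produce $|S|$ pairwise distinct states. Neither difference affects correctness; your forward direction is the more self-contained of the two.
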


\begin{proof}
Suppose ${\cal B}_A$ is finite for any DWA-regular series $A$. In particular, ${\cal B}_A$ is finite for $A=\sa$. Take $a\in \Sigma$, for any $r\in S$, let $X_r=ra$ (which is a series over $\Sigma$). By a simple calculation, the left quotient $X_r^{-1}A$ is $\sum_{w\in\sa}rw$, i.e., $X_r^{-1}A=\overline{r}$, where $\overline{r}(w)=r$ for any $w\in\sa$. Then, as a subset of $\{X^{-1}A | X\in\allseries\}$, the set $\{\overline{r} | r\in S\}$ is finite. Hence, $S$ is a finite set.

Conversely, suppose $S=\{r_1,\cdots,r_k\}$ is finite. For any $X\in\allseries$, let $X_i=\{w\in\sa | X(w)=r_i\}$. Then $X=\sum_{i=1}^k r_iX_i$. Since $A$ is regular, there exist finite regular languages $L_1,\cdots,L_k$ such that $A=\sum_{i=1}^k r_iL_i$. By Proposition \ref{pro:quotient-fps}, it follows that $X^{-1}A=\sum_{i=1}^k\sum_{j=1}^k(r_i\otimes r_j)X_i^{-1}L_j$. Since $L_j$ is regular, the set $\{X_i^{-1}L_j |X\subseteq \sa\}$ is finite. It follows that the set $\{X^{-1}A |X\in \allseries\}$ is also finite.
\end{proof}

\par

In the remainder of this paper, we will consider the residual operations on formal power series.
\section{Residuals and Factorizations}

In this section, we study the residuals and factorizations of formal power series. For $A,X\in \allseries$, the \emph{left residual} of $A$ by $X$, written as $X\backslash A$, is defined as the largest $Y\in\allseries$ such that $XY\leq A$.  Similarly, we define the \emph{right residual} of $A$ by $X$, written as $A/X$, as the largest $Y\in\allseries$ such that $YX\leq A$. We note that  $X\backslash A$ and $A/X$ need not exist for all $A,X$.

\subsection{Residuals of Formal Power Series}
The following proposition shows that residuals always exist when $S$ is a complete c-semiring. We recall that each c-semiring is commutative.

\begin{proposition}
Suppose $S$ is a complete c-semiring. For any $A,X,Y\in\allseries$, $X\backslash A$ and $A/Y$ do exist and we have
\begin{eqnarray}
 X\backslash A &=& \sum\{Z | XZ\leq A\},\\
 A/Y &=& \sum\{Z | ZY\leq A\}.
\end{eqnarray}
\end{proposition}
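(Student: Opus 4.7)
The plan is to exhibit an explicit witness. Define
\[
\bar Z \;:=\; \sum\{Z\in\allseries \,:\, XZ\leq A\},
\]
i.e., the sum (which in a complete c-semiring is the join) of all series that already satisfy the inequality. Once I show (i) the indexing family is nonempty, (ii) $\bar Z$ really is an element of $\allseries$, and (iii) $\bar Z$ itself satisfies $X\bar Z\leq A$, then $\bar Z$ is an upper bound of the family that lies \emph{inside} the family, hence the largest element, which is exactly $X\backslash A$ by the definition given. The formula for $A/Y$ is obtained by the symmetric argument with $\bar Z := \sum\{Z : ZY\leq A\}$.

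Items (i) and (ii) are immediate: the zero series belongs to the family, and because $S$ is a complete c-semiring the pointwise sum $(\sum_i B_i)(w)=\sum_i B_i(w)=\bigvee_i B_i(w)$ is defined for any family $\{B_i\}\subseteq\allseries$ and any index set. The heart of the proof is item (iii). I will compute, for a fixed $w\in\sa$, the coefficient $(X\bar Z)(w)=\sum_{w=uv}X(u)\otimes \bar Z(v)$, substitute $\bar Z(v)=\bigvee_{Z}Z(v)$, and pull the $\otimes$ inside the infinite join using the quantale distributive law (Eq.~\ref{eq: quantale-dist}). Because both $\sum_{w=uv}$ and $\bigvee_{Z}$ are simply suprema in the complete c-semiring, they commute; this yields $(X\bar Z)(w)=\bigvee_{Z}(XZ)(w)$. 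Since $(XZ)(w)\leq A(w)$ for every $Z$ in the family and joins respect upper bounds, we conclude $(X\bar Z)(w)\leq A(w)$, i.e., $X\bar Z\leq A$.

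The main obstacle is precisely the interchange of the (finite) Cauchy sum $\sum_{w=uv}$ with the (possibly uncountable) join $\bigvee_Z$ in the step above. This is not at all automatic for a generic semiring, and it is the reason the hypothesis of a complete c-semiring was imposed in the first place: the structural fact that $S$ is a commutative quantale, established earlier in the paper, supplies exactly the infinite distributivity needed to legalise this swap. Once that is in place, no induction on word length and no case analysis on the coefficient structure of $A$, $X$ or $Y$ is required, and the argument for the right residual $A/Y$ is a verbatim mirror image on the opposite side of the Cauchy product.
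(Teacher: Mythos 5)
Your proof is correct and follows essentially the same route as the paper's: both verify that the Cauchy product distributes over arbitrary (infinitary) sums — i.e., that $\allseries$ is a quantale under $\sum$ and concatenation — so that $\bar Z=\sum\{Z \mid XZ\leq A\}$ satisfies $X\bar Z=\sum\{XZ \mid XZ\leq A\}\leq A$ and is therefore the largest solution; you merely spell out at the coefficient level what the paper asserts in one line. One small misattribution worth noting: the interchange of $\sum_{w=uv}$ with $\bigvee_Z$ already follows from completeness alone (axioms (ii) and (iii) of a complete semiring), whereas the idempotence of $\oplus$ (the c-semiring part) is what is really needed to identify $\sum$ with $\bigvee$ and to conclude $\bigvee_Z (XZ)(w)\leq A(w)$ from the termwise bounds — but this does not affect the validity of your argument.
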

\begin{proof}
Because $S$ is a complete c-semiring, we know $\sum_{i\in I}Z_i$ exists for any subset $\{Z_i | i\in I\}$ of $\allseries$. Moreover, the concatenation operation satisfies the following conditions:
\begin{eqnarray}
X(\sum_{i\in I}Y_i) &=& \sum_{i\in I} XY_i,\\
(\sum_{i\in I}Y_i)X &=& \sum_{i\in I} Y_iX.
\end{eqnarray}
That is to say, $\allseries$ is a quantale under the operations $\sum$ and concatenation. It is then easy to see that  $X (\sum\{Y | XY\leq A\}) \leq A$ and hence  $\sum\{Y | XY\leq A\}$ is the largest $Z$ such that $XZ\leq A$, i.e., $A\backslash Y= \sum\{Y | XY\leq A\}$. Similarly, we have $A/Y=\sum\{X | XY\leq A\}$.
\end{proof}

%The left quotient operation is the adjunction of the left concatenation operation. Given $A\in\allseries$, $u\in\sa$,  the \emph{left concatenation} of $A$ by $u$, written $uA$, is defined for each $\theta\in\sa$ as
%\[
%uA(\theta) = \left\{
%     \begin{array}{rl}
%    A(w), & \mbox{if $\theta=uw$ for some $w$}, \\
%                       0, & {\rm otherwise.}
% \end{array}
%  \right.
%\]

The following proposition shows that left residual and right residual are dual to each other.
\begin{proposition}
\label{prop:dual-residuals}
Let $S$ be a complete c-semiring.  Suppose $A,X,Y\in\allseries$. Then we have
$(X\backslash A)^R=A^R/X^R$ and $(A/Y)^R$ $=Y^R\backslash A^R$.
\end{proposition}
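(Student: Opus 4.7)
The plan is to exploit the explicit sum characterizations established in the immediately preceding proposition, namely $X\backslash A = \sum\{Z : XZ\leq A\}$ and $A^R/X^R = \sum\{W : WX^R\leq A^R\}$, together with the fact that reversal is a bijection of $\allseries$ onto itself that interacts well with each of the ingredients in these formulas.

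First I would record four auxiliary facts about reversal, all of which follow at once from the pointwise definition $Z^R(w) = Z(w^R)$ and the involution $(w_1w_2)^R = w_2^R w_1^R$ on words: (a) reversal is an involution on $\allseries$, so $Z\mapsto Z^R$ is a bijection; (b) reversal is order-preserving in both directions, i.e.\ $Z\leq A$ iff $Z^R\leq A^R$, since the partial order on $\allseries$ is defined pointwise; (c) reversal reverses concatenation, $(XZ)^R = Z^R X^R$, by a direct unfolding of the Cauchy product; and (d) reversal preserves arbitrary sums, $\bigl(\sum_{i\in I} Z_i\bigr)^R = \sum_{i\in I} Z_i^R$, because the infinitary sum in $\allseries$ is defined pointwise from the complete c-semiring operation on $S$.

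Granted these, the first identity $(X\backslash A)^R = A^R / X^R$ follows from the chain
\begin{align*}
(X\backslash A)^R
 &= \Bigl(\textstyle\sum\{Z : XZ\leq A\}\Bigr)^{\!R}
 = \textstyle\sum\{Z^R : XZ\leq A\} \\
 &= \textstyle\sum\{Z^R : Z^R X^R\leq A^R\}
 = \textstyle\sum\{W : W X^R\leq A^R\}
 = A^R / X^R,
\end{align*}
where the second equality invokes (d), the third combines (b) and (c), the fourth re-indexes via the bijection $Z\mapsto W:=Z^R$ supplied by (a), and the last step is the sum characterization of the right residual. The second identity $(A/Y)^R = Y^R\backslash A^R$ is obtained by the symmetric argument, or more economically by applying the first identity with $A^R$ and $Y^R$ substituted for $A$ and $X$ and then reversing both sides using involutivity.

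The argument is essentially bookkeeping and contains no genuine obstacle; the only step that deserves a moment of care is the re-indexing, i.e.\ verifying that $\{Z : XZ\leq A\}$ and $\{W : WX^R\leq A^R\}$ really are in bijection under reversal, which is exactly what facts (a), (b), and (c) jointly ensure. No further structure of $S$ beyond the already-noted quantale structure of $\allseries$ is used.
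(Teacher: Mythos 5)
Your argument is correct and fills in exactly what the paper dismisses as ``Straightforward'': reducing both identities to the sum characterizations $X\backslash A=\sum\{Z : XZ\leq A\}$ and $A^R/X^R=\sum\{W : WX^R\leq A^R\}$ from the preceding proposition, via the four reversal facts, is the natural route, and it correctly avoids the circularity of invoking the later pointwise characterization of residuals (whose proof in the paper itself relies on this duality). The only step worth flagging explicitly is that $(XZ)^R=Z^RX^R$ silently uses commutativity of $\otimes$ (the unfolded Cauchy products produce the factors in opposite orders), which is guaranteed here because every c-semiring is commutative.
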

\begin{proof}
Straightforward.
\end{proof}
By the above result, in the following, we often consider only one (either left or right) residual. Dual results can be applied to the other residual.

Left residual has close connection with left quotient.
\begin{proposition}\label{pro:quotient-word}
Let $S$ be a complete c-semiring. Suppose $A$ is a formal power series in $\allseries$, $u$ is a word in $\sa$. Then $u^{-1}A$ is the largest language $Y$ such that $uY\leq A$, i.e., $u\backslash A = u^{-1}A$.
\end{proposition}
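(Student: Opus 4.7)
The plan is to prove the two inequalities that together characterize $u^{-1}A$ as the largest $Y \in \allseries$ satisfying $uY \leq A$. Throughout, we identify the word $u$ with the series $u \in \allseries$ whose coefficient is $1$ at $u$ and $0$ elsewhere. The key structural fact that makes everything collapse is that, in the free monoid $\sa$, a factorization $w_1 w_2 = w$ with $w_1 = u$ is unique when it exists, so the sums defining concatenation degenerate to a single term.

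First I would verify that $u^{-1}A$ is a candidate, i.e.\ that $u(u^{-1}A) \leq A$. By the definition of the concatenation, for any $w \in \sa$,
\begin{equation*}
(u(u^{-1}A))(w) = \sum\{u(w_1) \otimes (u^{-1}A)(w_2) \mid w_1 w_2 = w\}.
\end{equation*}
Since $u(w_1) = 0$ unless $w_1 = u$, the only surviving term occurs precisely when $w$ admits the factorization $w = uv$ for some $v \in \sa$, in which case the sum reduces to $1 \otimes (u^{-1}A)(v) = A(uv) = A(w)$; otherwise the sum is $0$. Thus $(u(u^{-1}A))(w)$ equals $A(w)$ when $u$ is a prefix of $w$ and equals $0$ otherwise. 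Since $0 \leq_S A(w)$ in a c-semiring, we obtain $u(u^{-1}A) \leq A$.

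Next I would establish maximality: if $Y \in \allseries$ satisfies $uY \leq A$, then $Y \leq u^{-1}A$. Fix any $v \in \sa$ and evaluate the left-hand inequality at the word $uv$. The same collapse as above gives $(uY)(uv) = Y(v)$, because the only factorization $w_1 w_2 = uv$ with $u(w_1) \neq 0$ is $w_1 = u$, $w_2 = v$. Combining with the hypothesis $uY \leq A$ yields $Y(v) = (uY)(uv) \leq_S A(uv) = (u^{-1}A)(v)$, and since $v$ was arbitrary, $Y \leq u^{-1}A$.

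These two steps show that $u^{-1}A$ both lies in $\{Y \mid uY \leq A\}$ and dominates every element of that set, hence it is the largest such $Y$; equivalently, $u\backslash A = u^{-1}A$. There is no real obstacle here beyond making the prefix-uniqueness argument precise; the only point that requires any c-semiring structure is the existence of the partial order $\leq_S$ (so that the inequality $0 \leq_S A(w)$ used in the first step is meaningful), which is exactly the setting the proposition assumes.
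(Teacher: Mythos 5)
Your proof is correct and follows essentially the same route as the paper's: first verify $u(u^{-1}A)\leq A$ by observing that the concatenation sum collapses to the single factorization $w=uv$, then obtain maximality from $Y(v)=(uY)(uv)\leq A(uv)=(u^{-1}A)(v)$. You simply spell out the prefix-uniqueness collapse in more detail than the paper does.
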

\begin{proof}
Note that $(u(u^{-1}A))(\theta)=u^{-1}A(v)=A(uv)$ if $\theta=uv$ and $0$ otherwise. It follows that $u(u^{-1}A)\leq A$. If $uY\leq A$, then $Y(v)=(uY)(uv)\leq A(uv)=(u^{-1}A)(v)$ for any $v\in\sa$. Hence, $Y\leq u^{-1}A$. Therefore $u^{-1}A$ is the largest  $Y$ such that $uY\leq A$.
\end{proof}
In the classical case, given two languages $A,X$ over $\Sigma$, we always have
\begin{eqnarray}
X\backslash A &=& \bigcap \{u^{-1}A | u\in X\}\\
A/X &=& \bigcap\{Au^{-1} | u\in X\}.
\end{eqnarray}
In the weighted case,  residuals are usually not the intersection of a set of quotients by word. But we can still represent residuals in terms of quotients by word. Before give the details, we summarize some basic algebraic properties of the residuals.

\begin{proposition}\label{pro:residual quotient}
Let $S$ be a complete c-semiring. For $A,X,Y\in\allseries$, we have
\begin{itemize}
\item
[(i)] $X(X\backslash A)\leq A$, $(A/Y)Y\leq A$.

\item [(ii)] The operations $cl,cr: \allseries\rw \allseries$ defined by $cl(X)=A/(X\backslash A)$ and $cr(Y)=(A/Y)\backslash A$ are two closure operators, where $\tau: \allseries \rw \allseries$ is a closure operator if $X\leq \tau(X)$, $\tau(X_1\vee X_2)=\tau(X_1)\vee \tau(X_2)$ and $\tau(\tau(X))=\tau(X)$.

\item [(iii)] $(XY)\backslash A=Y\backslash(X\backslash A)$, $A/(YX)=(A/X)/Y$.

\item [(iv)] $(X_1\oplus X_2)\backslash A=X_1\backslash A \wedge X_2\backslash A$, $A/(Y_1\oplus Y_2)=A/Y_1\wedge A/Y_2$.

\item [(v)] $X\backslash (A\wedge B)=(X\backslash A)\wedge (Y\backslash B)$.

\item [(vi)] $(A/(X\backslash A))\backslash A=X\backslash A$, $A/((A/Y)\backslash A)=A/Y$.

%(ix) In general, $X\backslash(A\oplus B)\not= X\backslash A\oplus X\backslash B$.
\end{itemize}
\end{proposition}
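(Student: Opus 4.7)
The plan is to extract everything from the single adjunction
\[ XZ \leq A \iff Z \leq X\backslash A \iff X \leq A/Z, \]
which characterises both residuals through the quantale structure on $\allseries$ already used in the preceding proposition. With this equivalence in hand every item reduces to a short order-theoretic calculation and bypasses any manipulation of coefficients.

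I would dispatch (i) first: feeding $Z:=X\backslash A$ into the right-to-left direction of the adjunction (using the tautology $X\backslash A\leq X\backslash A$) gives $X(X\backslash A)\leq A$, and the right-residual version is symmetric. For (iii), for arbitrary $Z$ I would chain
\[ Z\leq (XY)\backslash A \iff (XY)Z\leq A \iff X(YZ)\leq A \iff YZ\leq X\backslash A \iff Z\leq Y\backslash(X\backslash A), \]
and then conclude by the uniqueness of the largest element satisfying this predicate; the identity $A/(YX)=(A/X)/Y$ is dual. For (iv), I would use that $\oplus$ is the join in the c-semiring and that concatenation distributes over sums, so $(X_1\oplus X_2)Z=X_1Z\oplus X_2Z\leq A$ splits as the conjunction $X_1Z\leq A$ and $X_2Z\leq A$, equivalently $Z\leq (X_1\backslash A)\wedge (X_2\backslash A)$. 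Part (v) follows by exactly the same idiom, splitting $XZ\leq A\wedge B$ into $XZ\leq A$ and $XZ\leq B$. For (vi) I would invoke the triangle identities for Galois connections: setting $f(X)=X\backslash A$ and $g(Y)=A/Y$, the adjunction forces $fgf=f$ and $gfg=g$, and unpacking these equations produces exactly the two claimed equalities.

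Finally for (ii), extensivity $X\leq cl(X)=A/(X\backslash A)$ is the adjoint transpose of (i), while idempotency $cl(cl(X))=cl(X)$ is the instance $A/((A/(X\backslash A))\backslash A)=A/(X\backslash A)$ of (vi); the same two moves handle $cr$ by symmetry. The step I expect to be the main obstacle is the binary-join clause $cl(X_1\vee X_2)=cl(X_1)\vee cl(X_2)$: monotonicity of $cl$ (obtained as the composition of two order-reversing maps, witnessed concretely by (iv)) gives $cl(X_1)\vee cl(X_2)\leq cl(X_1\vee X_2)$ for free, but the reverse inequality is not formal, since the right-adjoint $A/\cdot$ generally sends joins to meets rather than the other way round. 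To recover it I would first rewrite $(X_1\vee X_2)\backslash A=(X_1\backslash A)\wedge(X_2\backslash A)$ by (iv) and then argue directly, unfolding via the word-level identity $u\backslash A=u^{-1}A$ from Proposition~\ref{pro:quotient-word} so that the claim reduces to a pointwise statement in $S$ where the c-semiring's distributivity can be applied; this is the step I would examine most carefully.
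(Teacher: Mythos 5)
Your adjunction-based strategy is sound and, for the one item the paper actually proves, coincides with its argument: the paper establishes only (vi), by exactly the two moves you package as the triangle identities of the Galois connection (extensivity $X\leq A/(X\backslash A)$ plus antitonicity of $(\cdot)\backslash A$ gives $(A/(X\backslash A))\backslash A\leq X\backslash A$, and $(A/(X\backslash A))(X\backslash A)\leq A$ together with the defining maximality of $\backslash$ gives the reverse), while (i), (iii)--(v) and the rest of (ii) are dismissed as straightforward. Your treatments of (i), (iii), (iv), (v), and of extensivity and idempotency in (ii), are all correct; note only that (v) as printed contains a typo ($Y\backslash B$ should read $X\backslash B$), which your argument silently corrects.

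The step you flagged as the main obstacle is, however, not merely delicate: it fails, so the pointwise unfolding you propose cannot rescue it. The equality $cl(X_1\vee X_2)=cl(X_1)\vee cl(X_2)$ is false in general. Take $S=\mathbb{B}$ (a complete c-semiring), $\Sigma=\{a,b\}$, $A=\{ab,ba\}$, $X_1=\{a\}$, $X_2=\{b\}$. Then $X_1\backslash A=\{b\}$ and $X_2\backslash A=\{a\}$, so $cl(X_1)=A/\{b\}=\{a\}$, $cl(X_2)=A/\{a\}=\{b\}$, and $cl(X_1)\vee cl(X_2)=\{a,b\}$; but by (iv), $(X_1\vee X_2)\backslash A=\{b\}\wedge\{a\}=0$, and $A/0$ is the top series (every $Z$ satisfies $Z\cdot 0=0\leq A$), so $cl(X_1\vee X_2)=\sa\neq\{a,b\}$. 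What does hold --- and what the standard notion of closure operator requires --- is monotonicity: $cl=g\circ f$ with $f=(\cdot)\backslash A$ and $g=A/(\cdot)$ both antitone, hence $X_1\leq X_2$ implies $cl(X_1)\leq cl(X_2)$, which yields only $cl(X_1)\vee cl(X_2)\leq cl(X_1\vee X_2)$, or equivalently the weaker identity $cl(X_1\vee X_2)=cl(cl(X_1)\vee cl(X_2))$. The defect is in the paper's nonstandard formulation of ``closure operator'' in (ii), not in your method; your instinct to distrust exactly that clause was the right one, and the paper's claim that it is ``straightforward'' does not survive scrutiny.
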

\begin{proof}
We give the proof of (vi), the others are straightforward. By (ii), $X\leq A/(X\backslash A)$, then it follows that $(A/(X\backslash A))\backslash A\leq X\backslash A$. On the other hand, since $A/(X\backslash A) (X\backslash A)\leq A$, it follows that $X\backslash A\leq (A/(X\backslash A))\backslash A$. Hence, $(A/(X\backslash A))\backslash A=X\backslash A$. Similarly, we have $A/((A/Y)\backslash A)=A/Y$.
\end{proof}

We next give the characterization of residuals in terms of quotients by word, where $\rw$ is the residual operation in the quantale $S$ defined (cf. (\ref{a-rw-b})).

\begin{proposition}\label{pro:characterization of residual quotient}
Let $S$ be a complete c-semiring.
Suppose $X,Y,A\in \allseries$. Then, for any $v\in \sa$, we have
\begin{eqnarray}
\label{eq:X|A}
X\backslash A(v) &=& \bigwedge\{X(u)\rw A(uv) | u\in \sa\}\\
\label{eq:A|Y}
A/Y(v)                  &=& \bigwedge\{Y(u)\rw A(vu) | u\in \sa\}.
\end{eqnarray}
\end{proposition}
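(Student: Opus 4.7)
The plan is to prove the first identity directly from the definition of $X\backslash A$ as the largest $Z$ satisfying $XZ\le A$, and then to derive the second identity either by a symmetric argument or by invoking Proposition~\ref{prop:dual-residuals} together with the reversal operation.

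First I would exploit the fact that in a complete c-semiring $\oplus$ is idempotent, so the infinitary sum $\sum$ coincides with the pointwise join $\bigvee$ on $\allseries$. Consequently, for any $Z\in\allseries$ and any $w\in\sa$,
\begin{equation*}
(XZ)(w)=\bigvee_{u_1u_2=w}X(u_1)\otimes Z(u_2).
\end{equation*}
Thus the condition $XZ\le A$ is equivalent, by the definition of the partial order on $\allseries$ and the basic join property $\bigvee_i a_i\le b \Lrw a_i\le b$ for all $i$, to the pointwise condition
\begin{equation*}
X(u)\otimes Z(v)\le A(uv)\quad\text{for all }u,v\in\sa.
\end{equation*}

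Next, I would apply the adjunction (residuation) property (\ref{eq:quantale-adj}) of the underlying quantale $S$, which tells us that $X(u)\otimes Z(v)\le A(uv)$ is equivalent to $Z(v)\le X(u)\rw A(uv)$. Quantifying over $u$ and taking infima, this condition becomes
\begin{equation*}
Z(v)\le\bigwedge_{u\in\sa}\bigl(X(u)\rw A(uv)\bigr)\quad\text{for all }v\in\sa.
\end{equation*}
Writing $W(v)$ for the right-hand side, the previous equivalences show that $W$ itself satisfies $XW\le A$ (since each coefficient bound holds), and that every $Z$ with $XZ\le A$ satisfies $Z\le W$ pointwise. Hence $W$ is the largest such series, so $W=X\backslash A$, which is exactly (\ref{eq:X|A}).

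For (\ref{eq:A|Y}), I would repeat the argument with $ZY\le A$ in place of $XZ\le A$, now unfolding to $Z(v)\otimes Y(u)\le A(vu)$ for all $u,v$ and using commutativity of $\otimes$ together with the adjunction to obtain $Z(v)\le Y(u)\rw A(vu)$. Alternatively, one could apply Proposition~\ref{prop:dual-residuals}, compute $(A^R/Y^R)$ coefficient-by-coefficient using the already-established identity (\ref{eq:X|A}) on reversed series, and reverse back. I do not foresee a serious obstacle: the essential content is the interplay between the Cauchy product and the quantale adjunction, and the only subtle point is the initial reduction of the series inequality $XZ\le A$ to its pointwise form, which requires idempotence of $\oplus$ and is otherwise a direct application of the join calculus.
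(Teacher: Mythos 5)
Your proposal is correct and follows essentially the same route as the paper: both reduce the series inequality $XZ\le A$ to the pointwise conditions $X(u)\otimes Z(v)\le A(uv)$ (using idempotence of $\oplus$ so that the Cauchy product is a join), apply the quantale adjunction to obtain $Z(v)\le X(u)\rw A(uv)$, and identify the infimum series as the largest solution; the second identity is likewise obtained by duality. No gaps.
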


\begin{proof}
By Proposition~\ref{prop:dual-residuals}, we need only prove Eq.(\ref{eq:X|A}).

For $v\in\sa$, let $\overline{Y}(v)=\bigwedge\{X(u)\rw A(uv) | u\in \sa\}$. Then $\overline{Y}$ is a series. We show $\overline{Y}=X\backslash A$, i.e., $\overline{Y}$ is the largest series $Y$ such that $XY\leq A$.

First,
\begin{eqnarray*}
X\overline{Y}(w) &=& \sum_{uv=w}X(u)\overline{Y}(v) \\
&\leq& \sum_{uv=w}X(u)(X(u)\rw A(uv)) \\
&\leq& \sum_{uv=w}A(uv)=A(w).
\end{eqnarray*}

Second, if $XY\leq A$, then, for any $uv=w$, $X(u)Y(v)\leq A(uv)$. It follows that $Y(v)\leq X(u)\rw A(uv)$ for any $u\in\sa$, thus, $Y(v)\leq \bigwedge\{X(u)\rw A(uv) | u\in\sa\}=\overline{Y}(v)$, i.e., $Y\leq \overline{Y}$. This shows that $X\backslash A=\overline{Y}$.
\end{proof}
The following proposition shows that DWA-regular series are closed under residual operations.
\begin{proposition}\label{pro:implication-language}
Suppose $S$ is a complete c-semiring,
 and $A,X,Y\in \allseries$. If $A$ is DWA-regular, then so are $X\backslash A$ and $A/Y$.
%(2) If $A$ is context-free, $Y$ is DWA-regular, what about $A/Y$?
\end{proposition}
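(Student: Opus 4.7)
The plan is to verify the Myhill--Nerode-type criterion of Proposition~\ref{pro:dffa}(v) by showing that both $\equiv_{X\backslash A}$ and $\equiv_{A/Y}$ have finite index. Write $Q_A=\sa/\equiv_A$; since $A$ is DWA-regular, $Q_A$ is finite. For $q\in Q_A$ and $v\in\sa$, set $q\cdot v := \delta_A^{\ast}(q,v)$ and $F_A(q) := A(u)$ for any $u\in q$, so that $A(uv) = F_A([u]_A\cdot v)$.

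The case $A/Y$ is short. By Proposition~\ref{pro:characterization of residual quotient}, $A/Y(v)=\bigwedge_{u\in\sa} Y(u)\rw A(vu)$. Since $\equiv_A$ is a right congruence, $v_1\equiv_A v_2$ gives $v_1 s\equiv_A v_2 s$ and hence $A(v_1 s u)=A(v_2 s u)$ for all $s,u\in\sa$. Taking $\bigwedge_u Y(u)\rw(\cdot)$ on both sides yields $A/Y(v_1 s)=A/Y(v_2 s)$ for all $s$, i.e.\ $v_1\equiv_{A/Y} v_2$. Hence $\equiv_{A/Y}$ is coarser than $\equiv_A$ and has at most $|Q_A|$ classes.

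For $X\backslash A$ the word $u$ appears as a prefix of $v$ inside $A(uv)$, so the right-congruence $\equiv_A$ alone does not suffice; I aggregate over $\equiv_A$-classes instead. Put $X_q := \bigvee_{u\in q} X(u)$, which exists because $S$ is complete. The adjunction~(\ref{eq:quantale-adj}) yields the identity $(\bigvee_i a_i)\rw b = \bigwedge_i (a_i\rw b)$; combining it with Proposition~\ref{pro:characterization of residual quotient} and the fact that $A(uv)$ depends on $u$ only through $[u]_A$, I obtain
\[
X\backslash A(v) \;=\; \bigwedge_{q\in Q_A} X_q\rw F_A(q\cdot v).
\]
Now define $\psi:\sa\to Q_A^{Q_A}$ by $\psi(v)(q) = q\cdot v$; its range has at most $|Q_A|^{|Q_A|}$ elements. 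If $\psi(v_1)=\psi(v_2)$, then for every $q$ and $w$, $q\cdot v_1 w = (q\cdot v_1)\cdot w = (q\cdot v_2)\cdot w = q\cdot v_2 w$, whence $X\backslash A(v_1 w)=X\backslash A(v_2 w)$, i.e.\ $v_1\equiv_{X\backslash A}v_2$. Thus $\equiv_{X\backslash A}$ has at most $|Q_A|^{|Q_A|}$ classes, completing the proof.

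The main obstacle is the aggregation step for $X\backslash A$: it relies crucially on $S$ being a complete c-semiring, so that arbitrary joins $X_q$ exist and the residual $\rw$ converts joins in its left argument into meets, collapsing the potentially infinite class $q$ into a single element of $S$. Without this collapse, the $\sa$-indexed infimum could not be reduced to a finite meet indexed by $Q_A$, and control of the index of $\equiv_{X\backslash A}$ would be lost.
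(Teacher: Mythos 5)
Your proof is correct, but it takes a genuinely different route from the paper's. The paper fixes a finite DWA ${\cal A}=(Q,\Sigma,\delta,q_0,F)$ for $A$ and handles $A/Y$ by reusing the very same state set with a new final-weight function $F^Y(q)=\bigwedge\{Y(u)\rw F(\ds(q,u))\mid u\in\sa\}$, checking directly that the resulting DWA accepts $A/Y$; the left residual is then dispatched by the duality $(X\backslash A)^R=A^R/X^R$ of Proposition~\ref{prop:dual-residuals} (together with closure of DWA-regular series under reversal, which the paper leaves implicit). Your argument for $A/Y$ is the Myhill--Nerode shadow of the same observation --- $\equiv_{A/Y}$ is coarser than $\equiv_A$ because only the output attached to each state changes --- but your treatment of $X\backslash A$ is where the approaches really diverge: rather than reversing, you aggregate $X$ over each $\equiv_A$-class via $X_q=\bigvee_{u\in q}X(u)$, use the quantale identity $(\bigvee_i a_i)\rw b=\bigwedge_i(a_i\rw b)$ to collapse the $\sa$-indexed infimum of Proposition~\ref{pro:characterization of residual quotient} into a finite meet over $Q_A$, and then bound the index of $\equiv_{X\backslash A}$ by the size of the transition monoid, $|Q_A|^{|Q_A|}$. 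This keeps the entire proof inside the finite-index criterion of Proposition~\ref{pro:dffa}(v) and avoids any appeal to reversal, at the cost of a cruder state bound ($|Q_A|^{|Q_A|}$ for the left residual versus $|Q_A|$ from the paper's construction for the right one); the paper's construction, in exchange, produces the recognizing automaton explicitly and with no state blow-up. Both arguments use the complete c-semiring hypothesis in the same essential place, namely to form arbitrary joins and to apply the residuation adjunction.
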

\begin{proof}
By Proposition~\ref{prop:dual-residuals}, we need only consider right residuals. Suppose that ${\cal A}=(Q,\Sigma,\delta,q_0,F)$ is a DWA accepting $A$. Then we have
$|{\cal A}|(\theta)=F(\ds(q_0,\theta))$ for any $\theta\in\sa$.

Define another weighted automaton, ${\cal A}^Y=(Q,\Sigma,\delta,q_0,F^Y)$, where
$$F^Y(q)=\bigwedge\{Y(u)\rw F(\ds(q,u)) | u\in \sa\}.$$

Then
\begin{eqnarray*}
|{\cal A}^Y|(\theta) &=& F^Y(\ds(q_0,\theta))\\
                             &=& \bigwedge\{Y(u)\rw F(\ds(\ds(q_0,\theta),u) | u \in \sa\}\\
                             &=& \bigwedge\{Y(u)\rw F(\ds(q_0,\theta u) | u \in \sa\} \\
                             &=& \bigwedge\{Y(u)\rw A(\theta u) | u \in \sa\} \\
                             &=& A/Y(\theta).
\end{eqnarray*}
Hence, $A/Y$ is DWA-regular.
\end{proof}

\subsection{Factorizations of Formal Power Series}
In formal language theory, factorization is an important notion that is closely related to quotients and residuals \cite{lombardy08}. This notion can be generalized to formal power series straightforwardly.
\begin{definition} \label{de:factor}
For any $X,Y\in \allseries$, if $XY\leq A$, then we call $(X,Y)$ a sub-factorization of $A$. Furthermore, if the sub-factorization $(X,Y)$ is maximal,
i.e., if $X\leq X^{\prime}$, $Y\leq Y^{\prime}$ and $X^{\prime}Y^{\prime}\leq A$, then $X=X^{\prime}$ and $Y=Y^{\prime}$. In this case we call $(X,Y)$ a factorization
of $A$, and write $R_A$ for the set of all factorizations of $A$.
\end{definition}

The relationship between residuals and factorizations of a series is as follows.

\begin{proposition}\label{pro:factor-language}
Let $S$ be a complete c-semiring. For $A,X,Y\in \allseries$, we have
\begin{itemize}
\item
[(i)] $(X,Y)\in R_A$ if and only if $X=A/Y$ and $Y=X\backslash A$.

\item [(ii)] If $WZ\leq A$, then there exists  $(X,Y)\in R_A$ such that $W\leq X$ and $Z\leq Y$.
\end{itemize}
\end{proposition}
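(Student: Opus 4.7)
The plan is to prove both parts by reducing them to the residual identities of Proposition~\ref{pro:residual quotient}, especially the defining inequalities in (i) and the closure identity in (vi).

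For the forward direction of (i), I start from $(X,Y)\in R_A$. Since $XY\leq A$, the largestness of the residuals gives $Y\leq X\backslash A$ and $X\leq A/Y$; combined with $X(X\backslash A)\leq A$ and $(A/Y)Y\leq A$ from Proposition~\ref{pro:residual quotient}(i), the pairs $(X,X\backslash A)$ and $(A/Y,Y)$ are sub-factorizations that dominate $(X,Y)$, so maximality forces $Y=X\backslash A$ and $X=A/Y$. Conversely, assuming $X=A/Y$ and $Y=X\backslash A$ gives $XY=(A/Y)Y\leq A$ directly from Proposition~\ref{pro:residual quotient}(i); for maximality, suppose $X\leq X'$, $Y\leq Y'$, and $X'Y'\leq A$. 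Monotonicity of concatenation yields $XY'\leq X'Y'\leq A$, whence $Y'\leq X\backslash A=Y$ and thus $Y=Y'$; symmetrically $X'Y\leq X'Y'\leq A$ gives $X'\leq A/Y=X$ and so $X=X'$.

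For (ii), given $WZ\leq A$, I would set $Y:=W\backslash A$ and $X:=A/Y$. The largestness of $W\backslash A$ immediately yields $Z\leq Y$, and $W(W\backslash A)\leq A$ yields $W\leq A/Y=X$. To confirm $(X,Y)\in R_A$ via part (i), the identity $X=A/Y$ holds by construction, while $X\backslash A=(A/(W\backslash A))\backslash A = W\backslash A = Y$ is exactly Proposition~\ref{pro:residual quotient}(vi). The only non-routine ingredient is this closure identity, which rests on the complete c-semiring structure; once it is in hand, both parts reduce to symbolic manipulation with the defining inequalities of the residuals, and I do not anticipate any real obstacle beyond orchestrating the maximality argument in (i) so that monotonicity of concatenation is applied in the correct direction.
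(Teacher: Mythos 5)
Your proof is correct and follows essentially the same route as the paper: both directions of (i) come down to the defining ``largest'' property of the residuals plus the maximality clause in the definition of $R_A$, and (ii) is settled by the closure identity of Proposition~\ref{pro:residual quotient}(vi). The only cosmetic difference is that in (ii) you build the factorization as $(A/(W\backslash A),\,W\backslash A)$ while the paper uses the dual pair $(A/Z,\,(A/Z)\backslash A)$; the two constructions are symmetric and rest on the same lemma.
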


\begin{proof}
(i) Suppose $(X,Y)\in R_A$ we show $X=A/Y$ and $Y=X\backslash A$. For any $u,v\in\sa$, we note that $X(u)\otimes Y(v)\leq A(uv)$ if and only if $Y(v)\leq X(u)\rw A(uv)$. Hence, $Y(v)\leq \bigwedge_{u\in \sa}X(u)\rw A(uv)=X\backslash A(v)$. This shows that $Y\leq X\backslash A$. Conversely, by $X(X\backslash A)(\theta)=\bigvee_{uv=\theta}X(u)\otimes X\backslash A(v)\leq A(\theta)$ and the maximality of the factorization, we know $X\backslash A\leq Y$. Hence, $Y=X\backslash A$. Similarly, we can show $X=A/Y$.

On the other hand, suppose $X=A/Y$ and $Y=X \backslash A$. By definition, we know $Y$ is the largest $Z$ such that $XZ\leq A$, and $X$ is the largest $W$ such that  $WY\leq A$. This shows that $(X,Y)$ is a factorization of $A$.

(ii)  Let $X=A/Z$, $Y=(A/Z)\backslash A$. It is clear that $W\leq X$. By Prop.~\ref{pro:residual quotient} (ii) and (vi), we know $Z\leq (A/Z)\backslash A$ and $X=A/ [(A/Z)\backslash A] = A/Y$. Therefore $(X,Y)\in R_A$ and $W\leq X$ and $Z\leq Y$.
\end{proof}

By the above proposition, a factorization of a formal power series $A$ is just a pair $(X,Y)$ such that $X$ is the right residual of $A$ by $Y$ and $X$ is the left residual of $A$ by $X$. In this case, we also call $X$ the \emph{ left factor} of $A$ by $Y$ and $Y$ the \emph{ right factor} of $A$ by $X$, respectively.  Since $\varepsilon A=A\varepsilon=A$, by Proposition~\ref{pro:factor-language} (ii) $A$ itself is both a left factor and a right factor. We denote the corresponding right factor and left factor by $X_s$ and
$Y_e$, respectively, where $X_s(\varepsilon)=Y_e(\varepsilon)=1$, and call $(X_s,A)$ and $(A,Y_e)$ the \emph{initial} and \emph{final} factorization, respectively.

We write $lR(A)$ ($rR(A)$) for the set of left (right) residuals of $A$, i.e.,
\begin{eqnarray*}
lR(A) &=& \{X\backslash A | X\in \allseries\},\\
rR(A) &=& \{A/Y | Y\in \allseries\}.
\end{eqnarray*}
Let $\varphi:lR(A)\rw rR(A)$ be the mapping defined as $\varphi(X\backslash A)=A/(X\backslash A)$. By Proposition \ref{pro:residual quotient}, it is easy to see that $\varphi$ is a bijection.
Moreover, we have
$$R_A=\{(X,\varphi(X)) | X\in lR(A)\}=\{(\varphi^{-1}(Y),Y) | Y\in rR(A)\}.$$

\section{The Universal Weighted Automaton}
In this section, we use the residuals of a formal power series $A$ to construct a weighted automaton which recognizes $A$. To this end, we introduce the notion of the inclusion degree.

\begin{definition}\label{def:inclusion regree}
Suppose $S$ is a complete c-semiring. For two $S$-subsets $f,g:U \rw S$, the \emph{inclusion degree} of $f$ into $g$, denoted by
$f\rw_{incl} g\in S$, is defined as
\begin{equation}
f\rw_{incl} g=\bigwedge\{f(u)\rw g(u) | u\in U\}.
\end{equation}
\end{definition}

The following lemma summarizes several useful properties of the inclusion degree operator.

\begin{lemma}\label{le:inclusion}
Suppose $S$ is a complete c-semiring. For  $X,X^\prime\in \allseries$, any $c\in S$, and any $w\in \sa$, we have
\begin{itemize}
\item
[(1)]  $c\leq X\rw_{incl} X^{\prime}$ iff $cX\leq X^{\prime}$.

\item [(2)] If $(X,Y),(X^{\prime}, Y^{\prime})\in R_A$, then $X\rw_{incl}  X^{\prime}=Y^{\prime}\rw_{incl}  Y$.

\item [(3)] $X\rw_{incl}  X^{\prime}=Xw\rw_{incl}  X^{\prime}w=wX\rw_{incl}  wX^{\prime}$.
\end{itemize}
\end{lemma}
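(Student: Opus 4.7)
The plan is to unpack the definition $X \rw_{incl} X' = \bigwedge_{u \in \sa}(X(u) \rw X'(u))$ and reduce each of the three parts to the quantale adjunction (\ref{eq:quantale-adj}), which in the commutative setting reads $c \leq a \rw b$ iff $a \otimes c \leq b$ (cf.~Eq.~(\ref{a-rw-b})). Since a c-semiring is commutative, I can freely transport scalars across $\otimes$.

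For (1), I would apply the adjunction coordinatewise: $c \leq X \rw_{incl} X'$ holds iff $c \leq X(u) \rw X'(u)$ for every $u \in \sa$, iff $X(u) \otimes c \leq X'(u)$ for every $u$, iff (using commutativity of $\otimes$ together with the definition of the left scalar product $cX$) $(cX)(u) \leq X'(u)$ for every $u$, that is, $cX \leq X'$. This is essentially an unfolding of definitions.

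For (2), set $c = X \rw_{incl} X'$; part (1) then gives $cX \leq X'$. Right-multiplying by $Y'$ and combining with $X'Y' \leq A$ (valid since $(X',Y') \in R_A$) yields $c(XY') \leq A$. Rearranging scalars by commutativity rewrites this as $X(cY') \leq A$, and since $(X,Y) \in R_A$ gives $Y = X \backslash A$ as the largest series $Z$ satisfying $XZ \leq A$ (Proposition~\ref{pro:factor-language}(i)), we obtain $cY' \leq Y$, whence part~(1) again delivers $c \leq Y' \rw_{incl} Y$. A symmetric argument starting from $d = Y' \rw_{incl} Y$ and exploiting the maximality of $X' = A / Y'$ provides the reverse inequality $Y' \rw_{incl} Y \leq X \rw_{incl} X'$.

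For (3), the key observation is that when $w$ is viewed as a singleton series, $(Xw)(u) = X(v)$ if $u = vw$ and $(Xw)(u) = 0$ otherwise. Hence the infimum defining $Xw \rw_{incl} X'w$ splits into two classes of terms: for $u$ of the form $vw$ the contribution is $X(v) \rw X'(v)$, which reproduces exactly the terms of $X \rw_{incl} X'$ as $v$ ranges over $\sa$; for $u$ not ending in $w$ the contribution is $0 \rw (X'w)(u) = 1$, the top of $S$, which is absorbed by the meet. The identity $X \rw_{incl} X' = Xw \rw_{incl} X'w$ follows at once, and the dual equation $X \rw_{incl} X' = wX \rw_{incl} wX'$ is proved by the same bookkeeping, using $(wX)(u) = X(v)$ precisely when $u = wv$.

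The step I expect to require the most care is the scalar rearrangement $c(XY') = X(cY')$ in (2); this is where the c-semiring hypothesis (hence commutativity of $\otimes$) does real work. Once this is in hand, everything else reduces to pointwise applications of the residual structure of the underlying quantale $S$ and the characterization of factorizations from Proposition~\ref{pro:factor-language}.
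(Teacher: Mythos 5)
Your proof is correct. The paper states Lemma~\ref{le:inclusion} without proof, and your argument — reducing (1) to the coordinatewise quantale adjunction, deriving (2) from (1) together with the maximality of $Y = X\backslash A$ and $X' = A/Y'$ from Proposition~\ref{pro:factor-language}(i) via the scalar rearrangement $c(XY') = X(cY')$, and handling (3) by noting that the extra terms in the meet are of the form $0 \rw 0 = 1$ — is exactly the standard verification the authors evidently intend.
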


\begin{definition}\label{def:universal automaton}
Suppose $S$ is a complete c-semiring. For $A\in\allseries$, the \emph{universal weighted automaton} of $A$, denoted by ${\cal U}_A$, is a weighted automaton
$(R_A,\Sigma, \eta_A$, $J_A, G_A)$, where
\begin{eqnarray}
J_A(X,Y) &=& X(\varepsilon),\\
G_A(X,Y) &=& Y(\varepsilon),\\
\eta_A((X,Y),\sigma,(X^{\prime}, Y^{\prime})) &=& X\sigma Y^{\prime}\rw_{incl}  A,
\end{eqnarray}
for any $(X,Y),(X^{\prime}, Y^{\prime})\in R_A$, $\sigma\in \Sigma$.
\end{definition}

\begin{proposition}\label{pro:ufa1} Suppose $S$ is a complete c-semiring. For $A\in \allseries$, and $(X,Y), (X',Y')\in R_A$ and $\sigma\in \Sigma$, we have
\begin{eqnarray}
 X(\varepsilon) &=&Y\rw_{incl}  A,\\
 Y(\varepsilon) &=& X\rw_{incl}  A,\\
X\sigma Y^{\prime}\rw_{incl}  A &=& X\sigma\rw_{incl} X^{\prime}=\sigma Y^{\prime}\rw_{incl}  Y.
\end{eqnarray}
\end{proposition}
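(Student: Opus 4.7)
The plan is to treat the three identities separately, relying on the factorization characterization from Proposition~\ref{pro:factor-language} and the pointwise formula for residuals from Proposition~\ref{pro:characterization of residual quotient}, together with the scalar adjunction in Lemma~\ref{le:inclusion}(1).

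For the first two identities, since $(X,Y)\in R_A$ we have $X=A/Y$ and $Y=X\backslash A$. Substituting $v=\varepsilon$ into the formulas
\[
A/Y(v)=\bigwedge\{Y(u)\rw A(vu)\mid u\in\sa\},\quad X\backslash A(v)=\bigwedge\{X(u)\rw A(uv)\mid u\in\sa\}
\]
collapses the word $vu$ (resp.\ $uv$) to $u$, giving exactly $Y\rw_{incl} A$ and $X\rw_{incl} A$, respectively. That is all that is needed for the first two equations; it is a one-line unfolding once the factorization is translated into residuals.

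For the third identity, the strategy is to show that for every scalar $c\in S$, the three conditions
\[
c\leq X\sigma Y'\rw_{incl} A,\quad c\leq X\sigma\rw_{incl} X',\quad c\leq \sigma Y'\rw_{incl} Y
\]
are equivalent, and then use the fact that in the complete c-semiring $\allseries$ every element equals the join of the scalars below it, so agreement of the set of lower-bound scalars forces the three elements to coincide. By Lemma~\ref{le:inclusion}(1), these three conditions translate respectively into $cX\sigma Y'\leq A$, $cX\sigma\leq X'$, and $c\sigma Y'\leq Y$. The equivalence of the first and second uses that $(X',Y')\in R_A$, hence $X'=A/Y'$: indeed $cX\sigma\leq X'=A/Y'$ iff $cX\sigma Y'\leq A$ by the defining adjunction of $A/Y'$. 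The equivalence of the first and third uses that $(X,Y)\in R_A$, hence $Y=X\backslash A$: indeed $c\sigma Y'\leq Y=X\backslash A$ iff $X\cdot(c\sigma Y')=cX\sigma Y'\leq A$. Chaining the two equivalences closes the triangle.

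The only mild subtlety — really the single point where care is needed — is making sure the scalar factor $c$ commutes correctly past concatenation so that $cX\sigma Y'$ and $X(c\sigma Y')$ denote the same series; this follows from the way left scalar multiplication interacts with the Cauchy product. After that, no additional obstacle arises: the proof is a short application of adjunction plus the standard fact that a complete c-semiring $\allseries$ is determined by its scalar lower bounds.
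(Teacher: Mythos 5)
Your proposal is correct and follows essentially the same route as the paper: the first two identities by evaluating the residual formulas at $\varepsilon$, and the third by showing that the same scalars $c\in S$ lie below all three quantities via the adjunctions $cX\sigma\leq X'=A/Y'$ iff $cX\sigma Y'\leq A$ iff $c\sigma Y'\leq X\backslash A=Y$. (The only nit: the three quantities being compared are elements of $S$, not of $\allseries$, so the "determined by scalar lower bounds" principle you invoke is just that $a=b$ in $S$ whenever $a$ and $b$ have the same lower bounds.)
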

\begin{proof}
Since $(X,Y)\in R_A$,
\begin{eqnarray*}
X(\varepsilon)=A/Y(\varepsilon) &=& \bigwedge\{Y(v)\rw A(\varepsilon v) | v\in \sa\}\\
&=& \bigwedge\{Y(v)\rw A(v) | v\in \sa\}=Y\rw_{incl}  A.
\end{eqnarray*}
Similarly, we can prove the case for $G_A(X,Y)$.

As for  $\eta_A$, it is sufficient to show that $c\leq
X\sigma Y^{\prime}\rw_{incl}  A$ iff $c\leq X\sigma\rw_{incl}  X^{\prime}$ for any $c\in S$. This is because, $c\leq X\sigma Y^{\prime}\rw_{incl}  A$ iff
$cX\sigma Y^{\prime}\leq A$, iff $(cX\sigma) Y^{\prime}\leq A$, iff
$cX\sigma\leq X'$, iff $c\leq X\sigma\rw_{incl}  X^{\prime}$. Hence,
$X\sigma Y^{\prime}\rw_{incl}  A=X\sigma\rw_{incl}  X^{\prime}$.

Similarly, we have $X\sigma Y^{\prime}\rw_{incl}  A=\sigma Y^{\prime}\rw_{incl}  Y$.
\end{proof}
 The extension of $\eta$ has the following form.

\begin{proposition}\label{pro:extension}
Let $S$ be a complete c-semiring and suppose $A\in\allseries$.
For $(X,Y),(X^{\prime}$, $Y^{\prime})\in R_A$, and any $w\in \Sigma^{+}$, we have
\begin{equation}\label{eq:extension of eta}
\eta_A^{\ast}((X,Y),w,(X^{\prime}, Y^{\prime}))= XwY^{\prime}\rw_{incl}  A=Xw\rw_{incl} X^{\prime}=wY^{\prime}\rw_{incl}  Y.
\end{equation}
\end{proposition}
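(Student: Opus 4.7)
The plan is to prove Eq.~(\ref{eq:extension of eta}) by induction on $|w|$. For the base case $|w|=1$ we have $w=\sigma\in\Sigma$, so $\eta_A^\ast((X,Y),\sigma,(X',Y'))=\eta_A((X,Y),\sigma,(X',Y'))=X\sigma Y'\rw_{incl}A$ by definition, and the remaining two equalities are exactly the content of Proposition~\ref{pro:ufa1}.

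For the inductive step, write $w=w_1\sigma$ with $|w_1|\geq 1$ and expand according to Eq.~(\ref{eq:theta1and2}):
\begin{equation*}
\eta_A^\ast((X,Y),w_1\sigma,(X',Y'))=\bigvee_{(X'',Y'')\in R_A}\eta_A^\ast((X,Y),w_1,(X'',Y''))\otimes\eta_A((X'',Y''),\sigma,(X',Y')).
\end{equation*}
By the induction hypothesis and the definition of $\eta_A$, the summand equals $(Xw_1\rw_{incl}X'')\otimes(X''\sigma\rw_{incl}X')$. Call these scalars $c_1$ and $c_2$. I will show that this join equals $Xw_1\sigma\rw_{incl}X'$; once that is done, reasoning identical to the proof of Proposition~\ref{pro:ufa1} (using $(X',Y'),(X,Y)\in R_A$, hence $X'=A/Y'$ and $Y=X\backslash A$) converts $Xw\rw_{incl}X'$ into both $XwY'\rw_{incl}A$ and $wY'\rw_{incl}Y$ via the adjunction $cZ\leq Z'\iff c\leq Z\rw_{incl}Z'$ from Lemma~\ref{le:inclusion}(1).

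For the direction $\leq$, by Lemma~\ref{le:inclusion}(1) we have $c_1Xw_1\leq X''$ and $c_2X''\sigma\leq X'$. Multiplying on the right by $\sigma$ and using monotonicity and commutativity of $\otimes$ (which holds since $S$ is a c-semiring), $c_1c_2\,Xw_1\sigma\leq c_2X''\sigma\leq X'$, so $c_1\otimes c_2\leq Xw_1\sigma\rw_{incl}X'$; taking the join over $(X'',Y'')$ gives the inequality.

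For the direction $\geq$, set $c=Xw_1\sigma\rw_{incl}X'$. Then $cXw_1\sigma\leq X'$, and multiplying on the right by $Y'$ gives $cXw_1\cdot\sigma Y'\leq X'Y'\leq A$, so $(cXw_1,\sigma Y')$ is a sub-factorization of $A$. By Proposition~\ref{pro:factor-language}(ii) there exists $(X'',Y'')\in R_A$ with $cXw_1\leq X''$ and $\sigma Y'\leq Y''$. The first inclusion gives $c\leq Xw_1\rw_{incl}X''=c_1$. The second yields $X''\sigma Y'\leq X''Y''\leq A$, so $X''\sigma\leq A/Y'=X'$, i.e.\ $c_2=X''\sigma\rw_{incl}X'\geq 1$; since $1$ is the top of $S$ in a c-semiring, in fact $c_2=1$. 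Therefore $c_1\otimes c_2\geq c\otimes 1=c$, and this particular summand already exceeds $c$, completing the proof.

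The main obstacle is the $\geq$ direction, where one must manufacture an intermediate factorization $(X'',Y'')\in R_A$ whose existence is not immediate; the key move is to apply Proposition~\ref{pro:factor-language}(ii) to the witness sub-factorization $(cXw_1,\sigma Y')$ and to exploit the fact that in a c-semiring the unit $1$ is the top element, which forces $c_2=1$ and makes the product telescope cleanly.
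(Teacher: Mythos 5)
Your proposal is correct and follows essentially the same route as the paper: induction on $|w|$, the adjunction $c\leq Z\rw_{incl}Z'\iff cZ\leq Z'$ from Lemma~\ref{le:inclusion}(1), and Proposition~\ref{pro:factor-language}(ii) to manufacture the intermediate factorization $(X'',Y'')$ in the hard direction. The only differences are cosmetic — you peel the letter off the right ($w=w_1\sigma$) where the paper peels it off the left ($\sigma w$), and your upper-bound argument via the adjunction is a little cleaner than the paper's pointwise $\bigwedge_\theta$ computation.
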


\begin{proof} First, for any $c\in S$, $c\leq XwY^{\prime}\rw_{incl}  A$ iff $cXwY^{\prime}\leq A$, iff $(cXw)Y^{\prime}\leq A$, iff $cXw\leq X^{\prime}$, iff $c\leq Xw\rw_{incl}  X^{\prime}$.
Hence, $XwY'\rw_{incl}  A=Xw\rw_{incl}  X^{\prime}$.

Similarly, we have $XwY^{\prime}\rw_{incl}  A=wY^{\prime}\rw_{incl}  Y$.

We shall show $\eta_A^{\ast}((X,Y),w,(X^{\prime}, Y^{\prime}))=
XwY^{\prime}\rw_{incl}  A$ by induction on the length $|w|$ for $w\in \Sigma^{+}$.

If $|w|=1$, this is just the definition of $\eta_A$.

Given $w\in \Sigma^{+}$ and $\sigma\in\Sigma$, we show that  (\ref{eq:extension of eta}) holds for $\sigma w$ if (\ref{eq:extension of eta}) holds for $w$.
\begin{eqnarray*}
&&\eta_A^{\ast}((X,Y),\sigma w,(X^{\prime}, Y^{\prime}))\\
 &=& \bigvee_{(X'',Y'')\in R_A}\eta_A((X,Y),\sigma,(X'', Y''))\otimes \eta_A^{\ast}((X'',Y''),w,(X^{\prime}, Y^{\prime}))\\
 &=& \bigvee_{(X'',Y'')\in R_A}(X\sigma Y''\rw_{incl}  A)\otimes (X''wY'\rw_{incl}  A)\\
  &=& \bigvee_{(X'',Y'')\in R_A}(X\sigma\rw_{incl}  X'')\otimes (X''w\rw_{incl}  X')\\
   &=& \bigvee_{(X'',Y'')\in R_A}(X\sigma w\rw_{incl}  X''w)\otimes (X''w\rw_{incl}  X')\\
    &\leq& \bigwedge_{\theta\in\sa}(X\sigma w(\theta)\rw  X''w(\theta))\otimes(X''w(\theta)\rw  X'(\theta)) \\
    &\leq&  \bigwedge_{\theta\in\sa}(X\sigma w(\theta)\rw  X'(\theta))\\
    &=& X\sigma w\rw_{incl}  X'.
\end{eqnarray*}
Conversely,
\begin{eqnarray*}
&& c\leq X\sigma w\rw_{incl} X' = X\sigma w Y'\rw_{incl}  A \\
& \Longleftrightarrow& cX\sigma w Y'\leq A \\
& \Longleftrightarrow & (cX\sigma)(wY')\leq A \\
& \Longleftrightarrow & \mbox{there exists $(X'',Y'')\in R_A$ such that $cX\sigma\leq X''$, $wY'\leq Y''$} \\
& \Longleftrightarrow&  \mbox{there exists $(X'',Y'')\in R_A$ such that $c\leq X\sigma\rw_{incl}  X''$, $1\leq wY'\rw_{incl}  Y''$},
\end{eqnarray*}
which implies
\begin{eqnarray*}
c &\leq& (X\sigma\rw_{incl} X'')\otimes (wY'\rw_{incl}  Y'')\\
 &=& (X\sigma Y''\rw_{incl} A)\otimes (X''wY'\rw_{incl}  A) \\
&\leq& \eta_A^{\ast}((X,Y),\sigma w,(X^{\prime}, Y^{\prime})).
\end{eqnarray*}
Hence, $X\sigma w\rw_{incl}  X'\leq \eta_A^{\ast}((X,Y),\sigma w,(X^{\prime}, Y^{\prime}))$.

Therefore, the equality (\ref{eq:extension of eta}) holds.
\end{proof}

We give some properties of universal weighted automaton as follows.

\begin{proposition}\label{pro:future-past}
Let $S$ be a complete c-semiring and suppose $A\in\allseries$.
For $(X,Y)\in R_A$, we have $Past_{{\cal U}_A}(X,Y)=X$, $Fut_{{\cal U}_A}(X,Y)=Y$.
\end{proposition}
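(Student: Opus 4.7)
My plan is to verify $Past_{{\cal U}_A}(X,Y) = X$ and $Fut_{{\cal U}_A}(X,Y) = Y$ separately. Since the two are strictly symmetric (swap left/right and invoke Proposition~\ref{prop:dual-residuals}), I will work out the $Past$ identity in detail and indicate the minor changes for $Fut$. The case $w = \varepsilon$ is immediate because $\eta_A^*(\cdot, \varepsilon, \cdot)$ is the Kronecker delta on $R_A$, so the sum collapses and gives $J_A(X,Y) = X(\varepsilon)$. The content is therefore the case $w \in \Sigma^+$, where I would expand
\[
Past_{{\cal U}_A}(X,Y)(w) = \bigvee_{(X',Y')\in R_A} X'(\varepsilon) \otimes \eta_A^*((X',Y'), w, (X,Y))
\]
and use Proposition~\ref{pro:extension} to rewrite the transition weight as either $X'w \rw_{incl} X$ or $wY \rw_{incl} Y'$, choosing whichever is convenient.

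For the upper bound I fix $(X',Y') \in R_A$ and let $d := X'w \rw_{incl} X$. By Lemma~\ref{le:inclusion}(1) this gives $d \cdot X'w \leq X$. Evaluating at the word $w$ and noting that the only split $uv = w$ with $v = w$ is $u = \varepsilon$, so $(X'w)(w) = X'(\varepsilon)$, I obtain $d \otimes X'(\varepsilon) \leq X(w)$. Commutativity of $\otimes$ in a c-semiring then yields $X'(\varepsilon) \otimes d \leq X(w)$, and taking the supremum over factorizations gives $Past_{{\cal U}_A}(X,Y)(w) \leq X(w)$.

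For the matching lower bound I will exhibit a single witnessing factorization. From $XY \leq A$ I have $X(w) \otimes Y(v) \leq A(wv)$ for every $v \in \sa$, which is the series inequality $(X(w)\varepsilon)(wY) \leq A$, where $\varepsilon$ denotes the characteristic series of the empty word. Hence $(X(w)\varepsilon, wY)$ is a sub-factorization of $A$, and by Proposition~\ref{pro:factor-language}(ii) it extends to some $(X'',Y'') \in R_A$ with $X(w)\varepsilon \leq X''$ and $wY \leq Y''$. The first inclusion forces $X''(\varepsilon) \geq X(w)$, while the second forces $wY \rw_{incl} Y'' = 1$ (in a c-semiring, $a \leq b$ is equivalent to $a \rw b = 1$). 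Using the alternate form $\eta_A^*((X'',Y''), w, (X,Y)) = wY \rw_{incl} Y''$ from Proposition~\ref{pro:extension}, the $(X'',Y'')$-term of the supremum contributes at least $X(w) \otimes 1 = X(w)$, completing the lower bound. The $Fut$ identity is then handled identically using the dual sub-factorization $(Xw, Y(w)\varepsilon)$, and I expect the only technical pivot to be the evaluation identity $(X'w)(w) = X'(\varepsilon)$ (and its dual) that forces the neat cancellation between $X'(\varepsilon)$ and the inclusion degree.
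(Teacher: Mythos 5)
Your proof is correct, and the upper-bound half is essentially the paper's argument: both of you specialize the meet $\bigwedge_u\bigl(X'w(u)\rw X(u)\bigr)$ at the single word $u=w$, use $(X'w)(w)=X'(\varepsilon)$, and conclude by the adjunction $a\otimes(a\rw b)\leq b$. The lower bound, however, is where you genuinely diverge. The paper takes one fixed witness for every word, namely the initial factorization $(X_s,A)$, and computes $X_s(\varepsilon)\otimes(X_s\theta\rw_{incl}X)=X(\theta)$ directly; in doing so it asserts $X_s\theta(u)=0$ for all $u\neq\theta$, which tacitly treats $X_s=A/A$ as the characteristic series of $\varepsilon$ and is not literally true in general (though the inequality it needs, $X(\theta)\,X_s\theta\leq X$, does hold since $X_sX\leq X$). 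You instead build a word-dependent witness: extend the sub-factorization $(X(w)\varepsilon,\,wY)$ to some $(X'',Y'')\in R_A$ via Proposition~\ref{pro:factor-language}(ii), then read off $X''(\varepsilon)\geq X(w)$ and $wY\rw_{incl}Y''=1$ from the two inclusions and the alternate form of $\eta_A^\ast$ in Proposition~\ref{pro:extension}. This costs you a fresh factorization for each $w$ but buys a cleaner justification that sidesteps the paper's sloppy evaluation of $X_s\theta$; it also makes the $Fut$ case a genuine mirror image via $(Xw,\,Y(w)\varepsilon)$. One further point in your favour: you treat $w=\varepsilon$ separately, which is needed since Proposition~\ref{pro:extension} is only stated for $w\in\Sigma^+$, a case the paper's proof glosses over.
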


\begin{proof} For any $\theta\in\sa$,
\begin{eqnarray*}
Past_{{\cal U}_A}(X,Y)(\theta) &=& \bigvee_{(X',Y')\in R_A}J_A(X',Y')\otimes\eta^{\ast}((X',Y'),\theta,(X,Y))\\
  &=& \bigvee_{(X',Y')\in R_A}X'(\varepsilon)\otimes (X'\theta\rw_{incl} X)\\
  &=& \bigvee_{(X',Y')\in R_A}X'(\varepsilon)\otimes \bigwedge_{u\in\sa}(X'\theta(u)\rw X(u)) \\
  &\leq& \bigvee_{(X',Y')\in R_A}X'(\varepsilon)\otimes (X'\theta(\theta)\rw X(\theta))\\
  &=& \bigvee_{(X',Y')\in R_A}X'(\varepsilon)\otimes (X'(\varepsilon)\rw X(\theta)) \\
  &\leq& X(\theta).
\end{eqnarray*}
On the other hand,
\begin{eqnarray*}
Past_{{\cal U}_A}(X,Y)(\theta) &=& \bigvee_{(X',Y')\in R_A}J_A(X',Y')\otimes\eta^{\ast} ((X',Y'),\theta,(X,Y)) \\
&\geq& J_A(X_s,A)\otimes \eta^{\ast}((X_s,A),\theta,(X,Y)) \\
&=& X_s(\varepsilon)\otimes (X_s\theta\rw_{incl} X)\\
&=& X_s\theta\rw_{incl} X=\bigwedge_{u\in \sa}X_s\theta(u)\rw X(u)\\
&=& \bigwedge_{u\not=\theta}X_s\theta(u)\rw X(u)\wedge (X_s\theta(\theta)\rw X(\theta))\\
&=& \bigwedge_{u\not=\theta}(0\rw X(u))\wedge (1\rw X(\theta))\\
&=& X(\theta).
\end{eqnarray*}
Hence, $Past_{{\cal U}_A}(X,Y)=X$.

Similarly, we have $Fut_{{\cal U}_A}(X,Y)=Y$.
\end{proof}

\begin{theorem}\label{th:universal 1}
Let $S$ be a complete c-semiring and suppose $A\in\allseries$. Then we have $|{\cal U}_A|=A$.
\end{theorem}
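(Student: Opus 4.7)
The plan is to establish $|\mathcal{U}_A| = A$ by separately proving the two inequalities $|\mathcal{U}_A| \geq A$ and $|\mathcal{U}_A| \leq A$ pointwise, leaning on the machinery already built in Propositions \ref{pro:factor-automata}, \ref{pro:extension}, and \ref{pro:future-past}, together with the adjunction (Eq.~\ref{eq:quantale-adj}) that characterizes the inclusion degree.

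For the lower bound $A \leq |\mathcal{U}_A|$, I would apply Proposition \ref{pro:factor-automata} to $\mathcal{U}_A$ to get $Past_{\mathcal{U}_A}(X,Y)\,Fut_{\mathcal{U}_A}(X,Y) \leq |\mathcal{U}_A|$ for every state $(X,Y)\in R_A$. Proposition \ref{pro:future-past} identifies these two series as $X$ and $Y$, so $XY \leq |\mathcal{U}_A|$ for every factorization. Now instantiate at the initial factorization $(X_s,A)$: since $X_s(\varepsilon)=1$, the single decomposition $\theta=\varepsilon\cdot\theta$ already yields $(X_s A)(\theta)\geq X_s(\varepsilon)\otimes A(\theta)=A(\theta)$, so $A\leq X_s A\leq |\mathcal{U}_A|$.

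For the upper bound $|\mathcal{U}_A|\leq A$, I would expand the defining sum
\[
|\mathcal{U}_A|(\theta) = \sum_{(X,Y),(X',Y')\in R_A} X(\varepsilon)\otimes \eta_A^{\ast}((X,Y),\theta,(X',Y'))\otimes Y'(\varepsilon)
\]
and show every summand is bounded above by $A(\theta)$; since in a complete c-semiring the infinitary sum is the least upper bound, this is enough. For $\theta\in\Sigma^{+}$, Proposition \ref{pro:extension} lets me rewrite the transition weight as $c := X\theta Y'\rw_{incl} A$. By definition of $\rw_{incl}$ and the adjunction, $c\otimes (X\theta Y')(w)\leq A(w)$ for every $w\in\sa$. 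Evaluating at the strategic choice $w=\theta$, any decomposition $w_1\theta w_3=\theta$ forces $w_1=w_3=\varepsilon$ by length, so $(X\theta Y')(\theta)$ collapses to $X(\varepsilon)\otimes Y'(\varepsilon)$. Thus $c\otimes X(\varepsilon)\otimes Y'(\varepsilon)\leq A(\theta)$, and commutativity of $\otimes$ in the c-semiring rearranges this into the bound on the summand. The edge case $\theta=\varepsilon$ is handled by direct unwinding: $\eta_A^{\ast}$ reduces to the Kronecker delta on states, so each summand becomes $X(\varepsilon)\otimes Y(\varepsilon)=(XY)(\varepsilon)\leq A(\varepsilon)$, using $XY\leq A$ for every factorization.

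The main obstacle, modest as it is, lies in the upper bound: one must notice the clean length-based collapse $(X\theta Y')(\theta)=X(\varepsilon)\otimes Y'(\varepsilon)$ and correctly push the pointwise-at-$w=\theta$ consequence of the inclusion degree through the adjunction. Everything else is a direct assembly of the already-proved propositions.
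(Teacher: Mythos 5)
Your proposal is correct and follows essentially the same strategy as the paper: the lower bound comes from restricting the sum to the initial factorization $(X_s,A)$ with $X_s(\varepsilon)=1$, and the upper bound from the fact that each summand is dominated by $A(\theta)$ because $XY\leq A$ for every factorization. The only cosmetic difference is that for the upper bound the paper invokes Proposition~\ref{pro:future-past} to collapse the double sum to $\bigvee_{(X,Y)\in R_A}X(\varepsilon)\otimes Y(\theta)$, whereas you re-derive the same termwise bound directly from Proposition~\ref{pro:extension} and the adjunction via the length-collapse $(X\theta Y')(\theta)=X(\varepsilon)\otimes Y'(\varepsilon)$ --- which is precisely the computation underlying the paper's Proposition~\ref{pro:future-past} anyway.
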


\begin{proof}
For any $\theta\in\sa$,
\begin{eqnarray*}
|{\cal U}_A|(\theta) &=& \bigvee_{(X,Y)\in R_A}J_A(X,Y)\otimes Fut_{{\cal U}_A}(X,Y)(\theta) \\
&=& \bigvee_{(X,Y)\in R_A}X(\varepsilon)\otimes Y(\theta)\leq A(\theta).
\end{eqnarray*}

On the other hand,
\begin{eqnarray*}
|{\cal U}_A|(\theta) &=& \bigvee_{(X,Y)\in R_A}J_A(X,Y)\otimes Fut_{{\cal U}_A}(X,Y)(\theta)\\
& \geq & J_A(X_s,A)\otimes Fut_{{\cal U}_A}(X_s,A)(\theta) \\
&=& X_s(\varepsilon)\otimes A(\theta)=A(\theta).
\end{eqnarray*}
Therefore, $|{\cal U}_A|=A$.
\end{proof}

So far, we have defined two canonical weighted automata for each $A\in\allseries$, viz. the minimal DWA ${\cal M}_A$ and the universal weighted automaton ${\cal U}_A$. Both automata recognize $A$. Recall in the classical case, if $A$ is a regular language over $\Sigma$, then ${\cal M}_A$ and ${\cal U}_A$ are both finite automata. This does not hold in general for weighted automata. In the following we discuss when ${\cal U}_A$ is finite.

For $A\in\allseries$, %for any state $q$ in $Q$, let $Post_{\cal A}(q)$ denote the set of states that are reachable in ${\cal A}$ from $q$, i.e., $Post_{\cal A}(q)=\{\ds(q,u) | u\in\sa\}$,
write

\begin{equation}\label{eq:S-A}
S_A=\{c\rw a | c\in S, a\in Im(A)\},
\end{equation}
and let
\begin{equation}\label{eq:S-A^}
S_A^{\wedge}=\{\bigwedge X | X\subseteq S_A\}
\end{equation}

\noindent be the $\bigwedge$-sublattice of $S$ generated by $S_A$. Then it is well-known that $S_A^{\wedge}$ is finite iff $S_A$ is finite (cf. \cite{li93,davey02}).

The following proposition shows the relationship between the finiteness of ${\cal U}_A$ and the finiteness of ${\cal M}_A$.

\begin{proposition}\label{pro:finiteness}
Let $S$ be a complete c-semiring. For $A\in\allseries$, the following conditions are equivalent.
\begin{itemize}
\item
[(i)] ${\cal U}_A$ is finite, i.e., $R_A$ is finite.

\item
[(ii)] $A$ can be accepted by a finite DWA, and $S_A$ in Eq. (\ref{eq:S-A}) is finite.

\item
[(iii)] ${\cal M}_A$ is finite, i.e., $\equiv_A$ has finite index, and $S_A$ is finite.
\end{itemize}

In particular, if $S$ is finite c-semiring or  a linear order lattice, the above conditions are equivalent, and the condition ``$S_A$ is finite'' can be omitted.
\end{proposition}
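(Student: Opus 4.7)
The plan is to establish the three conditions as a cycle, together with the two special cases. The equivalence $(\text{ii}) \Leftrightarrow (\text{iii})$ is immediate from clauses (i) and (v) of Proposition~\ref{pro:dffa}: a series is DWA-recognizable iff $\equiv_A$ has finite index iff $\mathcal{M}_A$ is finite, and the shared clause ``$S_A$ is finite'' carries over on both sides. So the real content lies in $(\text{i}) \Rightarrow (\text{iii}) \Rightarrow (\text{i})$.

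For $(\text{i}) \Rightarrow (\text{iii})$, I would use the bijection $R_A \leftrightarrow lR(A)$ noted after Proposition~\ref{pro:factor-language}, so finiteness of $R_A$ gives finiteness of $lR(A)$. By Proposition~\ref{pro:quotient-word}, every left quotient by a word satisfies $u^{-1}A = u\backslash A \in lR(A)$, so $\{u^{-1}A : u\in \sa\}$ is finite and $\mathcal{M}_A$ is finite. For ``$S_A$ finite,'' the key trick is to use the singleton-support series $c\varepsilon$ (weight $c$ at $\varepsilon$, zero elsewhere): a short computation from Proposition~\ref{pro:characterization of residual quotient}, using the convention $0 \rw s = 1$, gives $((c\varepsilon)\backslash A)(v) = c\rw A(v)$. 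Fixing, for each $a \in Im(A)$, a word $w_a$ with $A(w_a)=a$, evaluation at $w_a$ produces a surjection $\{(c\varepsilon)\backslash A : c\in S\} \twoheadrightarrow \{c\rw a : c\in S\}$, and hence $|\{c\rw a : c\in S\}| \leq |lR(A)|$. Summing over $Im(A)$, which is finite by Proposition~\ref{pro:dffa}, bounds $|S_A|$.

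The heart of the argument is $(\text{iii}) \Rightarrow (\text{i})$. Let $B_1,\ldots,B_n$ be the distinct left quotients of $A$ and set $C_i = \{u \in \sa : u^{-1}A = B_i\}$. For $X \in \allseries$, put $c_i(X) = \bigvee_{u \in C_i}X(u)$. Applying Proposition~\ref{pro:characterization of residual quotient} together with the standard quantale identity $(\bigvee_j x_j)\rw a = \bigwedge_j(x_j \rw a)$, I would compute
\[
(X\backslash A)(v) \;=\; \bigwedge_{i=1}^n \bigwedge_{u \in C_i} X(u)\rw B_i(v) \;=\; \bigwedge_{i=1}^n c_i(X)\rw B_i(v).
\]
Thus $X\backslash A$ is determined by the tuple $(c_1(X) \rw B_1,\ldots,c_n(X)\rw B_n)$. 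For each $i$, the series $c \rw B_i$ depends on $c$ only through the restricted map $b \mapsto c\rw b$ on $Im(B_i) \subseteq Im(A)$, so $|\{c\rw B_i : c \in S\}| \leq |S_A|^{|Im(A)|}$. Since $S_A$ and $Im(A)$ are both finite (the latter by Proposition~\ref{pro:dffa}), this gives $|lR(A)| \leq |S_A|^{n|Im(A)|} < \infty$, hence $|R_A|<\infty$. The main obstacle is precisely this step: the infinitary meet defining $X\backslash A$ must be compressed into a finite meet of elements of $S_A$, and the partition of $\sa$ by Nerode class is what makes this possible.

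For the special cases: when $S$ is finite, $S_A \subseteq S$ is automatically finite, so ``$S_A$ finite'' in (ii) and (iii) is redundant. When $S$ is a linearly ordered lattice with $\otimes = \wedge$, a direct check gives $c \rw a = 1$ if $c \leq a$ and $c \rw a = a$ if $c > a$; hence $S_A \subseteq Im(A) \cup \{1\}$, which is finite whenever $Im(A)$ is, and $Im(A)$ is finite whenever $\mathcal{M}_A$ is (again by Proposition~\ref{pro:dffa}). In either case the clause about $S_A$ can be dropped.
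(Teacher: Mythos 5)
Your proof is correct and covers everything the paper's proof does (plus the ``in particular'' clause, which the paper's proof silently omits); the differences from the paper are worth noting. For (i)~$\Rightarrow$~(iii) the paper obtains finiteness of ${\cal M}_A$ indirectly, by invoking Theorem~\ref{th:universal 1} together with Proposition~\ref{pro:m-minimal} (a forward reference it must explicitly flag as non-circular), whereas you get it immediately from $u^{-1}A=u\backslash A\in lR(A)$ and the bijection $lR(A)\cong R_A$ --- simpler and self-contained. Your derivation of ``$S_A$ finite'' is the same $\varepsilon$-supported scalar-series trick as the paper's, only run on left residuals and phrased directly rather than by contradiction (the paper uses $Y_c$ with $A/Y_c(u)=c\rw a$). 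For (iii)~$\Rightarrow$~(i) both arguments rest on exactly the same two finiteness sources, the Nerode partition of $\sa$ and the finiteness of $S_A$, but with different bookkeeping: the paper shows each left factor $X=A/Y$ is constant on $\equiv_A$-classes with values in $S_A^{\wedge}$, bounding $|R_A|\le|S_A^{\wedge}|^{|Q_A|}$, while you compress $X\backslash A$ into the finite meet $\bigwedge_i c_i(X)\rw B_i$ over quotient classes and count tuples; both bounds are finite and both hinge on the quantale identity $(\bigvee_j x_j)\rw a=\bigwedge_j(x_j\rw a)$, which the paper uses implicitly. One small remark on the last clause: your reading of ``linear order lattice'' as $\otimes=\wedge$ (giving $c\rw a\in\{a,1\}$, hence $S_A\subseteq Im(A)\cup\{1\}$) is the right one and matches Example~\ref{ex:finite}; the claim would actually fail for a general linearly ordered complete c-semiring with a non-idempotent product, so making that hypothesis explicit, as you do, is a genuine improvement over the statement.
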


\begin{proof}
By Proposition \ref{pro:dffa}, $A$ is recognized by a finite DWA iff $\equiv_A$ has finite index. It remains to show that $A$ has finite factorizations iff $S_A$ is finite.

Assume that $A$ can be recognized by a finite DWA and $S_A$ is finite. This implies that $\equiv_A$ has finite index, i.e., $Q_A$ is a finite set. If $u_1\equiv_A u_2$, then $A(u_1v)=A(u_2v)$ for any $v\in\sa$.
It follows that
\begin{eqnarray*}
X(u_1)=A/Y(u_1) &=& \bigwedge\{Y(v)\rw A(u_1v) | v\in\sa\} \\
&=& \bigwedge\{Y(v)\rw A(u_2v) | v\in\sa\} \\
&=& A/Y(u_2)=X(u_2).
\end{eqnarray*}
Thus $X$ induces a unique mapping from $Q_A=\sa/\equiv_A$ into $S_A^{\wedge}$. Since $Q_A$ is finite and $S_A$ is finite, the latter implies that the set $S_A^{\wedge}$ is also finite. Then the set of the mappings from $Q_A$ into $S_A^{\wedge}$ is also finite. Therefore, $R_A$ is finite.

On the other hand, suppose $R_A$ is finite. We first show ${\cal M}_A$ is a finite DWA.  Note that by Theorem \ref{th:universal 1}, $A$ is accepted by the finite universal weighted automaton ${\cal U}_A$. By Proposition \ref{pro:m-minimal} (the proof of which is independent to this proposition), ${\cal M}_A$ is a sub-automaton of ${\cal U}_A$. Therefore, $A$ can be accepted by a finite DWA.

To end the proof, we show $S_A$ is finite. We prove this by contradiction. Suppose $S_A$ is infinite. Since $Im(A)$ is finite, there is $a\in Im(A)$ such that the subset $S_1=\{c\rw a | c\in S\}$ of $S_A$ is infinite. Assume that $A(u)=a$ for $u\in\sa$. For any $c\in S$, define a series $Y_c\in\allseries$ as, $Y_c(w)=c$ if $w=\varepsilon$ and $Y_c(w)=0$ otherwise. Consider the right residual $A/Y_c$ , by a simple calculation, we have  $A/Y_c(u)=\bigwedge\{Y_c(v)\rw A(uv) | v\in\sa\}=Y_c(\varepsilon)\rw A(u)=c\rw a$. It follows that the set $\{A/Y_c | c\in S\}$ is infinite, then $rR(A)$ is infinite, and thus $R_A$ is infinite. A contradiction. Therefore $S_A$ is finite.
\end{proof}

We give two examples to illustrate the construction of the universal weighted automaton of a formal power series.

\begin{example}\label{ex:finite}

{\rm Assume $\Sigma=\{a,b\}$, $S=(\mathbb{N}\cup\{\infty\},\max,\min,0,\infty)$, where $S$ is a linear order lattice under the natural order of the integer numbers. Consider the formal power series  $A\in\allseries$ defined as follows,
\begin{displaymath}
{A(\theta)= \left\{ \begin{array}{ll}
$2$, & \textrm{if $\theta\in \sa ab \sa$}\\
$1$, & \textrm{otherwise.}\\
\end{array} \right.}
\end{displaymath}
$A$ can be recognized by the DWA ${\cal A}=(Q,\Sigma,\delta,q_0,F)$  presented in Figure 1, where $F=1\diagup q_0+1\diagup q_1+2\diagup q_2$.

\begin{figure}[ptb]
\begin{center}
\includegraphics[width=0.5\textwidth]{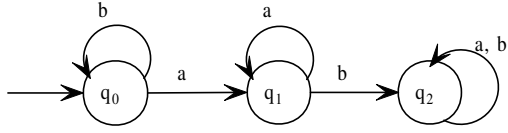}
\end{center}
\caption{The DWA recognizes the formal power series $A$}
 \label{fig:1}
\end{figure}
%$${\psfig{figure=}}$$

 The set of factorizations of
$A$ is $R_A=\{u_i|u_i=(X_i,Y_i),i=1,2,3,4.\}$, where
$X_1(\theta)=Y_2(\theta)=1$, $X_2(\theta)=Y_1(\theta)=2$, for all
$\theta \in \Sigma^*$;
\begin{displaymath}
{X_3(\theta)= \left\{ \begin{array}{ll}
$2$, & \textrm{if $\theta\in \sa a \sa$}\\
$1$, & \textrm{otherwise.}\\
\end{array} \right.},
\end{displaymath}
\begin{displaymath}
{X_4(\theta)=Y_3(\theta)= \left\{ \begin{array}{ll}
$2$, & \textrm{if $\theta\in \sa ab \sa$}\\
$1$, & \textrm{otherwise.}\\
\end{array} \right.},
\end{displaymath}
\begin{displaymath}
{Y_4(\theta)= \left\{ \begin{array}{ll}
$2$, & \textrm{if $\theta\in \sa b \sa$}\\
$1$, & \textrm{otherwise.}\\
\end{array} \right.}.
\end{displaymath}

\begin{figure}[ptb]
\begin{center}
\includegraphics[width=0.5\textwidth]{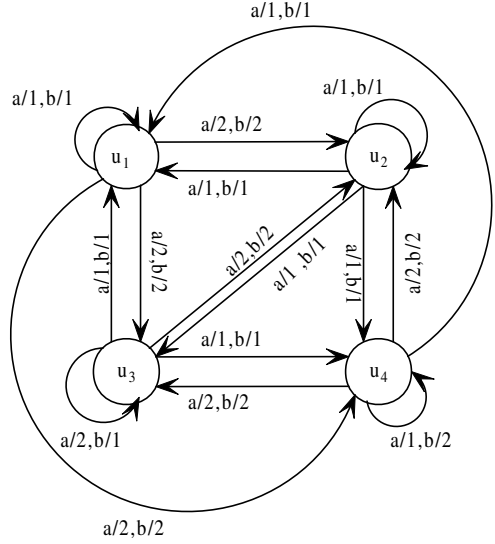}
\end{center}
\caption{The universal weighted automaton ${\cal U}_A$ of the formal power series $A$}
 \label{fig:2}
\end{figure}

%$${\psfig{figure=fig2.eps}}$$

By definition, the
universal weighted automaton of $A$ is ${\cal
U}_A=(R_A,\Sigma,\eta_A$, $J_A,G_A)$, where\footnote{We write $x\diagup u_i$ for the value $x$ of $u_{i}$ in the given $S$-subset.}
\begin{itemize}
\item  [-] $J_A=1\diagup u_1+2\diagup
u_2+1\diagup u_3+1\diagup u_4$,

\item [-] $G_A=2\diagup u_1+1\diagup
u_2+1\diagup u_3+1\diagup u_4$,

\item [-] $\eta_A(u_i,x,u_j)$ is either 2 or 1, as shown in Figure 2.
\end{itemize}
}
\end{example}

\begin{example}\label{ex:infinite}

{\rm Assume $\Sigma=\{a\}$, $S=(\mathbb{N}\cup\{\infty\},\min,+,\infty,0)$ is the tropical semiring. Consider the formal power series $A\in\allseries$ defined as follows
\begin{displaymath}
{A(a^k)= \left\{ \begin{array}{ll}
$0$, & \textrm{if $k=0$}\\
k-1, & \textrm{if $k>0$.}\\
\end{array} \right.}
\end{displaymath}

By Proposition \ref{pro:dffa}, $A$ can not be recognized by any finite DWA. However, as a regular series, $A$ can be recognized by a finite weighted automaton ${\cal
B}=(\{q_0,q_1\},\Sigma,\eta, \{q_0\},\{q_1\})$, where $\eta(q_0,a,q_1)=0$, $\eta(q_1,a,q_1)=1$. The universal weighted automaton ${\cal U}_A$ has infinite states
$\{(X_i,Y_i)\}_{i=0}^{\infty}$, where

\begin{center}

$Y_0=A, X_0(a^k)=k$;

\end{center}

\noindent and for $i>0$,

\begin{displaymath}
{X_i(a^k)= \left\{ \begin{array}{ll}
\infty, & \textrm{if $k=0$}\\
max(k-i,0), & \textrm{if $k>0$,}\\
\end{array} \right.}
\end{displaymath}

\begin{center}

$Y_i(a^k)=k$, for any $k\geq 0$.

\end{center}

For any $i,j$, we have $\eta_A((X_i,Y_i),a,(X_j,Y_j))=0$ if $i\leq j+1$ and $\infty$ otherwise, $J_A=(X_0,Y_0)$ and $G_A=R_A\setminus J_A$. In this case, ${\cal U}_A$ is not a finite weighted automaton.}

\end{example}

\section {The Universality of the Universal Weighted Automaton}
In this section, we show the weighted automaton ${\cal U}_A$ defined in the previous section satisfies the following universal property.
\begin{definition}
\label{dfn:universal-property}
Suppose $A\in\allseries$ and $\cal U$ is a weighted automaton that recognizes $A$. We say $\cal U$ satisfies the \emph{universal property} if there exists a morphism from $\cal B$ to $\cal U$ for any weighted automaton $\cal B$ such that $|{\cal B}|\leq A$.
\end{definition}

To demonstrate the universality of ${\cal U}_A$, we introduce a canonical mapping for each weighted automaton $\cal B$ that recognizes a subset of $A$.
\begin{definition}
Let $S$ be a complete c-semiring. Suppose $A\in\allseries$ and ${\cal B}=(P,\Sigma,\eta,J,G)$ is a weighted automaton such that $|{\cal B}|\leq A$. We define $\varphi_{\cal B}:
P\rw R_A$ by $\varphi_{\cal B}(p)=(X_p,Y_p)$ for any $p\in P$, where
\begin{equation}
\mbox{$Y_p=Past_{{\cal B}}(p)\backslash A$ and $X_p=A/Y_p$.}
\end{equation}
We call $\varphi_{\cal B}$ the canonical mapping from $\cal B$ to ${\cal U}_A$.
\end{definition}

The following lemma shows that  $\varphi$ is a morphism.
\begin{lemma}\label{lemma:universality}
Let $S$ be a complete c-semiring. Suppose $A\in\allseries$. If ${\cal B}=(P,\Sigma,\eta,J,G)$ is a weighted automaton such that $|{\cal B}|\leq A$, then the canonical mapping $\varphi_{\cal B}$ is a morphism from ${\cal B}$ into ${\cal U}_A$.
\end{lemma}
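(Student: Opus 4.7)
The plan is to verify that $\varphi_{\mathcal{B}}$ is well-defined, i.e.\ that $(X_p,Y_p)\in R_A$ for every $p\in P$, and then to check the three conditions of Definition~\ref{de:morphism}: that initial weights, final weights, and transition weights are all dominated by their images. The three clauses are of increasing difficulty; the transition clause is where the real work sits.

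For well-definedness, by construction $Y_p=Past_{\mathcal{B}}(p)\backslash A$ and $X_p=A/Y_p$. The closure identity $(A/(X\backslash A))\backslash A=X\backslash A$ from Proposition~\ref{pro:residual quotient}(vi), applied to $X=Past_{\mathcal{B}}(p)$, yields $X_p\backslash A=Y_p$; combined with $X_p=A/Y_p$, Proposition~\ref{pro:factor-language}(i) then says $(X_p,Y_p)\in R_A$.

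For the initial- and final-weight conditions, I would first note that $Past_{\mathcal{B}}(p)(\varepsilon)=J(p)$ and $Fut_{\mathcal{B}}(p)(\varepsilon)=G(p)$ fall straight out of the definition of $\ds$ at $\varepsilon$. The initial-weight inequality then follows from $J(p)\cdot Y_p=(J(p)\varepsilon)\,Y_p\le Past_{\mathcal{B}}(p)\,Y_p\le A$, the last step being the defining property of $\backslash$; Lemma~\ref{le:inclusion}(1) translates this into $J(p)\le Y_p\rw_{incl}A$, and by Proposition~\ref{pro:ufa1} this last quantity is exactly $X_p(\varepsilon)=J_A(X_p,Y_p)$. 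The final-weight clause is symmetric: commutativity of $\otimes$ in a c-semiring gives $Past_{\mathcal{B}}(p)\cdot(G(p)\varepsilon)\le|\mathcal{B}|\le A$, whence $G(p)\le Past_{\mathcal{B}}(p)\rw_{incl}A=Y_p(\varepsilon)=G_A(X_p,Y_p)$.

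The main obstacle is the transition condition. Using Proposition~\ref{pro:ufa1}, showing $\eta(p,\sigma,q)\le\eta_A((X_p,Y_p),\sigma,(X_q,Y_q))$ is equivalent to showing $\eta(p,\sigma,q)\le\sigma Y_q\rw_{incl}Y_p$, which by Lemma~\ref{le:inclusion}(1) reduces in turn to $Past_{\mathcal{B}}(p)\cdot\eta(p,\sigma,q)\sigma\cdot Y_q\le A$. The key fact making this bound work is a one-step propagation inequality: for every $u\in\sa$,
\[
Past_{\mathcal{B}}(p)(u)\otimes\eta(p,\sigma,q)\le Past_{\mathcal{B}}(q)(u\sigma),
\]
which follows directly from the recursive identity~(\ref{eq:theta1and2}) for $\ds$ together with the definition of $Past_{\mathcal{B}}$. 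Summing over decompositions $\theta=u\sigma v$ and dominating each term by the corresponding summand in the full convolution $(Past_{\mathcal{B}}(q)\,Y_q)(\theta)$, which is itself bounded by $A(\theta)$ since $Y_q=Past_{\mathcal{B}}(q)\backslash A$, closes the argument; idempotency of $\oplus$ in the c-semiring $S$ is used tacitly to pass from the restricted sum over $u\sigma v$-decompositions to the full convolution without violating the inequality.
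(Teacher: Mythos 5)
Your proposal is correct and follows essentially the same route as the paper: the heart of both arguments is the one-step propagation inequality $\eta(p,\sigma,q)\,Past_{\mathcal{B}}(p)\sigma\leq Past_{\mathcal{B}}(q)$, combined with $Past_{\mathcal{B}}(q)Y_q\leq A$ and the adjunction property of $\backslash$ to conclude $\eta(p,\sigma,q)\sigma Y_q\leq Y_p$, hence $\eta(p,\sigma,q)\leq\sigma Y_q\rw_{incl}Y_p=\eta_A(\varphi(p),\sigma,\varphi(q))$. You are somewhat more explicit than the paper about well-definedness (via Proposition~\ref{pro:residual quotient}(vi) and Proposition~\ref{pro:factor-language}(i)) and about the evaluation at $\varepsilon$ for the initial and final weights, but these are elaborations of the same argument, not a different one.
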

\begin{proof}
If $\varphi_{\cal B}(p)=(X_p,Y_p)$, then it is obvious that $Past_{{\cal B}}(p)\leq X_p$ and $Fut_{{\cal B}}(p)\leq Y_p$. It follows that $J_A(\varphi_{\cal B}(p))=X_p(\varepsilon)\geq Past_{{\cal B}}(p)(\varepsilon)\geq J(p)$ and $G_A(\varphi_{\cal B}(p))=Y_p(\varepsilon)\geq Fut_{{\cal B}}(p)(\varepsilon)\geq G(p)$.

For the remainder part, notice that $\eta(p,\sigma,q)Past_{{\cal B}}(p)\sigma \leq Past_{{\cal B}}(q)$. Then
$$\eta(p,\sigma,q)Past_{{\cal B}}(p)\sigma Y_q\leq Past_{{\cal B}}(q)Y_q\leq A,$$
i.e., $Past_{{\cal B}}(p)(\eta(p,\sigma,q)\sigma Y_q)\leq A$. Hence, $\eta(p,\sigma,q)\sigma Y_q\leq Y_p$. Then it follows that $$\eta(p,\sigma,q)\leq \sigma Y_q\rw_{incl} Y_p=\eta_A(\varphi(p),\sigma,\varphi(q)).$$
Therefore, $\varphi_{\cal B}$ is a morphism from ${\cal B}$ into ${\cal U}_A$.
\end{proof}
The universality of ${\cal U}_A$ follows immediately.
\begin{theorem}\label{th:universality}
Let $S$ be a complete c-semiring. For $A\in\allseries$, the universal weighted automaton ${\cal U}_A$ satisfies the universal property, i.e. there is a morphism to ${\cal U}_A$ from any weighted automaton ${\cal B}$ such that $|{\cal B}|\leq A$.
\end{theorem}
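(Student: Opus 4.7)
The plan is to observe that this theorem is an immediate corollary of the preceding Lemma~\ref{lemma:universality}. For any weighted automaton $\cal B$ with $|\cal B|\leq A$, the canonical mapping $\varphi_{\cal B}:P\to R_A$ defined in the paragraph just before that lemma is already shown, in the lemma itself, to be a morphism from $\cal B$ into $\cal U_A$. Since Definition~\ref{dfn:universal-property} asks exactly for the existence of such a morphism from any $\cal B$ with $|\cal B|\leq A$, the theorem follows on the nose. Concretely, my written proof is a single sentence: ``Take $\varphi_{\cal B}$ and apply Lemma~\ref{lemma:universality}.''

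If I were instead redoing the lemma from scratch in order to convince the reader, I would organize the work into three small verifications. First, $\varphi_{\cal B}(p)=(X_p,Y_p)$ genuinely lies in $R_A$: this follows from Proposition~\ref{pro:factor-language}(i) combined with the closure identity of Proposition~\ref{pro:residual quotient}(vi) applied to $X_p=A/Y_p$ and $Y_p=Past_{\cal B}(p)\backslash A$. Second, the initial- and final-weight inequalities reduce (at $\varepsilon$) to $Past_{\cal B}(p)\leq X_p$ and $Fut_{\cal B}(p)\leq Y_p$; the first holds by the residual adjunction directly, and the second uses Proposition~\ref{pro:factor-automata} (giving $Past_{\cal B}(p)Fut_{\cal B}(p)\leq|\cal B|\leq A$) together with the adjunction to conclude $Fut_{\cal B}(p)\leq Past_{\cal B}(p)\backslash A=Y_p$.

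The only subtle step is the transition inequality $\eta(p,\sigma,q)\leq\eta_A(\varphi_{\cal B}(p),\sigma,\varphi_{\cal B}(q))$. The plan here is to combine the automaton identity $\eta(p,\sigma,q)\cdot Past_{\cal B}(p)\sigma\leq Past_{\cal B}(q)$ with the residual defining $Y_p$ to get $\eta(p,\sigma,q)\sigma Y_q\leq Y_p$, then convert this into the required inclusion-degree bound via Lemma~\ref{le:inclusion}(1) and the alternative formula $\eta_A((X,Y),\sigma,(X',Y'))=\sigma Y'\rw_{incl}Y$ from Proposition~\ref{pro:ufa1}. I do not anticipate any real obstacle; the one thing to keep track of is that although $\eta_A$ is stated in terms of the product $X\sigma Y'$, Proposition~\ref{pro:ufa1} supplies two equivalent reformulations, and the one involving only $\sigma Y'$ and $Y$ is what makes the adjunction argument go through cleanly.
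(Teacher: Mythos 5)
Your proposal matches the paper exactly: the paper gives no separate proof of Theorem~\ref{th:universality}, stating only that it ``follows immediately'' from Lemma~\ref{lemma:universality} applied to the canonical mapping $\varphi_{\cal B}$. Your supplementary sketch of the lemma itself also reproduces the paper's argument step for step (well-definedness via Proposition~\ref{pro:residual quotient}(vi), the $\varepsilon$-evaluations for the initial and final weights, and the adjunction argument converting $\eta(p,\sigma,q)\sigma Y_q\leq Y_p$ into the inclusion-degree bound), so there is nothing to add.
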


%In the following we show that ${\cal U}_A$ is the smallest weighted automaton which accepts $A$ and has the universal property.

The following lemma establishes the connection between the formal power series accepted by two weighted automata connected by a morphism.

%We assume that ${\cal A}=(Q,\Sigma,\delta,I,F)$ and ${\cal B}=(P,\Sigma,\eta,J,G)$ are two weighted automata in this section.

\begin{lemma}\label{lemma:morphism1}
Let $S$ be a complete c-semiring. Suppose ${\cal A}=(Q,\Sigma,\delta,I,F)$ and ${\cal B}=(P,\Sigma,\eta,J,G)$ are two weighted automata. If $\varphi$ is a morphism from ${\cal A}$ into ${\cal B}$, then we have
\begin{equation}
\mbox{$Past_{{\cal A}}(q)\leq Past_{{\cal B}}(\varphi(q))$, $Fut_{{\cal A}}(q)\leq Fut_{{\cal B}}(\varphi(q))$,}
\end{equation}
for any $q\in Q$, and thus, $|{\cal A}|\leq |{\cal B}|$. If $\varphi$ is a strong homomorphism, then $|{\cal A}|=|{\cal B}|$.
\end{lemma}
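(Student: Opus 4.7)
The plan is to first extend the morphism condition from single-letter transitions to the iterated transition function $\ds$, then apply this pointwise to derive the inequalities on $Past$ and $Fut$, and finally combine them to obtain $|{\cal A}|\leq |{\cal B}|$. The strong homomorphism case requires upgrading the key inequality to an equality.

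First, I would prove by induction on $|\theta|$ that for all $p,q\in Q$ and all $\theta\in\sa$,
\[
\ds(p,\theta,q)\leq \es(\varphi(p),\theta,\varphi(q)).
\]
The base case $\theta=\varepsilon$ is immediate: both sides equal $1$ when $p=q$ and the left side equals $0$ otherwise. For the inductive step with $\theta=\sigma\theta'$, expand $\ds(p,\sigma\theta',q)=\sum_{r\in Q}\delta(p,\sigma,r)\otimes \ds(r,\theta',q)$; by the morphism inequality $\delta(p,\sigma,r)\leq \eta(\varphi(p),\sigma,\varphi(r))$, the inductive hypothesis, and monotonicity of $\otimes$, each summand is bounded by $\eta(\varphi(p),\sigma,\varphi(r))\otimes \es(\varphi(r),\theta',\varphi(q))$. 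The resulting sum over $r\in Q$ is then bounded by the sum over $p'\in P$ of $\eta(\varphi(p),\sigma,p')\otimes \es(p',\theta',\varphi(q))$, which equals $\es(\varphi(p),\sigma\theta',\varphi(q))$. Here I use the idempotency of $\oplus$ in the c-semiring so that a non-injective $\varphi$ causes no over-counting.

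With this in hand, the $Past$ and $Fut$ inequalities follow by termwise bounding. Using $I(p)\leq J(\varphi(p))$, the quantity $Past_{\cal A}(q)(\theta)=\sum_{p\in Q}I(p)\otimes \ds(p,\theta,q)$ is dominated by $\sum_{p\in Q}J(\varphi(p))\otimes \es(\varphi(p),\theta,\varphi(q))$, which in turn is at most $\sum_{p'\in P}J(p')\otimes \es(p',\theta,\varphi(q))=Past_{\cal B}(\varphi(q))(\theta)$; the $Fut$ case is symmetric, using $F(q)\leq G(\varphi(q))$. Applying the same termwise bounding to the expansion $|{\cal A}|(\theta)=\sum_{p,q}I(p)\otimes \ds(p,\theta,q)\otimes F(q)$ immediately yields $|{\cal A}|\leq |{\cal B}|$.

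For the strong homomorphism case, I would upgrade the key inequality to the equality
\[
\es(\varphi(p),\theta,p')=\sum\{\ds(p,\theta,r)\mid \varphi(r)=p'\}
\]
for every $p\in Q$ and $p'\in P$, again by induction on $|\theta|$, using the strong condition $\eta(\varphi(q),\sigma,p')=\sum\{\delta(q,\sigma,r)\mid \varphi(r)=p'\}$ at each step and invoking associativity and distributivity to re-index the inner sums. Combined with $J(p')=\sum\{I(q)\mid \varphi(q)=p'\}$ and $G(\varphi(q))=F(q)$, expanding $|{\cal B}|(\theta)=\sum_{p',q'\in P}J(p')\otimes \es(p',\theta,q')\otimes G(q')$ and partitioning the two outer sums over $P$ by the fibers of $\varphi$ collapses the expression into $\sum_{p,q\in Q}I(p)\otimes \ds(p,\theta,q)\otimes F(q)=|{\cal A}|(\theta)$. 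The main obstacle is precisely this fiber re-indexing in the strong case: one must verify that the three ``sum over preimages'' identities interlock cleanly so that every $\ds$-path in $\cal A$ is counted exactly as in $\cal B$; the inequality direction, by contrast, is essentially monotonicity and bookkeeping.
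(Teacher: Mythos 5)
Your proposal is correct and follows essentially the same route as the paper: bound (respectively, compute) the extended transition weights along $\varphi$, then propagate through $Past$, $Fut$, and $|\cdot|$, and in the strong case use the fiber identity $\es(\varphi(q),\theta,p)=\sum\{\ds(q,\theta,r)\mid\varphi(r)=p\}$ to collapse $|{\cal B}|$ back to $|{\cal A}|$. The only difference is that you spell out by induction the two key facts about $\ds$ versus $\es$ (including the role of idempotency of $\oplus$ when $\varphi$ is not injective), which the paper uses implicitly or dismisses as easily verified.
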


\begin{proof}
For any $\theta\in\sa$,
\begin{eqnarray*}
Past_{{\cal A}}(q)(\theta) &=& \bigvee_{i\in Q}I(i)\otimes \ds(i,\theta,q) \\
 &\leq& \bigvee_{i\in Q}J(\varphi(i))\otimes \eta^{\ast}(\varphi(i),\theta,\varphi(q)) \\
 &\leq& \bigvee_{j\in P}J(j)\otimes \eta^{\ast}(j,\theta,\varphi(q)) \\
 &=& Past_{{\cal B}}(\varphi(q))(\theta). \\
Fut_{{\cal A}}(q)(\theta) &=& \bigvee_{t\in Q}\ds(q,\theta,t)\otimes F(t) \\
&\leq&  \bigvee_{t\in Q}\eta^{\ast}(\varphi(q),\theta,\varphi(t))\otimes G(\varphi(t)) \\
&\leq& \bigvee_{p\in P}\eta^{\ast}(\varphi(q),\theta,p)\otimes G(p) \\
&=& Fut_{{\cal B}}(\varphi(q))(\theta)).
\end{eqnarray*}

Hence, $Past_{{\cal A}}(q)\leq Past_{{\cal B}}(\varphi(q))$ and $Fut_{{\cal A}}(q)\leq Fut_{{\cal B}}(\varphi(q))$. Then it follows that $|{\cal A}|\leq |{\cal B}|$.

If $\varphi$ is a strong homomorphism, then it can be easily verified that $\eta^{\ast}(\varphi(q),\theta,p)=\bigvee\{\ds(q,\theta,r)| \varphi(r)=p\}$ for any $\theta\in\sa$.
Then it follows that
\begin{eqnarray*}
|{\cal B}|(\theta) &=& \bigvee_{j,p\in P}J(j)\otimes \eta^{\ast}(j,\theta,p)\otimes G(p) \\
 &=& \bigvee_{i\in Q,p\in P}I(i)\otimes  \eta^{\ast}(\varphi(i),\theta,p)\otimes G(p) \\
 &=& \bigvee_{i\in Q, \varphi(r)=p}I(i)\otimes\ds(i,\theta,r)\otimes G(\varphi(r)) \\
 &=& \bigvee_{i,r\in Q}I(i)\otimes \ds(i,\theta,r)\otimes F(r)=|{\cal A}|(\theta).
\end{eqnarray*}
Hence, $|{\cal A}|=|{\cal B}|$.
\end{proof}

Suppose ${\cal A}=(Q,\Sigma,\delta,I,F)$ is a weighted automaton. We say two states $p$ and $q$ in $Q$ are \emph{mergible} in ${\cal A}$ if there exist a weighted automaton ${\cal B}=(P,\Sigma,\eta,J,G)$ that accepts the same language as ${\cal A}$ and  a surjective morphism $\varphi: {\cal A}\rw {\cal B}$ such that $\varphi(p)=\varphi(q)$.

\begin{proposition}\label{pro:mergible}
Let $S$ be a complete c-semiring. Suppose $A \in \allseries$. Then there is no mergible states in the universal weighted automaton ${\cal U}_A$.
\end{proposition}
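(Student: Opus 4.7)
The plan is to prove this by contradiction. Suppose $(X_1, Y_1)$ and $(X_2, Y_2)$ are two distinct factorizations in $R_A$ that are mergible in $\mathcal{U}_A$. By the definition of mergibility, there exists a weighted automaton $\mathcal{B} = (P, \Sigma, \eta, J, G)$ with $|\mathcal{B}| = A$ and a surjective morphism $\varphi: \mathcal{U}_A \to \mathcal{B}$ such that $\varphi(X_1, Y_1) = \varphi(X_2, Y_2) = p$ for some $p \in P$.

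First I would apply Lemma \ref{lemma:morphism1} to the morphism $\varphi$: for each $i \in \{1,2\}$,
\begin{equation*}
Past_{\mathcal{U}_A}(X_i, Y_i) \leq Past_{\mathcal{B}}(p), \qquad Fut_{\mathcal{U}_A}(X_i, Y_i) \leq Fut_{\mathcal{B}}(p).
\end{equation*}
Combining this with Proposition \ref{pro:future-past}, which identifies $Past_{\mathcal{U}_A}(X_i, Y_i) = X_i$ and $Fut_{\mathcal{U}_A}(X_i, Y_i) = Y_i$, I obtain $X_i \leq Past_{\mathcal{B}}(p)$ and $Y_i \leq Fut_{\mathcal{B}}(p)$ for $i = 1, 2$.

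Next I would observe, via Proposition \ref{pro:factor-automata}, that
\begin{equation*}
Past_{\mathcal{B}}(p)\, Fut_{\mathcal{B}}(p) \leq |\mathcal{B}| = A,
\end{equation*}
so that $(Past_{\mathcal{B}}(p), Fut_{\mathcal{B}}(p))$ is a sub-factorization of $A$. By Proposition \ref{pro:factor-language}(ii), there exists a genuine factorization $(X, Y) \in R_A$ with $Past_{\mathcal{B}}(p) \leq X$ and $Fut_{\mathcal{B}}(p) \leq Y$. Chaining the inequalities yields $X_i \leq X$ and $Y_i \leq Y$ for both $i = 1$ and $i = 2$, while $XY \leq A$.

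Finally, I would invoke the maximality clause in Definition \ref{de:factor}: since each $(X_i, Y_i)$ is itself a (maximal) factorization and $X_i \leq X$, $Y_i \leq Y$ with $XY \leq A$, maximality forces $X_i = X$ and $Y_i = Y$. Hence $(X_1, Y_1) = (X, Y) = (X_2, Y_2)$, contradicting the assumption that the two factorizations are distinct. The only delicate point is making sure the surjectivity of $\varphi$ together with $|\mathcal{B}| = A$ is actually used — it enters through Proposition \ref{pro:factor-automata} applied to $\mathcal{B}$, which needs $|\mathcal{B}| = A$ (not merely $\leq A$) so that the sub-factorization argument lands inside $A$; the rest is a direct chaining of the previously established lemmas.
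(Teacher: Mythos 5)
Your proof is correct and follows essentially the same route as the paper: it pushes $Past$ and $Fut$ through the morphism (Lemma~\ref{lemma:morphism1} plus Proposition~\ref{pro:future-past}), bounds their product by $A$ via Proposition~\ref{pro:factor-automata}, and then invokes maximality of factorizations to force the two states to coincide. The only cosmetic difference is that you pass through Proposition~\ref{pro:factor-language}(ii) to land on an explicit factorization $(X,Y)$, while the paper applies maximality directly to the sub-factorization $(X_1\vee X_2,\,Y_1\vee Y_2)$; also note that $|\mathcal{B}|\leq A$ would already suffice for your final step, so the equality $|\mathcal{B}|=A$ is not actually the delicate point you suggest.
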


\begin{proof}
Otherwise, there is a weighted automaton ${\cal C}$ and a surjective morphism $\varphi: {\cal U}_A\rw {\cal C}$ such that $|{\cal U}_A|=| {\cal C}|=A$,
and $\varphi(X,Y)=\varphi(X',Y')=s$ for two distinct states $(X,Y)$ and $(X',Y')$ in ${\cal U}_A$. Then we have,
\begin{eqnarray*}
X=Past_{{\cal U}_A}(X,Y)\leq Past_{{\cal C}}(\varphi(X,Y))=Past_{{\cal C}}(s), \\
X'=Past_{{\cal U}_A}(X',Y')\leq Past_{{\cal C}}(\varphi(X',Y'))=Past_{{\cal C}}(s),
\end{eqnarray*}
and thus, $X\vee X'\leq Past_{{\cal C}}(s)$. Similarly, we have $Y\vee Y'\leq Fut_{{\cal C}}(s)$. Therefore,
$$(X\vee X')(Y\vee Y')\leq Past_{{\cal C}}(s)Fut_{{\cal C}}(s)\leq A.$$
This contradicts with the maximality of the factorization $(X,Y)$ and $(X',Y')$.
\end{proof}

%The universal weighted automaton is a useful tool to study the weighted automaton accepting a given formal power series with minimal size. As we know, the minimization of NFA of a regular language is a PSPACE-complete problem. To study the minimal weighted automaton, some other approximate minimal weighted automaton is proposed. For example, the minimal quotient, and its dual notion, the minimal co-quotient, etc. Of course, the minimal quotient (or co-quotient) is in general not the minimal one.

In fact, ${\cal U}_A$ is the largest non-mergible weighted automaton.
\begin{corollary}\label{co:largest}
Let $S$ be a complete c-semiring. Suppose $A\in\allseries$. Then ${\cal U}_A$ is the largest weighted automaton among those that accept $A$ but have no mergible states.
\end{corollary}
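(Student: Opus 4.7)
The plan is to read \emph{largest} as follows: for every weighted automaton ${\cal B}$ that recognizes $A$ and has no mergible states, the canonical morphism $\varphi_{\cal B}\colon {\cal B}\rw {\cal U}_A$ from Lemma~\ref{lemma:universality} is one-to-one, so ${\cal B}$ is a sub-automaton of ${\cal U}_A$ in the sense of Definition~\ref{de:morphism}. That ${\cal U}_A$ itself recognizes $A$ and has no mergible states is the content of Theorem~\ref{th:universal 1} and Proposition~\ref{pro:mergible}, so only the injectivity claim needs argument.

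I would argue by contradiction. Let ${\cal B}=(P,\Sigma,\eta,J,G)$ recognize $A$ and have no mergible states, and suppose $\varphi_{\cal B}(p)=\varphi_{\cal B}(q)$ for two distinct $p,q\in P$. Introduce the equivalence $p'\sim q'$ iff $\varphi_{\cal B}(p')=\varphi_{\cal B}(q')$ and form the quotient automaton ${\cal C}=(P/{\sim},\Sigma,\pi(\eta),\pi(J),\pi(G))$ via the sum-of-preimages prescription in Definition~\ref{de:morphism}, so that the projection $\pi\colon {\cal B}\rw {\cal C}$ is automatically a surjective morphism with $\pi(p)=\pi(q)$. It therefore suffices to establish $|{\cal C}|=A$ in order to contradict the mergibility hypothesis.

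The crux, and the step I expect to be the main obstacle, is to promote $\varphi_{\cal B}$ to a morphism $\varphi_{\cal C}\colon {\cal C}\rw {\cal U}_A$, defined by $\varphi_{\cal C}([p'])=\varphi_{\cal B}(p')$ (well-defined by the choice of $\sim$). The morphism inequalities for $\varphi_{\cal C}$ require that the aggregated weights over an equivalence class do not exceed the coordinates of a single factorization. Writing $(X,Y)=\varphi_{\cal B}(p')$ for all $p'\sim p$, I would establish the key estimate
\begin{equation*}
\Big(\sum_{p'\sim p} Past_{\cal B}(p')\Big)\,Y \;\leq\; \sum_{p'\sim p} Past_{\cal B}(p')\, Y_{p'} \;\leq\; \sum_{p'\sim p} A \;=\; A,
\end{equation*}
using $Y_{p'}=Y$ on the class $[p]$, the inequality $Past_{\cal B}(p')\,Y_{p'}\leq A$ coming from the definition of $Y_{p'}$ as a left residual, and idempotency of $\oplus$ in the c-semiring to collapse the final sum. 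Maximality of $X=A/Y$ then delivers $\sum_{p'\sim p} Past_{\cal B}(p')\leq X$; the symmetric computation gives $\sum_{p'\sim p} Fut_{\cal B}(p')\leq Y$; and the transition inequality for $\varphi_{\cal B}$, summed over preimage pairs and again collapsed by idempotency, bounds $\eta_{\cal C}([p],\sigma,[q])$ by $\eta_A(\varphi_{\cal B}(p),\sigma,\varphi_{\cal B}(q))$.

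With $\varphi_{\cal C}$ in hand, Lemma~\ref{lemma:morphism1} applied to $\pi$ yields $A=|{\cal B}|\leq|{\cal C}|$, and applied to $\varphi_{\cal C}$ yields $|{\cal C}|\leq|{\cal U}_A|=A$ by Theorem~\ref{th:universal 1}. Thus $|{\cal C}|=A=|{\cal B}|$, so $\pi$ is a language-preserving surjective morphism identifying $p$ and $q$; this witnesses that $p$ and $q$ are mergible in ${\cal B}$, contradicting the assumption. Hence $\varphi_{\cal B}$ is injective and ${\cal B}$ embeds as a sub-automaton of ${\cal U}_A$, proving that ${\cal U}_A$ is the largest such automaton.
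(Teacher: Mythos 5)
Your proposal is correct and follows the same underlying route as the paper: the canonical morphism $\varphi_{\cal B}\colon{\cal B}\rw{\cal U}_A$ of Lemma~\ref{lemma:universality} must be injective when ${\cal B}$ has no mergible states, so ${\cal U}_A$ dominates every such automaton. The paper disposes of this in one sentence (an automaton accepting $A$ with more states than ${\cal U}_A$ is sent into ${\cal U}_A$ by a necessarily non-injective morphism), silently relying on the implication ``$\varphi_{\cal B}(p)=\varphi_{\cal B}(q)$ with $p\neq q$ implies $p,q$ mergible,'' which it never proves here or in Corollary~\ref{co:m-minimal}. Your construction of the quotient ${\cal C}={\cal B}/{\sim}$ and the verification that $|{\cal C}|=A$ --- squeezing $A=|{\cal B}|\leq|{\cal C}|\leq|{\cal U}_A|=A$ via Lemma~\ref{lemma:morphism1} and Theorem~\ref{th:universal 1}, with the residual-maximality and idempotency arguments showing that the class-aggregated weights stay below the coordinates of the common factorization $(X,Y)$ --- is exactly the missing witness of mergibility, and it checks out: each $Past_{\cal B}(p')Y_{p'}\leq A$ by definition of the left residual, the sum collapses by idempotency of $\oplus$, and $A/Y=X$ bounds the aggregate. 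So you have not taken a different route, but you have supplied a step the paper's proof (and the proof of Corollary~\ref{co:m-minimal}) merely asserts.
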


\begin{proof}
A weighted automaton ${\cal B}$ accepting $A$ that has strictly more states than ${\cal U}_A$ is sent into ${\cal U}_A$ by a morphism which is necessarily non-injective.
\end{proof}
Moreover, ${\cal U}_A$ is the smallest `universal' weighted automaton.
\begin{proposition}\label{pro:least}
Let $S$ be a complete c-semiring. Suppose $A\in\allseries$. Then  ${\cal U}_A$ is the smallest weighted automaton among those that accept $A$ and have the universal property.
\end{proposition}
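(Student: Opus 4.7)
My plan is to use the two-way morphism trick together with Proposition~\ref{pro:future-past} and the maximality condition built into the definition of factorization.

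Let $\mathcal{V}=(P,\Sigma,\eta,J,G)$ be an arbitrary weighted automaton accepting $A$ and enjoying the universal property. I want to exhibit an injective morphism $\psi:\mathcal{U}_A\to\mathcal{V}$, so that $\mathcal{U}_A$ is a sub-automaton of $\mathcal{V}$ in the sense of Definition~\ref{de:morphism}. First, since $|\mathcal{U}_A|=A\leq A$ by Theorem~\ref{th:universal 1}, the universal property of $\mathcal{V}$ delivers a morphism $\psi:\mathcal{U}_A\to\mathcal{V}$. Second, since $|\mathcal{V}|=A\leq A$, the universality of $\mathcal{U}_A$ (Theorem~\ref{th:universality}) delivers a morphism $\varphi:\mathcal{V}\to\mathcal{U}_A$. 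The composition $\varphi\circ\psi:\mathcal{U}_A\to\mathcal{U}_A$ is then a morphism from $\mathcal{U}_A$ to itself.

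The heart of the argument is to show that this endomorphism is forced to be the identity. Pick any state $(X,Y)\in R_A$ and let $(X',Y')=(\varphi\circ\psi)(X,Y)\in R_A$. By Lemma~\ref{lemma:morphism1} applied to $\varphi\circ\psi$, we have $Past_{\mathcal{U}_A}(X,Y)\leq Past_{\mathcal{U}_A}(X',Y')$ and $Fut_{\mathcal{U}_A}(X,Y)\leq Fut_{\mathcal{U}_A}(X',Y')$. Proposition~\ref{pro:future-past} identifies these Past/Fut series with the factorization components themselves, giving $X\leq X'$ and $Y\leq Y'$. But $(X',Y')\in R_A$ means in particular $X'Y'\leq A$, so the maximality clause in the definition of factorization (Definition~\ref{de:factor}) forces $X=X'$ and $Y=Y'$. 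Hence $\varphi\circ\psi$ is the identity on $R_A$.

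From $\varphi\circ\psi=\mathrm{id}_{R_A}$ it follows immediately that $\psi$ is injective, so $\mathcal{U}_A$ is a sub-automaton of $\mathcal{V}$. The step I expect to require the most care is the appeal to maximality: one must make sure that $(\varphi\circ\psi)(X,Y)$ indeed lands in $R_A$ (this is automatic since the codomain of $\varphi$ is $R_A$) and that the comparison $X\leq X'$, $Y\leq Y'$ in $\allseries$ is used together with $X'Y'\leq A$ to invoke maximality; everything else is a formal consequence of Theorem~\ref{th:universality}, Lemma~\ref{lemma:morphism1}, and Proposition~\ref{pro:future-past}.
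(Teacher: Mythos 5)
Your proof is correct, but it takes a different route from the paper's. The paper's proof is a two-line argument: the universal property of $\mathcal{V}$ yields a morphism $\psi:\mathcal{U}_A\to\mathcal{V}$, and injectivity of $\psi$ is then deduced from Proposition~\ref{pro:mergible} (``$\mathcal{U}_A$ has no mergible states''). You instead compose $\psi$ with the reverse morphism $\varphi:\mathcal{V}\to\mathcal{U}_A$ supplied by Theorem~\ref{th:universality} and show that $\varphi\circ\psi$ is the identity on $R_A$, using Proposition~\ref{pro:future-past} to identify $Past/Fut$ of a state with its factorization components and then invoking the maximality clause of Definition~\ref{de:factor}. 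The two arguments rest on the same underlying machinery --- the proof of Proposition~\ref{pro:mergible} itself runs through $Past$, $Fut$, and maximality --- but your version is the more self-contained and, in one respect, the more careful of the two: the paper's inference from ``no mergible states'' to ``every morphism out of $\mathcal{U}_A$ is injective'' is not immediate, since mergibility is defined via \emph{surjective} morphisms onto automata accepting the same series, and $\psi$ need not be surjective; closing that gap would require passing to the image automaton. Your endomorphism argument sidesteps this entirely, at the modest cost of needing the (easily checked) fact that a composition of morphisms is a morphism. The price you pay is that you prove slightly more than stated (that $\mathcal{U}_A$ embeds as a sub-automaton with $\varphi\circ\psi=\mathrm{id}$), whereas the paper only needs a state count; both deliver the proposition.
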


\begin{proof}
Suppose that ${\cal C}$ has the universal property with respect to the $A$. As ${\cal U}_A$ accepts $A$, there should be a morphism from ${\cal U}_A$ into ${\cal C}$. As ${\cal U}_A$ has no mergible states, this morphism should be injective: ${\cal C}$ has at least as many states as ${\cal U}_A$.
\end{proof}

Applying Theorem \ref{th:universality} to weighted automata that accepts $A$, we obtain the following corollary.

\begin{corollary}\label{co:m-minimal}
Let $S$ be a complete c-semiring. Suppose $A\in\allseries$ and ${\cal B}$ is a weighted automaton that accepts $A$. If $\cal B$ has no mergible states, then ${\cal B}$ is a sub-automaton of ${\cal U}_A$.
\end{corollary}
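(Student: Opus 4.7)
The plan is to exhibit an injective morphism from $\mathcal{B}$ into $\mathcal{U}_A$; by Definition~\ref{de:morphism} this is exactly what is meant by saying $\mathcal{B}$ is a sub-automaton of $\mathcal{U}_A$. Since $|\mathcal{B}|=A\leq A$, Theorem~\ref{th:universality} (equivalently Lemma~\ref{lemma:universality}) provides a canonical morphism $\varphi=\varphi_{\mathcal{B}}:\mathcal{B}\to\mathcal{U}_A$. The whole proof then reduces to showing that $\varphi$ is one-to-one, and for this I would argue by contradiction against the no-mergible-states hypothesis.

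Suppose two distinct states $p\neq q$ of $\mathcal{B}$ satisfy $\varphi(p)=\varphi(q)$. Define the equivalence relation $r\sim s$ iff $\varphi(r)=\varphi(s)$ on the state set $P$ of $\mathcal{B}$, let $\pi:P\to P/\!\sim$ be the quotient map, and equip $\mathcal{C}=P/\!\sim$ with the pushforward structure as in Definition~\ref{de:morphism}, i.e.\ the initial, final and transition weights in $\mathcal{C}$ are the $\bigvee$-joins (equivalently the $\oplus$-sums, since $\oplus=\vee$ in a c-semiring) of the corresponding weights of $\mathcal{B}$ over each equivalence class. Then $\pi:\mathcal{B}\to\mathcal{C}$ is by construction a surjective morphism with $\pi(p)=\pi(q)$, and $\varphi$ factors as $\varphi=\bar\varphi\circ\pi$ with $\bar\varphi:\mathcal{C}\to\mathcal{U}_A$ injective. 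It remains to check that $|\mathcal{C}|=A$, which identifies $p$ and $q$ as mergible states of $\mathcal{B}$ and delivers the contradiction.

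To see $|\mathcal{C}|=A$, I would sandwich using Lemma~\ref{lemma:morphism1}: applied to the morphism $\pi:\mathcal{B}\to\mathcal{C}$ it gives $|\mathcal{B}|\leq|\mathcal{C}|$, and applied to $\bar\varphi:\mathcal{C}\to\mathcal{U}_A$ it gives $|\mathcal{C}|\leq|\mathcal{U}_A|$. Since $|\mathcal{B}|=A$ and $|\mathcal{U}_A|=A$ by Theorem~\ref{th:universal 1}, one concludes
\[
A=|\mathcal{B}|\leq|\mathcal{C}|\leq|\mathcal{U}_A|=A,
\]
so $|\mathcal{C}|=A$, exactly as required by the definition of mergible states.

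The step that needs the most care is verifying that $\bar\varphi:\mathcal{C}\to\mathcal{U}_A$ is genuinely a morphism, i.e.\ that the morphism inequalities $J(\bar s)\leq J_A(\bar\varphi(\bar s))$, $G(\bar s)\leq G_A(\bar\varphi(\bar s))$ and the transition inequality survive after we replace individual states by their $\sim$-classes and collapse their weights by $\bigvee$. This is where both the idempotency of $\oplus$ and the identity $\oplus=\vee$ on a complete c-semiring are used: each pointwise inequality $I(r)\leq J_A(\varphi(r))$ lifts, under the supremum over the class of $r$, to $\bigvee_{r}I(r)\leq J_A(\varphi(r))=J_A(\bar\varphi(\bar r))$, and similarly for $F$ and $\delta$. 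Once this routine but slightly tedious bookkeeping is done, the contradiction is immediate and the corollary follows; one could alternatively invoke Proposition~\ref{pro:mergible} in spirit, since the same mechanism that forbids mergible states in $\mathcal{U}_A$ forbids collapsing $\mathcal{B}$ beyond what $\varphi$ already does.
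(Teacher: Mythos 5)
Your proposal is correct and follows essentially the same route as the paper: obtain the canonical morphism $\varphi_{\mathcal{B}}:\mathcal{B}\to\mathcal{U}_A$ from Theorem~\ref{th:universality}, then conclude injectivity from the absence of mergible states. The paper states the injectivity step in one line without justification, whereas you supply the missing detail (the quotient automaton $\mathcal{C}$, the sandwich $A=|\mathcal{B}|\leq|\mathcal{C}|\leq|\mathcal{U}_A|=A$ via Lemma~\ref{lemma:morphism1}), which is a faithful elaboration of the intended argument rather than a different one.
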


\begin{proof}
By Theorem \ref{th:universality}, the canonical mapping $\varphi$ from ${\cal B}$ to ${\cal U}_A$ is a morphism. Because ${\cal B}$ has no mergible states, we know $\varphi$ must be one-to-one. Therefore, ${\cal B}$ is a sub-automaton of ${\cal U}_A$.
\end{proof}

It is not difficult to show that any minimal weighted (determinate or non-determinate) automaton that accepts $A$ has no mergible states. The above corollary then suggests a simple way for searching the minimal (non-determinate) weighted automaton that accepts $A$: It suffices to check the sub-automaton of the universal automaton ${\cal U}_A$ which accepts $A$ and has minimal states. The following proposition shows that ${\cal M}_A$, the minimal DWA that accepts $A$, is also a sub-automaton of ${\cal U}_A$.

%Suppose ${\cal A}=(Q,\Sigma,\delta,I,F)$ is a weighted automaton. A state $q\in Q$ is called a \emph{trim state} or a \emph{useful state} if there exists a successful computation through $q$, i.e., there exist $u,v\in \sa$ such that $I(q_0)\otimes\ds(q_0,u,q)\otimes\ds(q,v,t)\otimes F(t)>0$. Write $${\cal A}^t=(Q^t,\Sigma,\delta^t,I^t,F^t),$$ where $Q^t$ is the set of trim states of ${\cal A}$, and $\delta^t$, $I^t$ and $F^t$ are the restriction of $\delta$, $I$ and $F$ in $Q^t$, respectively. We call ${\cal A}^t$ the \emph{trim part} of ${\cal A}$.

%\begin{definition}\label{def:m-minimal}
%Suppose $A\in\allseries$ and ${\cal A}$ is a weighted automaton that accepts $A$. We say ${\cal A}$ is \emph{m-minimal}, if
%\begin{itemize}
%\item [(i)] Every proper sub-automaton of the trim part of ${\cal A}$ accepts a formal power series that is strictly contained in $A$;
%\item [(ii)] Every proper morphic image of ${\cal A}$ accepts a formal power series that contains strictly $A$.
%\end{itemize}
%\end{definition}

%In Definition \ref{def:m-minimal}, a weighted automaton is m-minimal if every state is useful and no two states are mergible. In other words, the automaton ${\cal A}$ is m-minimal if each of its useful states is `necessary' in the sense that if one is removed then some elements of $|{\cal A}|$ are no longer accepted; and that if two are merged, certain elements of $\sa$ are accepted that were not before. The definition is consistent with that of the minimal DWA of a language, stated as follows.

\begin{proposition}\label{pro:m-minimal}
Let $S$ be a complete c-semiring. Suppose $A\in \allseries$. Then ${\cal M}_A$ is a sub-automaton of ${\cal U}_A$.
\end{proposition}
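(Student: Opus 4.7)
The plan is to exhibit an injective morphism $\varphi\colon Q_A'\rw R_A$ from the equivalent minimal DWA ${\cal M}_A'$ of Eq.~(\ref{eq:M-A'}) into ${\cal U}_A$; this suffices since ${\cal M}_A$ and ${\cal M}_A'$ are isomorphic. Recall from Proposition~\ref{pro:quotient-word} that $u^{-1}A = u\backslash A$ for every word $u$, so each state of ${\cal M}_A'$ is simultaneously a left quotient and a left residual. Define
\[
\varphi(u^{-1}A) \;=\; \bigl(A/(u^{-1}A),\; u^{-1}A\bigr).
\]

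First I will argue that the image lies in $R_A$. Applying Proposition~\ref{pro:residual quotient}(vi) with $X=u$ gives $(A/(u\backslash A))\backslash A = u\backslash A$, so the pair $(A/(u^{-1}A),\,u^{-1}A)$ satisfies both identities $X = A/Y$ and $Y = X\backslash A$ demanded by Proposition~\ref{pro:factor-language}(i). Well-definedness of $\varphi$ (the pair depends only on $u^{-1}A$, not on the representative $u$) and injectivity (read off the second coordinate) are then immediate.

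Next I will verify the three morphism conditions. The initial state of ${\cal M}_A'$ is $\varepsilon^{-1}A=A$, and by Proposition~\ref{pro:ufa1}, $J_A(\varphi(A)) = J_A(A/A,\,A) = A\rw_{incl} A = 1$, matching $I_A'(A)=1$. For the final weights, $G_A(\varphi(u^{-1}A)) = (u^{-1}A)(\varepsilon) = A(u) = F_A'(u^{-1}A)$. For the transitions, since $\delta_A'$ is crisp it suffices to show that whenever $\delta_A'(u^{-1}A,\sigma) = (u\sigma)^{-1}A$, we have $\eta_A(\varphi(u^{-1}A),\sigma,\varphi((u\sigma)^{-1}A)) \geq 1$. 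Using the identity $X\sigma Y'\rw_{incl} A = \sigma Y'\rw_{incl} Y$ from Proposition~\ref{pro:ufa1} with $Y = u^{-1}A$ and $Y' = (u\sigma)^{-1}A$, a short calculation shows $(\sigma Y')(\theta) = A(u\sigma\theta') = Y(\sigma\theta')$ when $\theta=\sigma\theta'$ and $(\sigma Y')(\theta)=0$ otherwise, so $\sigma Y' \leq Y$ pointwise and hence $\sigma Y'\rw_{incl} Y = 1$.

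The only conceptually non-trivial step is confirming the image really is a factorization, but once Proposition~\ref{pro:residual quotient}(vi) is invoked the rest is routine bookkeeping and direct computation with the characterizations in Proposition~\ref{pro:ufa1}. Combining the three checks, $\varphi$ is an injective morphism of weighted automata, so ${\cal M}_A$ embeds into ${\cal U}_A$ as a sub-automaton.
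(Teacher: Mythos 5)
Your proof is correct, but it takes a genuinely different route from the paper's. The paper derives the result from its universality machinery: it invokes Corollary~\ref{co:m-minimal} (itself resting on Theorem~\ref{th:universality} and the canonical mapping of Lemma~\ref{lemma:universality}) and then spends the bulk of the proof showing, by contradiction, that ${\cal M}_A$ has no mergible states, so that the canonical morphism into ${\cal U}_A$ must be injective. You instead write down the embedding $u^{-1}A\mapsto\bigl(A/(u^{-1}A),\,u^{-1}A\bigr)$ explicitly and check everything by hand: membership in $R_A$ via $u\backslash A=u^{-1}A$ together with Proposition~\ref{pro:residual quotient}(vi) and Proposition~\ref{pro:factor-language}(i); injectivity for free from the second coordinate; and the three morphism inequalities from Proposition~\ref{pro:ufa1} plus the pointwise inequality $\sigma\,((u\sigma)^{-1}A)\leq u^{-1}A$, which gives inclusion degree $1$. (Your map in fact coincides with the paper's canonical mapping $\varphi_{{\cal M}_A'}$, since $Past_{{\cal M}_A'}(u^{-1}A)\backslash A=\bigwedge_{w\equiv_A u}w^{-1}A=u^{-1}A$, but you never need to know that.) What your approach buys is self-containedness and a much cheaper injectivity argument — no appeal to universality, to mergibility, or to the behaviour of quotient automata. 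What the paper's approach buys is generality: its mergibility argument is stated for arbitrary weighted automata accepting $A$, of which ${\cal M}_A$ is just one instance, and the corollary it routes through is reused elsewhere. Both are valid; yours is arguably the cleaner proof of this particular statement.
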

\begin{proof}
By Corollary~\ref{co:m-minimal}, we need only show that ${\cal M}_A$ has no mergible states.

Recall the minimal DWA that accepts $A$ is ${\cal M}_A=(Q,\Sigma,\delta,q_0,F)$, where
\begin{itemize}
\item [-] $Q=\{u^{-1}A | u\in\sa\}$,
\item [-] $\delta(u^{-1}A,\sigma)=(u\sigma)^{-1}A$,
\item [-] $\ds(u^{-1}A,v)=(uv)^{-1}A$,
\item [-] $q_0=(\varepsilon)^{-1}A=A$,
\item [-] $F:Q\rw S$ is $F(u^{-1}A)=A(u)$.
\end{itemize}
Then for any state $u^{-1}A\in Q$, $\ds(A,u)=u^{-1}A$. We show that there are no mergible states in ${\cal A}$. Otherwise, there are two distinct states $u^{-1}A$, $v^{-1}A$ in $Q$, but there exists another weighted automaton ${\cal B}$ and a morphism $\varphi$ from ${\cal A}$ into ${\cal B}$ such that ${\cal B}$ is the morphic image of ${\cal A}$, $|{\cal A}|=|{\cal B}|$ and $\varphi(u^{-1}A)=\varphi(v^{-1}A)$.

Since $u^{-1}A\not=v^{-1}A$, there exists $w\in\sa$ such that $u^{-1}A(w)\not=v^{-1}A(w)$, i.e., $A(uw)\not=A(vw)$. Note that
\begin{eqnarray*}
A(uw)=|{\cal A}|(uw)=F(\ds(A,uw))=F(\ds(\ds(A,u),w))=F(\ds(u^{-1}A,w)),\\
A(vw)=|{\cal A}|(vw)=F(\ds(A,vw))=F(\ds(\ds(A,v),w))=F(\ds(v^{-1}A,w)).
\end{eqnarray*}
Then
\begin{eqnarray*}
|{\cal B}|(uw) &=& \varphi(F)(\delta^{\ast}(\varphi(A),uw)) \\
 &=& \varphi(F)(\varphi(\ds(A,uw)))\\
&=&\varphi(F)(\varphi(\ds(u^{-1}A,w))) \\
&=& \varphi(F)(\delta^{\ast} (\varphi(u^{-1}A),w))) \\
&=& \varphi(F)(\delta^{\ast}(\varphi(v^{-1}A),w))) \\
&=& \varphi(F)(\delta^{\ast}(\varphi(A),vw))\\
&=& |{\cal B}|(vw).
\end{eqnarray*}

Since $|{\cal B}|=|{\cal A}|=A$, it follows that $A(uw)=A(vw)$, a contradiction occurs.
Therefore, ${\cal A}$ has no mergible states.
\end{proof}

\section{Construction of the Universal Weighted Automaton}

In general, it is not effective to construct all factorizations of $A$. Suppose  ${\cal A}=(Q,\Sigma,\delta,q_0,F)$ is an arbitrary DWA accepting $A$. In this section, we give an effective method to construct ${\cal U}_A$ by using the DWA $\cal A$. Let $l_A$ be the $\bigvee$-sublattice generated by $S_A^{\wedge}$ as defined in Eq.(\ref{eq:S-A}), i.e.,
\begin{equation}\label{eq:l-A}
l_A=\{\bigvee X | X\subseteq S_A^{\wedge}\}.
\end{equation}

\noindent It is well known that $l_A$ is finite iff $S_A^{\wedge}$ is finite (cf.\cite{li93,davey02}) iff $S_A$ is finite. Write $Q_1=l_A^Q$. If ${\cal A}$ is finite and $S_A$ is finite, then $Q_1$ is also finite. We construct
a weighted automaton ${\cal A}_1=(Q_1,\Sigma,\eta,J,G)$ as follows:
\begin{eqnarray*}
J(f) &=&f(q_0),\\
G(f) &=& f\rw_{incl} F,\\
\eta(f,\sigma,g) &=& f\sigma\rw_{incl} g,
\end{eqnarray*}
where $f\sigma:Q\rw l_A$ is defined by $f\sigma (q)=\bigvee\{f(p) | \delta(p,\sigma)=q\}$.

We next define a mapping $\varphi$ from ${\cal A}_1$ to ${\cal U}_A$. To this end, we first establish the correspondence between weighted states and factorizations of $A$.

\begin{proposition}
\label{prop: Xf&Yf}
Let $S$ be a complete c-semiring. Suppose $A\in\allseries$ and ${\cal A}=(Q,\Sigma,\delta,q_0,F)$ is an arbitrary DWA accepting $A$. Then  $(X_f,Y_f)$ is a factorization of $A$ for any weighted state $f: Q\rw l_A$, where
\begin{eqnarray}
\label{eq:Y_f}
Y_f &=& \bigwedge_{q\in Q} f(q)\rw Fut_{{\cal A}}(q),\\
\label{eq:X_f}
X_f &=& A/Y_f,
\end{eqnarray}
and, for any $\theta\in\sa$,
\begin{eqnarray}
(f(q)\rw Fut_{{\cal A}}(q))(\theta) &=& f(q)\rw Fut_{{\cal A}}(q)(\theta).
\end{eqnarray}
Therefore, the mapping $\varphi$ defined by
$\varphi(f)=(X_f,Y_f)$
 is a mapping from $Q_1=l_A^Q$ to $R_A$.
 \end{proposition}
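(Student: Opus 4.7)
The plan is to invoke Proposition \ref{pro:factor-language}(i), which characterizes a factorization as a pair $(X,Y)$ with $X = A/Y$ and $Y = X\backslash A$. Since $X_f = A/Y_f$ holds by the very definition (\ref{eq:X_f}), the only thing to establish is the second identity $Y_f = X_f\backslash A$. Note that $Y_f \leq X_f\backslash A$ is automatic: from $X_f = A/Y_f$ we have $X_f Y_f \leq A$, hence $Y_f \leq X_f\backslash A$. Thus the entire task reduces to proving the reverse inclusion $X_f\backslash A \leq Y_f$.

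For the reverse inclusion, the key observation is that it suffices to show $Y_f$ is a right residual of $A$, i.e., $Y_f = W\backslash A$ for some $W \in \allseries$. Once this is in hand, Proposition \ref{pro:residual quotient}(vi) instantly yields $(A/Y_f)\backslash A = Y_f$, i.e., $X_f\backslash A = Y_f$, completing the proof via Proposition \ref{pro:factor-language}(i).

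To exhibit such a $W$, I would replace $\mathcal{A}$ by its accessible part if necessary (this does not change $A$, $Y_f$, or $X_f$ because unreachable states contribute trivially to the series constructions involved) and then set
\begin{equation*}
W \;=\; \bigvee_{q\in Q} f(q)\, Past_{\mathcal{A}}(q).
\end{equation*}
Because $\mathcal{A}$ is a DWA, for each $u\in\sa$ exactly one state is reached from $q_0$, namely $\ds(q_0,u)$, and $Past_{\mathcal{A}}(q)(u)=1$ iff $q=\ds(q_0,u)$, $0$ otherwise. Hence $W(u)=f(\ds(q_0,u))$. Using the characterization of right residuals by word (Proposition \ref{pro:characterization of residual quotient}) and the identity $A(uv) = F(\ds(q_0,uv)) = F(\ds(\ds(q_0,u),v)) = Fut_{\mathcal{A}}(\ds(q_0,u))(v)$, I would compute
\begin{equation*}
W\backslash A\,(v) \;=\; \bigwedge_{u\in\sa}\bigl(f(\ds(q_0,u))\rw Fut_{\mathcal{A}}(\ds(q_0,u))(v)\bigr) \;=\; \bigwedge_{q\in Q}\bigl(f(q)\rw Fut_{\mathcal{A}}(q)(v)\bigr) \;=\; Y_f(v),
\end{equation*}
where the second equality uses accessibility, so that $q = \ds(q_0,u)$ ranges over all of $Q$ as $u$ ranges over $\sa$.

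The step I expect to be the main obstacle is justifying the accessibility reduction cleanly (showing that passing to the accessible part does not alter $Y_f$ and $X_f$, which hinges on the fact that for unreachable $q$ the series $Past_{\mathcal{A}}(q)$ vanishes and $Fut_{\mathcal{A}}(q)$ is decoupled from $A$) and verifying that the adjunction manipulations in computing $W\backslash A$ go through in a complete c-semiring. Once the identity $W\backslash A = Y_f$ is established, the rest is a one-line application of Proposition \ref{pro:residual quotient}(vi) followed by Proposition \ref{pro:factor-language}(i). The final sentence of the proposition — that $\varphi$ maps into $R_A$ — is then just the definition of $\varphi(f)=(X_f,Y_f)$.
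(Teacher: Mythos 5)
Your proof is correct and follows essentially the same route as the paper: the series $W=\bigvee_{q\in Q}f(q)\,Past_{\mathcal{A}}(q)$ you construct is exactly the paper's $X'(u)=f(\ds(q_0,u))$, and both arguments reduce the claim to showing that $Y_f$ is a residual $W\backslash A$ and then conclude via Proposition~\ref{pro:residual quotient}(vi) and Proposition~\ref{pro:factor-language}(i) (a step the paper leaves implicit). The one caveat you share with the paper is the ``without loss of generality, $\mathcal{A}$ is accessible'' reduction: your justification that unreachable states ``contribute trivially'' is not quite right, since an unreachable $q$ still contributes the term $f(q)\rw Fut_{\mathcal{A}}(q)$ to the meet defining $Y_f$ and can therefore change it, but the paper asserts the same reduction without proof.
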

 \begin{proof}
Without loss of generality, we assume that ${\cal A}$ is accessible. Since ${\cal A}$ is a DWA, for any $q\in Q$, there exists $u\in \sa$ such that $\ds(q_0,u)=q$. Moreover, if $\ds(q_0,v)=q$ for another $v\in \sa$, then $A(uw)=A(vw)$ for any $w\in \sa$. By this observation, we have, for any $v\in\sa$,
\begin{eqnarray*}
Y(v) &=& \bigwedge_{q\in Q}f(q)\rw Fut_{{\cal A}}(q)(v)\\
& = & \bigwedge_{u\in\sa,\ds(q_0,u)=q}f(q)\rw A(uv) \\
& =& \bigwedge_{u\in\sa}f(\ds(q_0,u))\rw A(uv).
\end{eqnarray*}

If we let $X'(u)=f(\ds(q_0,u))$ for any $u\in \sa$, then we obtain a series $X':\sa\rw l_A$ such that $Y=X'\backslash A$.
\end{proof}

The mapping $\varphi: l_A^Q \rw R_A$ is also onto.
\begin{proposition}\label{pro:facor-state1}
Let $S$ be a complete c-semiring. Suppose $A\in\allseries$ and ${\cal A}=(Q,\Sigma,\delta,q_0,F)$ is a DWA that accepts $A$. For any $(X,Y)\in R_A$, there is a weighted state $f:Q\rw l_A$ such that $Y=Y_f$.
\end{proposition}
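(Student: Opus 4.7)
The plan is to produce the desired $f$ explicitly from the factorization $(X,Y)$ and then verify $Y_f = Y$. Given $(X,Y)\in R_A$, define
\[
f(q) \;=\; \bigvee\{X(u) \mid u\in\sa,\ \ds(q_0,u) = q\}
\]
for each $q\in Q$, with the convention that $f(q) = 0$ when the set is empty. As in the proof of Proposition~\ref{prop: Xf&Yf}, I may assume without loss of generality that $\cal A$ is accessible, so every such set is nonempty; any inaccessible state would anyway contribute the top element $1$ to the $\bigwedge$ appearing in the definition of $Y_f$ and cause no trouble.

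The first key step is to confirm that $f(q)\in l_A$, so that $f$ is indeed a weighted state of the form considered in Proposition~\ref{prop: Xf&Yf}. By Proposition~\ref{pro:factor-language}(i), $(X,Y)\in R_A$ means $X = A/Y$, and the characterization in Proposition~\ref{pro:characterization of residual quotient} then gives
\[
X(u) \;=\; \bigwedge_{v\in\sa}\bigl(Y(v)\rw A(uv)\bigr).
\]
Each $A(uv)$ lies in $Im(A)$, so every meetand belongs to $S_A$; hence $X(u)\in S_A^{\wedge}$, and an arbitrary supremum of such values lies in $l_A=\{\bigvee X\mid X\subseteq S_A^{\wedge}\}$. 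Thus $f\in l_A^Q$.

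The second and main step is to verify $Y_f = Y$. For the DWA $\cal A$, one has $Fut_{\cal A}(q)(v) = F(\ds(q,v))$, and for any $u$ with $\ds(q_0,u) = q$ the determinism yields $F(\ds(q,v)) = F(\ds(q_0,uv)) = A(uv)$. Using the quantale identity $(\bigvee_i a_i)\rw b = \bigwedge_i(a_i\rw b)$, I compute
\begin{eqnarray*}
Y_f(v) &=& \bigwedge_{q\in Q}\bigl(f(q)\rw Fut_{\cal A}(q)(v)\bigr)\\
&=& \bigwedge_{q\in Q}\Bigl(\bigvee\{X(u) \mid \ds(q_0,u)=q\}\rw F(\ds(q,v))\Bigr)\\
&=& \bigwedge_{q\in Q}\bigwedge_{u:\ds(q_0,u)=q}\bigl(X(u)\rw A(uv)\bigr)\\
&=& \bigwedge_{u\in\sa}\bigl(X(u)\rw A(uv)\bigr)\\
&=& (X\backslash A)(v) \;=\; Y(v),
\end{eqnarray*}
the last equality being Proposition~\ref{pro:factor-language}(i) once more.

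The main obstacle is the membership claim $X(u)\in S_A^{\wedge}$, which is what forces $f$ to take values in $l_A$; it is handled cleanly by the pointwise characterization of the residual. After that, the equality $Y_f = Y$ is an essentially routine application of the distributivity of $\rw$ over $\bigvee$ on the left, together with accessibility of $\cal A$.
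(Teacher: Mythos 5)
Your proposal is correct and follows essentially the same route as the paper: you define $f(q)=\bigvee\{X(u)\mid \ds(q_0,u)=q\}$ exactly as the paper does and verify $Y_f=Y$ by the same computation, distributing $\rw$ over the supremum and collapsing the double meet to $\bigwedge_{u\in\sa}(X(u)\rw A(uv))=(X\backslash A)(v)$. Your explicit justification that $X(u)\in S_A^{\wedge}$ (hence $f\in l_A^Q$) via Proposition~\ref{pro:characterization of residual quotient} is a slightly more careful treatment of the well-definedness point the paper dispatches by citing the proof of Proposition~\ref{pro:finiteness}, but it is the same argument in substance.
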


\begin{proof}
Define a weighted state $f:Q\rw l_A$ as, for any $q\in Q$,
$$f(q)=\bigvee\{X(u) | \ds(q_0,u)=q\}.$$

\noindent By the proof of Proposition \ref{pro:finiteness}, $X$ induces a unique mapping from $Q_A=\sa/\equiv_A$ into $S_A^{\wedge}$, so $f$ is well-defined.

We show $Y=Y_f$ in the following.

For any $\theta\in\sa$, we have
\begin{eqnarray*}
Y_f(\theta) &=& \bigwedge_{q\in Q}f(q)\rw Fut_{{\cal A}}(q)(\theta)\\
&=& \bigwedge_{q\in Q}(\bigvee\{X(u) | \ds(q_0,u)=q\})\rw F(\ds(q,\theta)\\
&=& \bigwedge_{q\in Q, \ds(q_0,u)=q}X(u)\rw F(\ds(q,\theta)) \\
&=& \bigwedge_{u\in\sa}X(u)\rw A(u\theta) \\
&=& X\backslash A(\theta) \\
&=& Y(\theta).
\end{eqnarray*}
Hence, $Y=Y_f$.
\end{proof}

 Furthermore, we have

\begin{proposition}\label{pro:strong homo}
Let $S$ be a complete c-semiring. Suppose $A\in \allseries$. The mapping $\varphi$ defined in Proposition~\ref{prop: Xf&Yf} is a strong homomorphism from weighted automaton ${\cal A}_1=(Q_1,\Sigma,\eta,J,G)$ \emph{onto} the universal weighted automaton ${\cal U}_A$, and thus $|{\cal A}_1|=|{\cal U}_A|=A$.

\end{proposition}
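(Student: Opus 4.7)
The plan is to verify surjectivity of $\varphi$ together with the three defining equalities of a strong homomorphism (on initial, final, and transition weights), and then invoke Lemma \ref{lemma:morphism1} and Theorem \ref{th:universal 1} to conclude $|{\cal A}_1| = |{\cal U}_A| = A$. Without loss of generality, I assume ${\cal A}$ is accessible (replace it by its accessible part otherwise). Surjectivity is immediate from Proposition \ref{pro:facor-state1}: given $(X, Y) \in R_A$, the canonical preimage
\[
g^{\ast}(q) = \bigvee\{X(u) \mid \ds(q_0, u) = q\}
\]
lies in $l_A^Q$ (since $X$ takes values in $S_A^{\wedge}$ by the proof of Proposition \ref{pro:finiteness}) and satisfies $Y_{g^{\ast}} = Y$, hence $\varphi(g^{\ast}) = (X, Y)$.

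The final-weight condition is a one-line calculation: $G_A(\varphi(f)) = Y_f(\varepsilon) = \bigwedge_q f(q) \rw Fut_{{\cal A}}(q)(\varepsilon) = \bigwedge_q f(q) \rw F(q) = G(f)$. For the initial-weight condition I would show $X(\varepsilon) = \bigvee\{f(q_0) \mid \varphi(f) = (X, Y)\}$. Using $X(\varepsilon) = Y \rw_{incl} A$ (Proposition \ref{pro:ufa1}) and $Fut_{{\cal A}}(q_0) = A$, any $f$ with $Y_f = Y$ satisfies $Y \leq f(q_0) \rw A$, so $f(q_0) \leq Y \rw_{incl} A = X(\varepsilon)$. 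Conversely, the canonical $g^{\ast}$ attains this bound at $q_0$: if $\ds(q_0, u) = q_0$, then $A(uw) = A(w)$ for all $w$, so $X(u) = X(\varepsilon)$, whence $g^{\ast}(q_0) = X(\varepsilon)$.

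The transition condition is the main obstacle. By Proposition \ref{pro:ufa1}, the left-hand side equals $\sigma Y' \rw_{incl} Y_f$. The crucial identity I would establish first is
\[
Y_f(\sigma v) = \bigwedge_q f\sigma(q) \rw Fut_{{\cal A}}(q)(v),
\]
which follows from $Fut_{{\cal A}}(p)(\sigma v) = Fut_{{\cal A}}(\delta(p, \sigma))(v)$ by regrouping the meet over $p$ by $q = \delta(p, \sigma)$ and using $\bigwedge_p a_p \rw c = (\bigvee_p a_p) \rw c$. For the direction $\geq$, if $Y_g = Y'$ and $c \leq f\sigma \rw_{incl} g$, then $c f\sigma(q) Y'(v) \leq g(q) Y'(v) \leq Fut_{{\cal A}}(q)(v)$, which via the identity gives $c Y'(v) \leq Y_f(\sigma v)$, i.e., $c \leq \sigma Y' \rw_{incl} Y_f$. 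For the direction $\leq$, I take the canonical preimage $g^{\ast}$ of $(X', Y')$: for any $u$ with $\ds(q_0, u) = q$, $X'(u) = \bigwedge_v Y'(v) \rw Fut_{{\cal A}}(q)(v)$, so these values agree for all such $u$ and thus $g^{\ast}(q) = \bigwedge_v Y'(v) \rw Fut_{{\cal A}}(q)(v)$. Setting $c = \sigma Y' \rw_{incl} Y_f$, the identity yields $c f\sigma(q) Y'(v) \leq Fut_{{\cal A}}(q)(v)$ for all $q, v$, hence $c f\sigma(q) \leq g^{\ast}(q)$ for all $q$, so $c \leq f\sigma \rw_{incl} g^{\ast} = \eta(f, \sigma, g^{\ast})$, which gives the required bound.

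Having verified the three conditions, $\varphi$ is an onto strong homomorphism from ${\cal A}_1$ to ${\cal U}_A$. Lemma \ref{lemma:morphism1} then yields $|{\cal A}_1| = |{\cal U}_A|$, and Theorem \ref{th:universal 1} concludes $|{\cal A}_1| = |{\cal U}_A| = A$.
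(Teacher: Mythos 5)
Your proposal is correct and follows essentially the same route as the paper's Appendix~B proof: both hinge on the identity $Y_f(\sigma v)=\bigwedge_{q}f\sigma(q)\rw Fut_{\mathcal{A}}(q)(v)$ (the paper phrases it via the series $D=\sigma(\bigwedge_{q}\bigwedge_{\delta(p,\sigma)=q}f(p)\rw Fut_{\mathcal{A}}(q))$), use the canonical largest preimage $g^{\ast}(q)=\bigvee\{X(u)\mid\delta^{\ast}(q_0,u)=q\}$ from Proposition~\ref{pro:facor-state1} to attain the bounds, and conclude via Lemma~\ref{lemma:morphism1} and Theorem~\ref{th:universal 1}. Your pointwise treatment of the initial-weight equality and of the transition equality is a cleaner rendering of the same computation, not a different method.
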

\begin{proof}
See Appendix B.
\end{proof}

Define an equivalence relation $\sim$ on $Q_1$ as follows:
\begin{equation}
\label{eq:sim-relation}
\mbox{$f\sim g$ iff $\varphi(f)=\varphi(g)$}
\end{equation}
It is clear that $f\sim g$ iff $Y_f=Y_g$. Using this equivalence relation, we obtain a quotient weighted automaton from ${\cal A}_1$, denoted by
${\cal A}^{\prime}$, which is isomorphic to ${\cal U}_A$.

\begin{corollary}\label{co:construction}
Let $A\in\allseries$ and ${\cal A}_1$ be as in Proposition~\ref{pro:strong homo}. Suppose ${\cal A}^{\prime}$ is the quotient weighted automaton of ${\cal A}_1$ modulo the equivalent relation $\sim$ on $Q_1$. Then ${\cal A}^\prime$ is isomorphic to ${\cal U}_A$.
\end{corollary}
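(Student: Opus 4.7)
The plan is to invoke the standard first-isomorphism-theorem argument for strong homomorphisms of weighted automata. By Proposition~\ref{pro:strong homo}, $\varphi: {\cal A}_1\to {\cal U}_A$ is a surjective strong homomorphism, and by construction the equivalence relation $\sim$ on $Q_1$ is precisely the kernel of $\varphi$, i.e.\ $f\sim g$ iff $\varphi(f)=\varphi(g)$. The first step is therefore to show that $\sim$ is a \emph{congruence} compatible with the strong homomorphism structure, so that the quotient ${\cal A}'={\cal A}_1/\!\sim$ inherits a well-defined weighted automaton structure from ${\cal A}_1$.

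Next, I would define the induced map $\bar\varphi:{\cal A}'\to {\cal U}_A$ by $\bar\varphi([f]_\sim)=\varphi(f)$. Well-definedness is immediate from the definition of $\sim$, and injectivity is equally immediate: if $\bar\varphi([f])=\bar\varphi([g])$, then $\varphi(f)=\varphi(g)$, so $f\sim g$ and $[f]=[g]$. Surjectivity of $\bar\varphi$ follows from the surjectivity of $\varphi$ established in Proposition~\ref{pro:facor-state1} (together with Proposition~\ref{pro:strong homo}). Hence $\bar\varphi$ is a bijection between the state sets.

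It remains to verify that $\bar\varphi$ preserves all the weighted automaton data and that $\bar\varphi^{-1}$ does too. Using the strong-homomorphism equations from Definition~\ref{de:morphism} applied to $\varphi$, the initial, final and transition weights in ${\cal A}'$ (obtained as $\sum$ over each $\sim$-class, according to the quotient construction of Definition~\ref{de:morphism}) coincide exactly with $J_A$, $G_A$ and $\eta_A$ on ${\cal U}_A$. More precisely, for each $[f]_\sim\in Q_1/\!\sim$ we would verify
\begin{equation*}
\sum\{J(h) \mid h\sim f\}=J_A(\varphi(f)), \quad G(f)=G_A(\varphi(f)), \quad \sum\{\eta(h,\sigma,h') \mid h\sim f,\, h'\sim g\}=\eta_A(\varphi(f),\sigma,\varphi(g)),
\end{equation*}
all of which follow directly from the three strong-homomorphism equations for $\varphi$. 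The same identities, read backwards, show that $\bar\varphi^{-1}$ is a morphism, so $\bar\varphi$ is an isomorphism.

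The only delicate point—really the heart of the proof—is the first step: checking that $\sim$ is compatible with $\eta,J,G$ so that ${\cal A}'$ is indeed a weighted automaton and the quotient construction of Definition~\ref{de:morphism} applies. I expect this to reduce to noting that whenever $f\sim f'$ and $g\sim g'$, the equalities $Y_f=Y_{f'}$ and $Y_g=Y_{g'}$ (combined with $X_f=A/Y_f$) force $J(f)=J_A(\varphi(f))=J_A(\varphi(f'))=J(f')$ up to the summation prescribed in the definition of quotient, and likewise for $G$ and $\eta$, via Proposition~\ref{pro:ufa1} and Proposition~\ref{pro:strong homo}. Once this routine check is complete, the isomorphism statement is immediate.
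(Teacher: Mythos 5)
Your proposal is correct and follows essentially the same route the paper intends: the corollary is stated without proof precisely because it is the standard first-isomorphism-theorem consequence of Proposition~\ref{pro:strong homo}, with $\sim$ being by definition the kernel of the onto strong homomorphism $\varphi$, so the induced map on classes is a well-defined bijection preserving all the automaton data. The only nitpick is that in your transition-weight identity you sum over both the source and target classes, whereas the paper's strong-homomorphism equation sums only over the target class (the source representative being immaterial by well-definedness); this does not affect the argument but should be stated consistently with Definition~\ref{de:morphism}.
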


 Once the DWA ${\cal A}$ is finite and $S_A$ is finite (this condition can be guaranteed if $S$ is finite or $S$ is a linear-order lattice as declared in Proposition \ref{pro:finiteness}), the equivalence $\sim$ defined by Eq.(\ref{eq:sim-relation}) can be effectively constructed. This is because, $\ds(q,\theta)$ takes at most $|Q|=n$ states, i.e., the set $\{\ds(q,\theta) | \theta\in\sa\}$ as a subset of $Q$ has at most $n$ states, the corresponding $Fut_{{\cal A}}(q)=G_A(\ds(q,\theta))$ has at most $n$ values. Therefore, it is sufficient to check these states in Eq.(\ref{eq:sim-relation}). In turn, it is sufficient to check those $\theta\in\sa$ with  $|\theta|<n$ in Eq.(\ref{eq:sim-relation}). Hence, the equivalence relation $\sim$ is decidable
and the weighted automaton ${\cal A}^{\prime}$ can be effectively constructed.

We next give one example.

\begin{example}\label{ex:finite-state}
{\rm Consider the formal power series $A$ in Example \ref{ex:finite}, which is recognized by the finite DWA as shown in Figure 1. Then the weighted automaton ${\cal A}'=(Q_1,\Sigma,\eta,J,G)$, where
$Q_1=\{f_1,f_2,f_3,f_4\}$ with $f_1(q)=1$ and $f_2(q)=2$ for any $q\in Q$,
\begin{displaymath}
{f_3(q)= \left\{ \begin{array}{ll}
$2$, & \textrm{$q = q_1,q_2$}\\
$1$, & \textrm{$q=q_0.$}\\
\end{array} \right.}
\end{displaymath}
\begin{displaymath}
{f_4(q)= \left\{ \begin{array}{ll}
$2$, & \textrm{$q = q_2$}\\
$1$, & \textrm{$q=q_0,q_1.$}\\
\end{array} \right.}
\end{displaymath}
and
\begin{eqnarray*}
J &=& 1\diagup f_1+2\diagup f_2+1\diagup f_3+1\diagup f_4,\\
G &=& 2\diagup f_1+1\diagup f_2+1\diagup f_3+1\diagup f_4;
\end{eqnarray*}
and
$\eta(f_i,x,f_j)$ is either 2 or 1 (see Figure 3). Clearly, ${\cal A}'$ is isomorphic to ${\cal U}_A$.

 \begin{figure}[ptb]
\begin{center}
\includegraphics[width=0.5\textwidth]{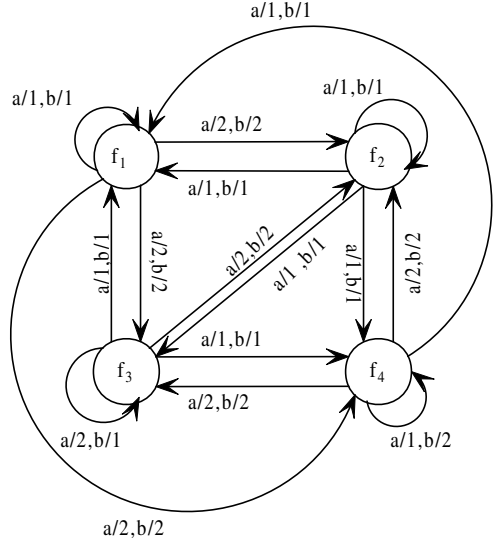}
\end{center}
\caption{The weighted automaton  ${\cal A}'$}
 \label{fig:2}
 \end{figure} }
\end{example}

We next give a detailed examination of the equivalent relation $\sim$.

The correspondence $\varphi$ between the weighted states and the factorizations of $A$ stated in Proposition~\ref{prop: Xf&Yf} may be not one-to-one. There may have more than one weighted states correspond to a given factorization. However, the following proposition asserts that there exists a largest weighted states.

\begin{proposition}\label{pro:largest state}
Let $S$ be a complete c-semiring. Suppose $A\in \allseries$ and ${\cal A}=(Q,\Sigma,\delta,q_0,F)$ is a DWA that accepts $A$. If $Y=Y_{g_i}$ for all weighted states $g_i: Q\rw l_A$ ($i\in I$), then $Y=Y_{\bigvee_{i\in I}g_i}$. Therefore, for each right residual $Y$, there exists a largest $h:Q\rw l_A$ such that $Y=Y_h$.
\end{proposition}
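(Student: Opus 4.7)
The plan is to reduce the statement to the standard quantale identity $(\bigvee_i a_i)\rw b = \bigwedge_i (a_i \rw b)$ applied pointwise over $Q$, and then conclude the existence of a largest representative as an immediate consequence.

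First I would verify that $\bigvee_{i\in I} g_i$ is indeed a legitimate weighted state, i.e.\ a member of $Q_1 = l_A^Q$. This is where the definition of $l_A$ in Eq.~(\ref{eq:l-A}) is used: $l_A$ is the $\bigvee$-sublattice of $S$ generated by $S_A^{\wedge}$, hence closed under arbitrary joins, so $(\bigvee_{i\in I} g_i)(q) = \bigvee_{i\in I} g_i(q) \in l_A$ for each $q\in Q$.

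Next I would carry out the main computation. Starting from the definition in Eq.~(\ref{eq:Y_f}),
\begin{align*}
Y_{\bigvee_{i\in I} g_i} &= \bigwedge_{q\in Q} \Bigl(\bigvee_{i\in I} g_i(q)\Bigr) \rw Fut_{{\cal A}}(q).
\end{align*}
The key lemma, which follows directly from the adjunction (\ref{eq:quantale-adj}) in the complete c-semiring $S$, is that for any family $\{a_i\}_{i\in I}\subseteq S$ and any $b\in S$,
\begin{align*}
\Bigl(\bigvee_{i\in I} a_i\Bigr) \rw b \;=\; \bigwedge_{i\in I}(a_i \rw b).
\end{align*}
Applied pointwise in $\sa$ (using that the residual in $\allseries$ is computed coordinate-wise on words), this yields
\begin{align*}
Y_{\bigvee_{i\in I} g_i} \;=\; \bigwedge_{q\in Q}\bigwedge_{i\in I}\bigl(g_i(q)\rw Fut_{{\cal A}}(q)\bigr) \;=\; \bigwedge_{i\in I} Y_{g_i} \;=\; \bigwedge_{i\in I} Y \;=\; Y,
\end{align*}
after swapping the order of the two meets.

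For the final assertion about the existence of a largest $h:Q\rw l_A$ with $Y_h = Y$, I would simply set $I = \{g : Q\rw l_A \mid Y_g = Y\}$ (which is nonempty by Proposition~\ref{pro:facor-state1}) and take $h = \bigvee_{g\in I} g$. The first part of the proposition shows $Y_h = Y$, and by construction $h\geq g$ for every $g\in I$, so $h$ is the desired largest element. The only mild subtlety to be careful about is the interchange of the two meets and the pointwise application of the quantale identity; both are routine consequences of the complete c-semiring structure, so I do not anticipate a significant obstacle.
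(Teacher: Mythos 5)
Your proposal is correct. The paper actually states Proposition~\ref{pro:largest state} without giving any proof, so there is nothing to compare against, but your argument is surely the intended one: the whole content of the first claim is the quantale law $(\bigvee_{i\in I}a_i)\rw b=\bigwedge_{i\in I}(a_i\rw b)$, which follows at once from the adjunction (\ref{eq:quantale-adj}) together with the infinite distributivity of $\otimes$ over $\bigvee$, applied pointwise at each $q\in Q$ and each word $\theta\in\sa$ in the definition (\ref{eq:Y_f}) of $Y_f$. The two points you flag as subtleties are genuinely the only ones and you handle both: $(\bigvee_{i\in I}g_i)(q)\in l_A$ because $l_A=\{\bigvee X\mid X\subseteq S_A^{\wedge}\}$ is closed under arbitrary joins (a join of joins of subsets of $S_A^{\wedge}$ is the join of their union), and the interchange $\bigwedge_{q}\bigwedge_{i}=\bigwedge_{i}\bigwedge_{q}$ is valid in any complete lattice. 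Taking $h=\bigvee\{g\mid Y_g=Y\}$, nonempty by Proposition~\ref{pro:facor-state1}, then gives the largest representative; note that this is consistent with the explicit formula $h(q)=\bigvee\{X(u)\mid\ds(q_0,u)=q\}$ that the paper asserts immediately after the proposition.
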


By the above propositions, we know for each right residual $Y$ of $A$ there exists a (unique) largest weighted state $h:Q\rw l_A$ such that $Y=Y_h$. Furthermore, suppose $(X,Y)\in R_A$. Then $h$ is defined by $h(q)=\bigvee\{X(u) | \ds(q_0,u)=q\}$ as implied in the proofs of Propositions \ref{prop: Xf&Yf} and \ref{pro:facor-state1}. In general, for a weighted state $f:Q\rw l_A$, the structure of the largest weighted state $h$ such that $Y_f=Y_h$
is unclear. But we have the following estimation.

%\begin{proposition}\label{pro:largest element1}
%\lisj{Suppose ${\cal A}=(Q,\Sigma,\delta,q_0,F)$ is a weighted automaton and $f:Q\rw S$ a weighted state.} If $u\in\sa$ and  $Y_f=Y_{u\circ F}$, then $f\leq u\circ F$, where $u\circ F(q)=F(\ds(q,u))$.
%\end{proposition}
%\begin{proof}  Since
%$$\bigwedge_{q\in Q}(u\circ F)(q)\rw Fut_{{\cal A}}(q)(u)=\bigwedge_{q\in Q}(u\circ F)(q)\rw (u\circ F)(q)=1,$$ it follows that
%$\bigwedge_{q\in Q}f(q)\rw Fut_{{\cal A}}(q)(u)=1$. Then $f(q)\rw Fut_{{\cal A}}(q)=1$ for any $q\in Q$, i.e., $f(q)\rw u\circ F(q)=1$. Hence, $f(q)\leq u\circ F(q)$ for any $q\in Q$, i.e., $f\leq u\circ F$.
%\end{proof}

\begin{proposition}\label{pro:largest element1}
Let $S$ be a complete c-semiring. Suppose ${\cal A}=(Q,\Sigma,\delta$, $q_0,F)$ is a weighted automaton, and $f:Q\rw l_A$ a weighted state. If $\{u_i\}_{i\in I}\subseteq\sa$ and  $Y_f=Y_{\bigwedge_{i\in I}u_i\circ F}$,
then $f\leq \bigwedge_{i\in I}u_i\circ F$, where $u\circ F(q)=F(\ds(q,u))$.
\end{proposition}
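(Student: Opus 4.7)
The plan is a short direct computation that exploits the way $Y_f$ and $Y_g$ are assembled pointwise from the residual operation $\rw$ of the underlying complete c-semiring. Set $g = \bigwedge_{i\in I} u_i \circ F$, so that $g(q) = \bigwedge_{i\in I} F(\ds(q, u_i))$ for every $q \in Q$. The target inequality $f \leq g$ is equivalent to $f(q) \leq F(\ds(q, u_j))$ for every $q \in Q$ and every $j \in I$, and this is what I will extract from the hypothesis $Y_f = Y_g$.

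First I would compute $Y_g$ at the words $u_j$. By definition,
\[
Y_g(u_j) = \bigwedge_{q\in Q} g(q) \rw Fut_{{\cal A}}(q)(u_j) = \bigwedge_{q\in Q} g(q) \rw F(\ds(q, u_j)).
\]
For each $q$, the inequality $g(q) = \bigwedge_{i\in I} F(\ds(q, u_i)) \leq F(\ds(q, u_j))$ is immediate. Using the adjunction $x \leq a\rw b \Leftrightarrow a \otimes x \leq b$ with $x = 1$, together with the fact (from the earlier proposition on complete c-semirings) that $1$ is the top element of $S$, I get $a \leq b \Leftrightarrow a\rw b = 1$. Hence each term of the above meet equals $1$, so $Y_g(u_j) = 1$ for every $j\in I$.

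Next I would transfer this to $Y_f$. By hypothesis $Y_f = Y_g$, so $Y_f(u_j) = 1$ for every $j\in I$, i.e.\
\[
1 = Y_f(u_j) = \bigwedge_{q\in Q} f(q) \rw F(\ds(q, u_j)).
\]
Since each term is bounded above by $1$, we get $f(q) \rw F(\ds(q, u_j)) = 1$ for every $q$, and applying the same equivalence $a\rw b = 1 \Leftrightarrow a \leq b$ gives $f(q) \leq F(\ds(q, u_j))$ for every $q\in Q$ and $j\in I$. Taking the meet over $j$ yields $f(q) \leq \bigwedge_{i\in I} F(\ds(q, u_i)) = g(q)$, which is exactly $f \leq \bigwedge_{i\in I} u_i \circ F$.

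There is no real obstacle here; the only step requiring a moment of care is the equivalence $a\leq b \Leftrightarrow a\rw b = 1$, but this is an immediate consequence of the residual adjunction combined with the fact that $1$ is the top element of a complete c-semiring, both of which are established earlier in the paper. Everything else is unwinding the definitions of $Y_f$, $Y_g$, and $u\circ F$.
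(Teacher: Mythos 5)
Your proposal is correct and follows essentially the same route as the paper's own proof: evaluate $Y_g$ at each $u_j$ to get $Y_g(u_j)=1$, transfer this to $Y_f$ via the hypothesis, and use the equivalence $a\leq b \Leftrightarrow a\rw b = 1$ (valid since $1$ is the top element of a complete c-semiring) to conclude $f(q)\leq F(\ds(q,u_j))$ for all $q$ and $j$. The only difference is that you spell out the adjunction argument more explicitly than the paper does, which is a welcome clarification rather than a deviation.
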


\begin{proof}
Let $g=\bigwedge_{i\in I}u_i\circ F$. Then for each $j\in I$ we have
 $$Y_g(u_j)=\bigwedge_{q\in Q}g(q)\rw Fut_{{\cal A}}(q)(u_j)=\bigwedge_{q\in Q}(\bigwedge_{i\in I}u_i\circ F)(q)\rw (u_j\circ F)(q)=1.$$
Thus $\bigwedge_{q\in Q}f(q)\rw Fut_{{\cal A}}(q)(u_j)=1$. It follows that $f(q)\rw Fut_{{\cal A}}(q)(u_j)=1$, i.e., $f(q)\leq Fut_{{\cal A}}(q)(u_j)=u_j\circ F(q)$ for any $q\in Q$.
Therefore, $f\leq u_j\circ F$ for any $j$, hence $f\leq \bigwedge_{i\in I}u_i\circ F$.
\end{proof}

Recall that if $S$ is the two elements Boolean algebra $\{0,1\}$, then $\{\bigwedge_{i\in I}u_i\circ F | u_i\in\sa\}$ forms the whole set of those largest elements  (cf.\cite{lombardy08}). It is still unclear whether each largest weighted state can be represented as the intersection of $u_i\circ F$ for a set of $u_i$.

%To this end, we first associate each residual of $A$ a weighted state of $\cal A$.

\section{A Comparison of Quotients and Residuals}
For a formal power series $A$, we have introduced the notions of (left and right) quotients and (left and right) residuals of $A$. As can be seen in the above discussion, these two kinds of operations have different behaviors when considering their algebraic and language properties. Especially, with the associated weighted automata, ${\cal B}_A$ and ${\cal U}_A$ have different structure, although they are equivalent as language recognizers. For example, there exists series $A$ such that ${\cal B}_A$ is infinite but ${\cal U}_A$ is finite.

In this section, we consider the order relation between the quotients and the residuals of a same series $A$. Because the duality between left and right quotients (residuals), we need only compare the left quotient $X^{-1}A$ and  the left residual $X\backslash A$ of $A$ by a series $X$. Our results show that all the four possibilities are possible.

First, if $X$ is a word $u\in\sa$ or $\sum_{u\in\sa}X(u)=1$ and $A=0$, then it is obvious that $X^{-1}A=X\backslash A$.

Second, let $S=\{0,1\}$ be the two element Boolean algebra. Then for any languages $X,A\subseteq\sa$ we  have $$X\backslash A=\bigcap_{u\in X} u^{-1}A\subseteq \bigcup_{u\in X}u^{-1}A=X^{-1}A.$$
Moreover, $X^{-1}A=X\backslash A$ iff $u\equiv_A v$ for any $u,v\in X$; but if there exists $u,v\in X$ such that $u\not\equiv_A v$, then the above inclusion is strict.

Third, let $S$ be the tropical semiring $(\mathbb{N}+\cup\{\infty\}$, $\min,+,\infty,0)$. For $A\in\allseries$, $u\in\sa$, and $k\in S$, we have $(ku)^{-1}A=k(u^{-1}A)\leq_S u^{-1}A$, while $(ku)\backslash A >_S u^{-1}A$ if $k\not=0$ and $k\not=\infty$. Therefore $(ku)\backslash A>_S(ku)^{-1}A$ if $k\not=0$ and $k\not=\infty$.

Fourth, let $S$ be the tropical semiring $(\mathbb{N}+\cup\{\infty\},\min,+,\infty,0)$. Take $\Sigma=\{a,b\}$. Consider the formal power series $A\in\allseries$ defined as follows
\begin{displaymath}
{A(\theta)= \left\{ \begin{array}{ll}
0, & \textrm{if $\theta=ba$ or $\theta=bb$}\\
10, & \textrm{if $\theta=aa$}\\
3, & \textrm{if $\theta=ab$}\\
\infty, & \textrm{otherwise.}\\
\end{array} \right.}
\end{displaymath}
Let $X=min((4+a),2+b)$. We show $X^{-1}A$ and $X\backslash A$ are incomparable. By
\begin{eqnarray*}
X^{-1}A(a) &=& min(4+A(aa),2+A(ba))=2, \\
X^{-1}A(b) &=& min(4+A(ab), 2+A(bb))=2, \\
X\backslash A(a) &=& max(4\rw A(aa), 2\rw A(ba))=6, \\
X\backslash A(b) &=& max(4\rw A(ab), 2\rw A(bb))=0,
\end{eqnarray*}
we know $X^{-1}A(a)>_S X\backslash A(a)$, but $X^{-1}A(b)<_S X\backslash A(b)$. Therefore $X^{-1}A$ and $X\backslash A$ are  incomparable.

It is still unclear when (i.e. for what kind of $S$ and $A\in\allseries$) we will have an uniform order relation between $X^{-1}A$ and $X\backslash A$.

\section  {Conclusions}
In this paper, we have defined the quotient and residual operations for formal power series. The algebraic and closure properties under these operations were discussed. Our results show that most nice properties are kept in formal power series. Moreover, we introduced two canonical weighted automata ${\cal M}_A$ and ${\cal U}_A$ for each formal power series $A$ using the quotients and, respectively, residuals of $A$. It was shown that ${\cal M}_A$ is the minimal DWA of $A$, and ${\cal U}_A$ is the universal weighted automaton of $A$ which contains as a sub-automaton any weighted automaton that accepts $A$ but has no mergible states. In particular, any minimal weighted (deterministic or non-deterministic) automaton of $A$ is a sub-automaton of ${\cal U}_A$. This suggests an efficient way to find (approximations of) the minimal weighted NFA of $A$. Last but not least, we also showed, under a rather weak restriction, that ${\cal U}_A$ is finite iff ${\cal M}_A$ is finite, and that ${\cal U}_A$  can be effectively constructed when we have a finite DWA that recognizes $A$.

There are still several open problems left unsolved. Suppose $X$ and $A$ are two formal power series. Is $X\backslash A$ regular (context-free) whenever $A$ is regular (context-free)? What is the precise relation between $X^{-1}A$ and $X\backslash A$? When characterizing residuals in terms of quotients by word, we require the underlying semiring to be a complete c-semiring. It is easy to see that this requirement is not necessary. Another problem then is, to what extent can we develop the related theory in a weaker semiring structure, e.g. in quantale?

Another interesting question is to develop an abstraction scheme for formal power series and weighted automata based on semiring homomorphism. Just like in soft constraint satisfaction \cite{BCR02,LiY08},  we expect that semiring homomorphisms can play an important role in approximately computing the minimal NFA and the universal weighted automaton of a formal power series.

\appendix
\section{Proof of Propositon~\ref{pro:quotient-closeness}}
\begin{proof}

(i) Suppose ${\cal A}=(Q,\Sigma,\delta,I,F)$ is a weighted automaton accepting $A$. Then, for any $\theta\in\sa$, we have
$$|{\cal A}|(\theta)=\sum_{q_0,q\in Q}I(q_0)\ds(q_0,\theta,q) F(q)=A(\theta).$$

Define another two weighted automata ${\cal A}_X=(Q,\Sigma,\delta,I_X,F)$ and ${\cal A}_Y=(Q,\Sigma$, $\delta,I,F_Y)$, where

$$I_X(q)=\sum_{u\in\sa,q_0\in Q}X(u) I(q_0)\ds(q_0,u,q),$$

$$F_Y(q)=\sum_{v\in \sa,p\in Q}\ds(q,v,p) F(p) Y(v).$$

Then

\begin{eqnarray*}
|{\cal A}_X|(\theta) &=& \sum_{q_1,q\in Q}I_X(q_1)\ds(q_1,\theta,q) F(q)\\
                              &=& \sum_{q_0,q_1,q\in Q,u\in \sa}X(u)I(q_0)\ds(q_0,u,q_1) \ds(q_1,\theta,q) F(q))\\
                              &=& \sum_{u\in\sa,q_0,q\in Q}I(q_0)\ds(q_0,u\theta,q)  F(q)\\
                              &=& \sum_{u\in\sa}X(u)|{\cal A}|(u\theta)=X^{-1}A(\theta).
\end{eqnarray*}

\begin{eqnarray*}
|{\cal A}_Y|(\theta) &=& \sum_{q_0,q\in Q}I(q_0)\ds(q_0,\theta,q) F_Y(q)\\
                              &=& \sum_{q_0,q,p\in Q,v\in \sa}I(q_0)\ds(q_0,\theta,q) \ds(q,v,p) F(p) Y(v)\\
                              &=& \sum_{q_0,p\in Q,v\in \sa}I(q_0)\ds(q_0,\theta v,p)  F(p) Y(v)\\
                              &=& \sum_{v\in\sa}|{\cal A}|(\theta v)Y(v)=AY^{-1}(\theta).
\end{eqnarray*}
Hence, $X^{-1}A=|{\cal A}_X|$, and $AY^{-1}=|{\cal A}_Y|$  are regular.

(ii) Replace weighted automaton ${\cal A}$ in the proof of (i) by a DWA, we can prove that $AY^{-1}=|{\cal A}_Y|$ is DWA-regular.

Since $A$ is DWA-regular, by Proposition~\ref{pro:dffa}, there are $r_1,\cdots,r_k\in S\{0\}$ and regular languages $L_1,\cdots,L_k$ such that $A=\sum_{i=1}^kr_iL_i$. By Proposition~\ref{pro:quotient-fps}, $X^{-1}A=\sum_{i=1}^k r_iX^{-1}L_i$. By Lemma~\ref{lemma:right-quotient}, $X^{-1}L_i=(L_i^R(X^R)^{-1})^R$, it follows that $X^{-1}L_i$ is DWA-regular for any $i$, and thus $X^{-1}A$ is DWA-regular by Proposition~\ref{pro:dffa} again.

(iii)  Let $Y_1=\sum_{w\in \Sigma^+} (Y,w)w$. Then $Y=Y(\varepsilon)\varepsilon+Y_1$ and $Y_1$ is the proper part of $Y$. It is obvious that $Y$ is regular iff $Y_1$ is regular. By Proposition \ref{pro:quotient-fps}, $AY^{-1}=A(Y(\varepsilon)\varepsilon)^{-1}+A(Y_1)^{-1}=AY(\varepsilon)+A(Y_1)^{-1}$. $AY(\varepsilon)$ is context-free since context-free languages are closed under scalar operation. To show $AY^{-1}$ is context-free, it suffices to show that $AY_1^{-1}$ is context-free.
Without loss of generality, we assume that $Y$ is proper, i.e., $Y(\varepsilon)=0$. There exist a context free-grammar $G=(V,\Sigma,P,S)$
and a weighted automaton ${\cal A}=(Q,\Sigma,\delta,q_0,\{q_f\})$ satisfying the conditions of Lemma \ref{le:proper}, such that
$|G|=A$ and $|{\cal A}|=Y$.

Construct a new weighted context-free grammar $G'=(V',\Sigma,P',S')$, where, $V'=\Sigma\cup (Q\times \Sigma\times Q)$, $S'=(q_0,S,q_f)$,
$P'$ is constructed as follows,

(ii-1) $(q_0,x,q_0)\overset{1}{\rightarrow}x$ for each $x\in \Sigma$;

(ii-2) $(q,x,q')\overset{r}{\rightarrow}\varepsilon$ if $x\in\Sigma$ and $\delta(q,x,q')=r$;

(ii-3) $(q,x,q')\overset{r}{\rightarrow}(q,y_1,q_1)(q_1,y_2,q_2)\cdots(q_{n-1},y_n,q')$ if $x\overset{r}{\rightarrow}y_1y_2\cdots y_n$ and $q_1,\cdots,q_{n-1}\in Q$.

Let $w\in\sa$. We note that 
\begin{eqnarray*}
L(G')(w) &=&
\sum\{r_1\otimes r_2\otimes \cdots\otimes r_k | (\exists \alpha_1,\cdots,\alpha_{k-1})\\
&&  \hspace*{25mm} [S'\overset{r_1}{\Rightarrow}
\alpha_1\overset{r_2}{\Rightarrow}\alpha_2\overset{r_k}{\Rightarrow}\cdots\overset{r_{k-1}}
{\Rightarrow}\alpha_{k-1}\overset{r_k}{\Rightarrow}w]\}\\
AY^{-1}(w) &=& \sum\{A(wu)Y(u) | u\in\sa\} \\
&=& \sum\{(r_{11}\otimes r_{12}\otimes \cdots\otimes  r_{1l})\otimes (\delta(q_0,\sigma_1,q_1)\otimes \cdots \otimes \delta(q_{m-1},\sigma_m,q_f))| \\
&& \hspace*{10mm}  (\exists \alpha_1,\cdots,\alpha_{l-1}) (\exists \sigma_1,\cdots,\sigma_m \in \Sigma)(\exists q_0, q_1,\cdots,q_{m-1}\in Q) \\
&&  \hspace*{30mm} \mbox{such that\ } S\overset{r_{11}}{\Rightarrow} \alpha_1\overset{r_{12}}{\Rightarrow}\cdots\overset{r_{1l}}{\Rightarrow}w\sigma_1\cdots\sigma_m\}.
\end{eqnarray*}

 To show $L(G')(w)=AY^{-1}(w)$, it suffices to show that they have the same non-zero sum-terms in $S$. That is, each non-zero sum-term $r_1\otimes r_2\otimes \cdots\otimes r_k$  in the equality of $L(G')(w)$ also appears in the equality of $AY^{-1}(w)$ as a sum-term, and vise verse.

On one hand, suppose there exists a sequence $\alpha_1,\cdots,\alpha_{k-1}$ such that
$$S'\overset{r_1}{\Rightarrow} \alpha_1\overset{r_2}{\Rightarrow}\alpha_2\overset{r_k}{\Rightarrow}\cdots\overset{r_{k-1}}
{\Rightarrow}\alpha_{k-1}\overset{r_k}{\Rightarrow}w,$$
i.e., $r=r_1\otimes r_2\otimes \cdots\otimes r_k$ is a non-zero term in $L(G')(w)$. Notice that the semiring $S$ is assumed to be commutative and a production of type (ii-3) commutes with one of the type (ii-1) or (ii-2). The above sequence of productions can be rearranged so that all the productions of type (ii-3) precede those of type (ii-1) and (ii-2). Hence we may assume
that by type (ii-3) productions
\begin{eqnarray}\label{eq:23-1}
S'  &=& (q_0,S,q_f)\overset{r_1}{\Rightarrow}\alpha_1\overset{r_2} {\Rightarrow}\cdots\overset{r_m}{\Rightarrow}\alpha_m\\
\label{eq:23-2}                                           &=& (q_0,y_1,q_1)(q_1,y_2,q_2)\cdots(q_{n-1},y_n,q'),
\end{eqnarray}
and by type (ii-1) and (ii-2) productions
\begin{equation}\label{eq:24}
(q_0,y_1,q_1)(q_1,y_2,q_2)\cdots(q_{n-1},y_n,q')\overset{r_{m+1}}{\Rightarrow}\alpha_{m+1}\overset{r_{m+2}}{\Rightarrow}
\cdots\overset{r_k}{\Rightarrow}w.
\end{equation}

Since the induction (\ref{eq:24}) is by type (ii-1) and (ii-2), each $y_i$ is in $\Sigma$. Since every term in the induction (\ref{eq:23-2}) corresponds
to a production of $P$, it follows that there exists $\beta_1,\cdots,\beta_m$ such that
$S\overset{r_1}{\Rightarrow}\beta_1\overset{r_2}
{\Rightarrow}\cdots\overset{r_m}{\Rightarrow}\beta_m
=y_1y_2\cdots y_n$ is an induction in $G$. Furthermore, for each $i\leq n$, we have either $q_{i-1}=q_i=q_0$
or $\delta(q_{i-1},y_i,q_i)=r_t$ for some $m+1\leq t\leq k$. Let $j$ be the largest integer such that $q_j=q_0$
and let $w=y_1\cdots y_j$, $u=y_{j+1}\cdots y_n$.  Because $\delta(q,\sigma,q_0)=0$ for any $q\in Q$, it follows that $q_0=q_1=\cdots=q_j$ and $q_{i+1}\not= q_0$ for any $i\geq j$. Omitting the weight $1$ by using the type (ii-1) productions, i.e., $r_{m+1}=\cdots=r_{m+j}=1$, it follows
that the leaving terms, after rearranging, is
$$r_{m+j+1}=\delta(q_j,y_{j+1},q_{j+1}), \cdots, r_k=\delta(q_{n-1},y_n,q'=q_f)$$ and
$$r_{m+j+1}\otimes \cdots \otimes r_k=r_{m+1}\otimes \cdots\otimes  r_k.$$
This shows that the term $r_1\otimes r_2\otimes \cdots\otimes  r_k=(r_1\otimes\cdots\otimes r_m)\otimes (r_{m+1}\otimes\cdots\otimes r_k)=(r_1\otimes\cdots r_m)\otimes (r_{m+j+1}\otimes\cdots\otimes r_k)$ is also a term of $AY^{-1}(w)$.

On the other hand, assume that
\begin{equation}\label{eq:25}
S\overset{r_{11}}{\Rightarrow}
\alpha_1\overset{r_{12}}{\Rightarrow}\cdots\overset{r_{1l}}{\Rightarrow}wu
\end{equation}
in $G$ for $w=x_1\cdots x_t$ and $u=\sigma_1\cdots\sigma_m$, and there exists states $q_0,q_1,\cdots,q_m=q_f$ in $Q$ such that
\begin{equation}\label{eq:26}
r_{21}=\delta(q_0,\sigma_1,q_1), \cdots, r_{2m}=\delta(q_{m-1},\sigma_m,q_m)\ \mbox{and}\ r_2=r_{21}\otimes\cdots\otimes r_{2m}.
\end{equation}
Then $r=r_{11}\otimes \cdots\otimes  r_{1l}\otimes r_{21}\otimes \cdots\otimes  r_{2m}$ is a term in the sum $AY^{-1}(w)$. Since the induction (\ref{eq:25}) is in $G$, by type (ii-3) productions we have
\begin{eqnarray*}
S'  \hspace*{-3mm}&=& \hspace*{-3mm}(q_0,S,q_f)\overset{r_{11}}{\Rightarrow} \beta_1\overset{r_{12}}{\Rightarrow}\cdots\overset{r_{1l}}{\Rightarrow}  \\
 && \hspace{8mm} (q_0,x_1,q_0)\cdots (q_0,x_m,q_0)(q_0,\sigma,q_1)\cdots (q_{m-1},\sigma_m,q_f),
\end{eqnarray*}
where $q_i$ is chosen as in (\ref{eq:26}) for $i\geq 1$.

Applying type (ii-1) productions to $(q_0,x_i,q_0)$ and type (ii-2) productions to $(q_{i-1}$, $y_i,q_i)$ we see that
\begin{eqnarray*}
S' \hspace*{-5mm}&&\overset{r_{11}}{\Rightarrow}\beta_1\overset{r_{12}}{\Rightarrow}\cdots\overset{r_{1l}}{\Rightarrow} \\
&& (q_0,x_1,q_0)\cdots (q_0,x_m,q_0)(q_0,\sigma,q_1)\cdots (q_{m-1},\sigma_m,q_f)\overset{r_{21}}{\Rightarrow}\cdots
\overset{r_{2m}}{\Rightarrow}w.
\end{eqnarray*}

Hence, $r_{11}\otimes r_{12}\otimes \cdots\otimes  r_{1l}\otimes r_{21}\otimes \cdots\otimes  r_{2m}$ is a term in the sum of $L(G')(w)$.

Therefore, $L(G')(w)=AY^{-1}(w)$ for any $w\in\sa$. Therefore, $L(G')=AY^{-1}$ and $AY^{-1}$ is context-free.

Since $S$ is commutative, by the duality between left and right quotients, it follows that $X^{-1}A$ is also context-free once $X$ is regular and $A$ is context-free.

(iv) By the proof of statement (iii), we have the following simple observation: If ${\cal A}$ is a DWA in the proof of statement (iii), i.e., ${\cal A}$ is a classical deterministic finite automaton with a unique final state. Then the commutativity of $S$ is not necessary in the proof of
the above proposition. This is because, in this case, the weights in type (ii-1) and (ii-2) productions take values $1$, and a production of type (ii-3) commutes
with one of  type (ii-1) or (ii-2). In this case, if we let $Y=|{\cal A}|$, then $AY^{-1}$ is context-free.

For a general finite DWA ${\cal A}=(Q,\Sigma,\delta,q_0, F)$, we write ${\cal A}_q=(Q,\Sigma,\delta,q_0, \{q\})$ for a finite DWA with unique final state $q$ for any state $q$ in $Q$. If we let $Y_q=|{\cal A}_q|$, then $AY_q^{-1}$ is context-free by the above observation. By a simple calculation, we have $Y=|{\cal A}|=\sum_{q\in Q}Y_qF(q)$. Then, by Proposition \ref{pro:quotient-fps} (ii),  we have $$AY^{-1}=A(\sum_{q\in Q}Y_qF(q))^{-1}=\sum_{q\in Q}(AY_q^{-1})F(q).$$ Since $AY_q^{-1}$ is context-free for any $q\in Q$ and the family of weighted context-free language is closed under scalar product and finite sum, it follows that $AY^{-1}$ is context-free.

Similar to the proof of statement (ii), $X^{-1}A$ is context-free if $X$ is DWA-regular and $A$ is context-free.
This shows that the statement (iv) holds.
\end{proof}

\section{Proof of Proposition~\ref{pro:strong homo}}

\begin{proof}
Let us first show the following equality holds,
\begin{equation}
Y_g\rw_{incl} (\bigwedge_{q\in Q} \bigwedge_{\delta(p,\sigma)=q}f(p)\rw Fut_{{\cal A}}(q))=\sigma Y_g\rw_{incl} Y_f.
\end{equation}
Consider $D=\sigma (\bigwedge_{q\in Q}\bigwedge_{\delta(p,\sigma)=q}f(p)\rw Fut_{{\cal A}}(q))$.

If $u\not= \sigma v$ for any $v\in\sa$, then $D(u)=0$.

If $u=\sigma v$ for some $v\in \sa$, then
\begin{eqnarray*}
D(u) &=& \bigwedge_{q\in Q}\bigwedge_{\delta(p,\sigma)=q}f(p)\rw Fut_{{\cal A}}(q)(v)\\
 &=& \bigwedge_{q\in Q}\bigwedge_{\delta(p,\sigma)=q}f(p)\rw F(\ds(q,v)) \\
 &=& \bigwedge_{q\in Q} \bigwedge_{\delta(p,\sigma)=q}f(p)\rw F(\ds(q,\sigma v)) \\
 &=& \bigwedge_{p\in Q}f(p)\rw F(\ds(p,u)) \\
 &=& \bigwedge_{p\in Q}f(p)\rw Fut_{{\cal A}}(p)(u)=Y_f(u).
 \end{eqnarray*}

Noting that if $u$ does not have the form $\sigma v$ for any $v\in \sa$, then $\sigma Y_g(u)=0$. Thus,
\begin{eqnarray*}
&& Y_g\rw_{incl} (\bigwedge_{q\in Q}\bigwedge_{\delta(p,\sigma)=q}f(p)\rw Fut_{{\cal A}}(q))\\
 &=& \sigma Y_g\rw_{incl} \sigma(\bigwedge_{q\in Q}\bigwedge_{\delta(p,\sigma)=q}f(p)\rw Fut_{{\cal A}}(q)) \\
&=& \sigma Y_g\rw_{incl} D=\sigma Y_g\rw_{incl} Y_f.
\end{eqnarray*}
Next, let us prove the following equality
\begin{equation}\label{eq:49}
\sigma Y\rw_{incl} Y_f=\bigvee\{f\sigma\rw_{incl} g | Y_g=Y\}.
\end{equation}
On one hand,  suppose $c\leq f\sigma\rw_{incl} g$ for some $c\in L$ and $Y=Y_g$. Then we have $c\otimes f\sigma (q)\leq g(q)$ for any $q\in Q$. Therefore, $g(q)\rw Fut_{{\cal A}}(q)\leq c\otimes f\sigma (q)\rw Fut_{{\cal A}}(q)=c\rw(f\sigma(q)\rw Fut_{{\cal A}}(q))$ for any $q\in Q$.  Hence,
\begin{eqnarray*}
\bigwedge_{q\in Q} (g(q)\rw Fut_{{\cal A}}(q)) &\leq& \bigwedge_{q\in Q} (c \rw (f\sigma (q)\rw Fut_{{\cal A}}(q)) \\
&=& c\rw(\bigwedge_{q\in Q}f\sigma(q)\rw Fut_{{\cal A}}(q)).
\end{eqnarray*}
This further implies that
\begin{eqnarray*}
c &\leq& (\bigwedge_{q\in Q}g(q)\rw Fut_{{\cal A}}(q))\rw_{incl} (\bigwedge_{q\in Q}f\sigma (q)\rw Fut_{{\cal A}}(q)) \\
&=& Y_g\rw_{incl} (\bigwedge_{q\in Q}\bigwedge_{\delta(p,\sigma)=q}f(p)\rw Fut_{{\cal A}}(q)) \\
&=& \sigma Y_g\rw_{incl} Y_f.
\end{eqnarray*}

This shows that $f\sigma\rw_{incl} g\leq \sigma Y_g\rw_{incl} Y_f$.

On the other hand, let $X'_f(u)=f(\ds(q_0,u))$. By Proposition \ref{prop: Xf&Yf}, we know $X'_f\leq X_f$. Thus,
\begin{equation}
\sigma Y_g\rw_{incl} Y_f=X_f\sigma\rw_{incl} X_g\leq X'_f\sigma\rw_{incl} X_g.
\end{equation}

We next show
\begin{equation}\label{eq:51}
X'_f\sigma\rw_{incl} X_g\leq f\sigma\rw_{incl} g,
\end{equation}
for some weighted state $g$ (in fact, the largest weighted state $g$ such that $Y_g=Y$), where $g$ is defined by, $$g(q)=\bigvee\{X(u) | \ds(q_0,u)=q\}.$$
By Proposition \ref{pro:facor-state1}, $g$ satisfies $Y_g=Y$. Then
$$\sigma Y\rw_{incl} Y_f=\sigma Y_g\rw_{incl} Y_f\leq X'_f\sigma\rw_{incl} X_g\leq f\sigma\rw_{incl} g,$$
and then Eq.(\ref{eq:49}) holds.

The proof of Eq.(\ref{eq:51}) is as follows.

Note that
\begin{eqnarray*}
X'\sigma\rw_{incl} X_g &=& X'\sigma\rw_{incl} X \\
&=& \bigwedge_{u\in \sa}X'\sigma(u)\rw X(u) \\
&=& \bigwedge_{v\in \sa}X'(v)\rw X(v\sigma) \\
&\leq&  \bigwedge_{v\in \sa}f(\ds(q_0,v))\rw g(\ds(q_0,v\sigma)) \\
&=& \bigwedge_{q\in Q}f(q)\rw g(\delta(q,\sigma)).
\end{eqnarray*}
and
\begin{eqnarray*}
f\sigma\rw_{incl} g &=& \bigwedge_{q\in Q}f\sigma(q)\rw g(q) \\
& =& \bigwedge_{q\in Q}(\bigvee\{f(p) | \delta(p,\sigma)=q\})\rw  g(q) \\
& =& \bigwedge_{q\in Q}(\bigwedge_{\delta(p,\sigma)=q}f(p)\rw g(\delta(p,\sigma)) \\
&=& \bigwedge_{q\in Q}f(q)\rw g(\delta(q,\sigma)).
\end{eqnarray*}
We know Eq.(\ref{eq:51}) holds.

We next show that $\varphi$ also satisfies the following two conditions:
\begin{eqnarray}
\label{eq:JA_xy}
J_A(X,Y) &=& \bigvee\{J(f) | Y=Y_f\}, \\
\label{GA-phi}
G_A(\varphi(f)) &=& G(f).
\end{eqnarray}
For Eq.(\ref{eq:JA_xy}), we have  $J_A(X,Y)=X(\varepsilon)$, and $Y=Y_f$, then $J(f)=f(q_0)=f(\ds(q_0,\varepsilon))=X'(\varepsilon)\leq X(\varepsilon)$. If we let $X_f=X$, and thus $J(f)$
can take $X_f(\varepsilon)$, this shows that Eq.(\ref{eq:JA_xy}) holds.

For Eq.(\ref{GA-phi}), we have
\begin{eqnarray*}
G_A(\varphi(f)) = G_A(X_f,Y_f) &=& Y_f(\varepsilon)\\ &=&\bigwedge_{q\in Q}f(q)\rw Fut_{{\cal A}}(q)(\varepsilon) \\
 &=& \bigwedge_{q\in Q}f(q)\rw F(\ds(q,\varepsilon) \\
 &=& \bigwedge_{q\in Q}f(q)\rw F(q)\\
 &=& f\rw_{incl} F=G(f).
\end{eqnarray*}
Eq.(\ref{eq:49}), Eq.(\ref{eq:JA_xy}), and Eq.(\ref{GA-phi}) imply that $\varphi$ is a strong homomorphism from ${\cal A}_1$ onto ${\cal U}_A$, and then $|{\cal A}_1|=|{\cal U}_A|=A$.
\end{proof}

\end{document}